\newcommand{\bra}[1]{\langle #1|}
\newcommand{\bbra}[1]{\langle\!\langle #1|}
\newcommand{\ket}[1]{|#1\rangle}
\newcommand{\kett}[1]{|#1\rangle\!\rangle}
\newcommand{\ketbra}[2]{| #1 \rangle \langle #2 |}
\newcommand{\kettbbra}[2]{| #1 \rangle\!\rangle \langle\!\langle #2 |}
\DeclareMathOperator{\Tr}{Tr}
\newtheorem{thm}{Theorem}
\newtheorem{dfn}[thm]{Definition}
\newtheorem{cor}[thm]{Corollary}
\newtheorem{lem}[thm]{Lemma}
\newtheorem*{thm*}{Theorem}
\begin{document}
\title{Quantum superpositions of causal orders as an operational resource}
\author{M\'arcio M. Taddei}
\affiliation{Instituto de F\'isica, Universidade Federal do Rio de Janeiro, P. O. Box 68528, Rio de Janeiro, RJ 21941-972, Brazil}
\author{Ranieri V. Nery}
\affiliation{Instituto de F\'isica, Universidade Federal do Rio de Janeiro, P. O. Box 68528, Rio de Janeiro, RJ 21941-972, Brazil}
\author{Leandro Aolita}
\affiliation{Instituto de F\'isica, Universidade Federal do Rio de Janeiro, P. O. Box 68528, Rio de Janeiro, RJ 21941-972, Brazil}

\begin{abstract}
Causal nonseparability refers to processes where events take place in a coherent superposition of different causal orders. These may be the key resource for experimental violations of causal inequalities and have been recently identified as resources for concrete information-theoretic tasks. Here, we take a step forward by deriving a complete operational framework for causal nonseparability as a resource. Our first contribution is a formal definition of quantum control of causal orders, a stronger form of causal nonseparability (with the celebrated quantum switch as best-known example) where the causal orders of events for a target system are coherently controlled by a control system. We then build a resource theory -- for both generic causal nonseparability and quantum control of causal orders -- with a physically-motivated class of free operations, based on process-matrix concatenations. We present the framework explicitly in the mindset with a control register. However, our machinery is versatile, being applicable also to scenarios with a target register alone. Moreover, an important subclass of our operations not only is free with respect to causal nonseparability and quantum control of causal orders but also preserves the very causal structure of causal processes. Hence, our treatment contains, as a built-in feature, the basis of a resource theory of quantum causal networks too. As applications, first, we establish a sufficient condition for pure-process free convertibility. This imposes a hierarchy of quantum control of causal orders with the quantum switch at the top. Second, we prove that causal-nonseparability distillation exists, i.e. we show how to convert multiple copies of a process with arbitrarily little causal nonseparability into fewer copies of a quantum switch. Our findings reveal conceptually new, unexpected phenomena, with both fundamental and practical implications.
\end{abstract}


\maketitle

\section{Introduction}
The study of physical processes with events without a predefined, fixed causal order is ultimately motivated by general relativity, whereby the dynamical distribution of energy has a bearing on whether events are time- or space-like separated. In fact, it has been conjectured \cite{Hardy2005,Hardy2007,Hardy2009} that quantum gravity may require a theory where a dynamical causal order between events plays an important role. In this context, quantum-mechanical effects on causal orders cannot be disregarded. For instance, this is particularly relevant when one considers the spacetime warping caused by spatial quantum superpositions of a massive body \cite{Zych2017}.

On a more down-to-earth plane,  processes with events in indefinite causal orders have sparked a great deal of interest in quantum information and foundations \cite{Chiribella2012,Chiribella2013}. From a fundamental point of view, they constitute an exotic class of  quantum operations, adding to the extensive list of counterintuitive properties of quantum theory. This class does not fit the usual quantum-computing paradigm of quantum circuits with fixed gates, and more general frameworks have been developed to encompass it \cite{Chiribella2008,Chiribella2009,Oreshkov2012,Araujo2015,Oreshkov2016,MacLean2017}, such as, e.g., the so-called \emph{process matrices} \cite{Oreshkov2012,Araujo2015,Oreshkov2016}.
In general, a process is called \emph{causally nonseparable} if it cannot be decomposed as a classical (i.e. probabilistic) mixture of \emph{causal processes} \cite{Oreshkov2012,Araujo2015,Oreshkov2016} (i.e. processes with a fixed causal order). 
These processes are fundamentally important since they are suspected to be the key resource for potential experimental violations of causal inequalities \cite{Oreshkov2012,Branciard2016}. A notable subclass of causally nonseparable processes is the one displaying \emph{quantum control of causal orders}, where a quantum system (the control) coherently controls the causal order with which events for another system (the target) take place. 
The best known example thereof is the celebrated \textit{quantum switch} \cite{Chiribella2012,Chiribella2013,Araujo2014,Procopio2014,Guerin2016}. 
The latter is special because it represents the only form of causal nonseparability so far known to be physical \cite{Araujo2015,Araujo2017}.
In turn, from an applied viewpoint, it has been recently shown to be a useful resource for a number of interesting information-processing tasks \cite{Chiribella2012,Araujo2014,Guerin2016,Ebler2018}. Moreover, it has already been subject of experimental investigations \cite{Procopio2014,Rubino2016,Goswami2018,Wei2018}. Curiously, even though quantum control of causal orders is the rule-of-thumb terminology evoked to discuss the quantum switch, a precise formal definition of this notion is -- to our knowledge -- still missing.
  
Here, we study quantum superpositions of causal orders from a resource-theoretic perspective. Resource theories provide powerful frameworks for the formal treatment of a physical property as an operational resource, adequate for its characterization, quantification, and manipulation \cite{Brandao2015, Coecke2016}. Their central component is a set of transformations -- called the \emph{free operations} of the theory -- that are unable to create the resource in question. We build a physically-meaningful class of free operations of both causal nonseparability and quantum control of causal orders. This requires a satisfactory rigorous definition of the latter notion, which we provide on the way.  
The proposed free operations are reminiscent in spirit to the free wirings of other types of quantum resources \cite{Gallego2012,Gallego2015,Gallego2017,Amaral2018}.
More precisely, they are given by concatenations of the input process with causally separable processes of two elementary kinds. Processes are mathematically represented by process matrices \cite{Oreshkov2012,Araujo2015,Oreshkov2016} and process concatenations by the so-called \emph{link product} \cite{Chiribella2009}.
Both elementary types of process concatenations are remarkably simple and, yet, they give rise to highly non-trivial effects.
First, they establish an ordering for a conceptually-interesting and experimentally-relevant subset of processes to which we refer as \emph{generalized quantum switches}. The ordering is mathematically captured  by a simple majorization condition sufficient for a pure process to be freely obtained from another.
As a corollary of the latter, it follows that any generalized quantum switch can be freely obtained from the quantum switch. 
This yields a hierarchy of quantum control of causal orders where the quantum switch sits at the top, thus giving it
 the status of \emph{basic unit} of this exclusive form of causal nonseparability. Second, we prove that, remarkably, it is possible to concentrate the causal nonseparability spread among multiple copies of non-maximally causally nonseparable processes (even those arbitrarily close to the causally separable ones) into a quantum switch. Hence, \emph{distillation of quantum control of causal orders} exists.
Our proof is constructive, with an explicit distillation protocol, so that a lower bound to the optimal concentration rate is obtained. 
Finally, we emphasize that our machinery is both highly versatile and notably unifying. On the one hand, it is explicitly formulated in the mindset with a control register but is also readily applicable to scenarios with a target system alone. On the other hand, one of the two elementary types of free operations mentioned leaves invariant not only both the sets of processes without causal nonseparability or quantum control of causal orders but also that of causal processes, for all underlying causal structure. Thus, our framework  also includes, as a built-in feature, the basis of an eventual resource theory of quantum causal networks.

The paper is organized as follows. In Sec. \ref{sec:prelim}, we introduce  preliminary concepts and notation. In Sec. \ref{sec:def_qcco}, we propose a formal definition of quantum control of causal orders. In Sec. \ref{sec:free_ops}, we introduce our operational framework with the free operations. In Sec. \ref{sec:conversion}, we study single-shot conversions, a hierarchy, and units of quantum control of causal orders. In Sec. \ref{sec:distillation}, we show that distillation is possible. Finally, Sec. \ref{sec:conclusions} is devoted to our conclusions.

\section{Preliminaries}
\label{sec:prelim}
We consider physical processes in the scenario outlined in Fig.~\ref{fig:basic}a.
A convenient tool to describe such processes is the process-matrix formalism \cite{Oreshkov2012,Araujo2015,Oreshkov2016}, which extends the quantum combs formalism \cite{Chiribella2009}, both in turn based on the Choi-Jamio\l kowski (CJ) isomorphism \cite{Choi1975,Bengtsson2006}. 
For any Hilbert space $\mathsf H$, we denote by $\mathsf B(\mathsf H)$ the space of bounded-trace, linear operators on $\mathsf H$. The CJ isomorphism allows one \cite{Chiribella2009,Araujo2015} to represent any completely-positive trace-preserving linear map $\mathcal{E}:\mathsf B(\mathsf H_I)\rightarrow\mathsf B(\mathsf H_O)$ from arbitrary input to output spaces $\mathsf B(\mathsf H_I)$ and $\mathsf B(\mathsf H_O)$, respectively,
as the \emph{CJ state}
\begin{equation}
E=  \left(\mathcal I \otimes\mathcal{E}\right) \left(\kett{\mathbb1}\bbra{\mathbb1}\right) \ .
\label{eq:defCJ}
\end{equation}
%
\begin{figure}[t!]
\centering
\raisebox{-0.5\height}{\includegraphics[height=.17\textheight]{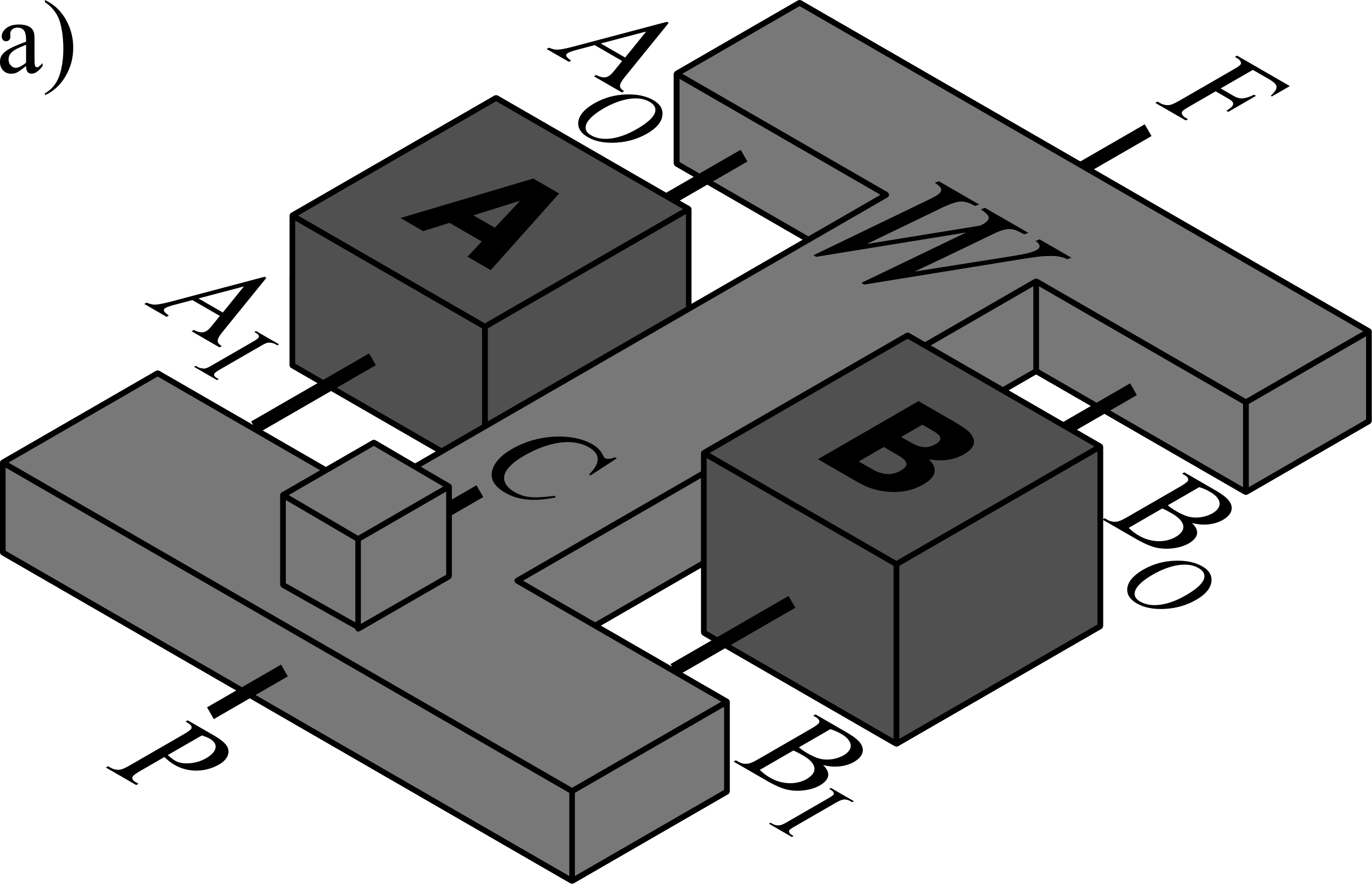}} \hspace{.3cm}
\raisebox{-0.5\height}{\includegraphics[height=.2\textheight]{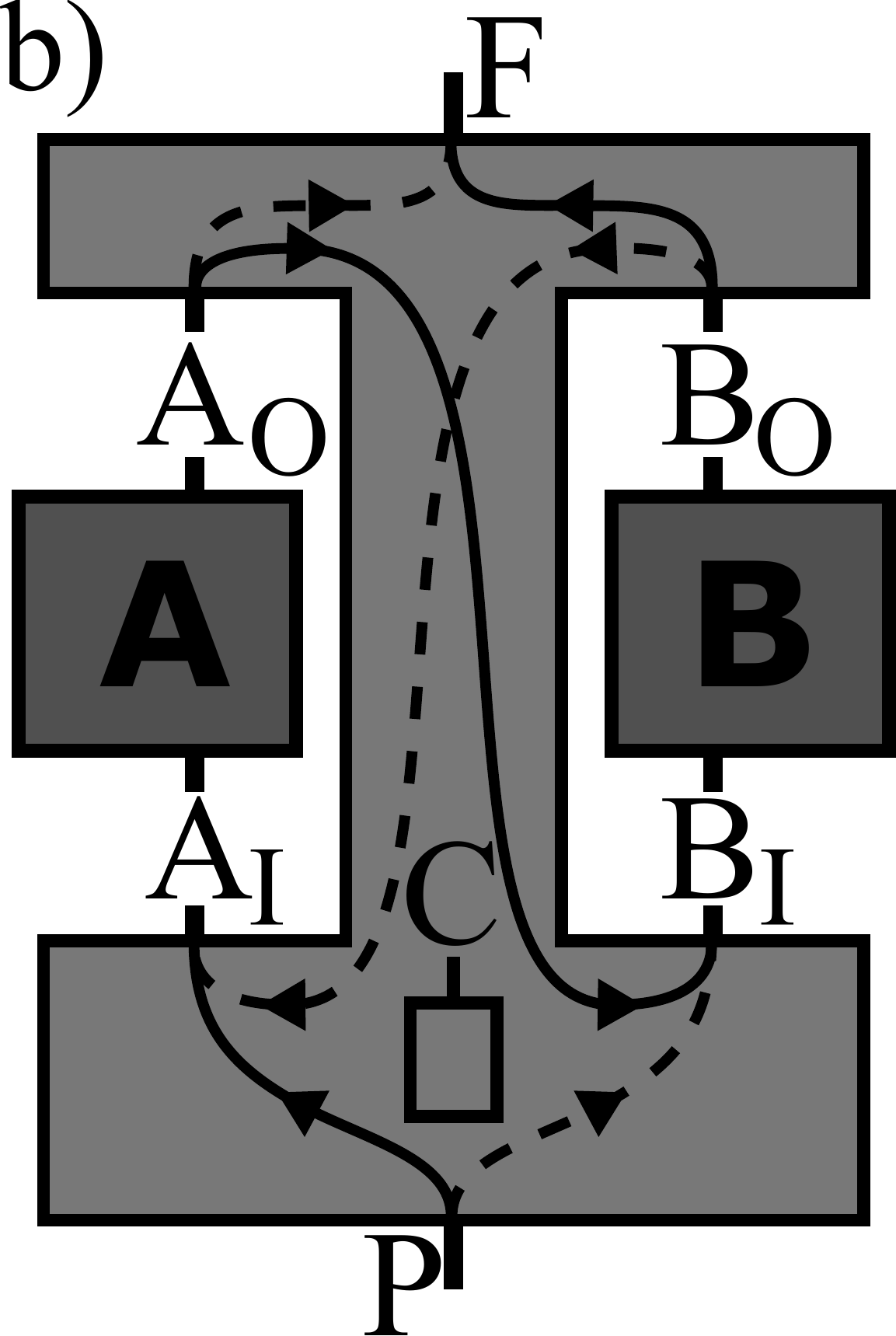}}
\caption{ Schematics of a process. a) Two users, Alice and Bob, in labs $A$ and $B$, respectively, receive a \emph{target qudit} as local input and subsequently send out an equivalent system as output after some operation. We label Alice's input (output) $A_I$ ($A_O$), and Bob's $B_I$ ($B_O$).
The events inside each lab occur in a definite order: $A_I$ ($B_I$) happens before $A_O$ ($B_O$). However, no causal order between $A$ and $B$ is assumed. For practical purposes, but without loss of generality, we do nevertheless assume a past and a future to both $A$ and $B$. These are realized by labs $P$ and $F$, whose only function is to input into and output from the process the initial and final  states of the target qudit, respectively. Finally, there is yet another lab, $C$, operated by a third user, Charlie, who gets as local input a \emph{control qubit}. This can control the causal order between  $A$ and $B$. 
The process, denoted by $W$, then specifies the entire system history -- preparations, evolutions, measurements, etc. -- outside the labs (light gray).
The operations inside each lab, in turn, are described by \emph{instruments} (dark-gray).
 b) Pictorial representation of processes with definite causal orders $A\to B$ (solid) and $B\to A$ (dashed). Coherent superpositions of the latter give \emph{causally nonseparable} processes. If, in addition, such superpositions involve entanglement with the control qubit, the process can feature \emph{quantum control of causal orders} (see Fig. \ref{fig:geometry} for precise definition).}
\label{fig:basic}
\end{figure}
%
Here $\kett{\mathbb1}:=\sum_{j}\ket j\otimes\ket j$ is a (non-normalized) maximally entangled state on $\mathsf H_I\otimes\mathsf H_{I'}$, where $\mathsf H_{I'}$ is a space isomorphic to (i.e. a copy of) $\mathsf H_I$, with $\{\ket j\}_j$ an orthonormal basis of $\mathsf H_I$. 
In turn, $\mathcal I:\mathsf B(\mathsf H_{I'})\rightarrow\mathsf B(\mathsf H_{I'})$ is the identity map on the copy space.
Complete-positivity of $\mathcal{E}$ implies, by virtue of Choi's theorem \cite{Choi1975}, that $E$ is positive semi-definite. 
Thus, $E$ is technically equivalent to a (non-normalized) state on the extended space $\mathsf B(\mathsf H_{I'}\otimes\mathsf H_O)$. 
Whenever there is no risk of ambiguity we omit (in a slight abuse of notation) the apostrophe that distinguishes copy from system spaces. For instance, we sometimes write $E\in\mathsf B(\mathsf H_{I}\otimes\mathsf H_O)$ instead of $E\in\mathsf B(\mathsf H_{I'}\otimes\mathsf H_O)$. In addition, for (non-normalized) maximally entangled states between a system and a copy of it we directly omit the copy subindex. That is, we use the short-hand notation $\kett{\mathbb1}_{I}$ to denote $\kett{\mathbb1}_{I I'}$.

In turn, the composition $\mathcal{D}\circ\mathcal{E}$ of $\mathcal{E}$ with another map $\mathcal{D}:\mathsf B(\mathsf H_{O})\rightarrow\mathsf B(\mathsf{H}_{\tilde{O}})$
 is given in the CJ representation by the \emph{link product}  \cite{Chiribella2009}, denoted here by ``$*$". More precisely, if $D\in\mathsf B(\mathsf H_{O}\otimes\mathsf{H}_{\tilde{O}})$ is the CJ state of $\mathcal{D}$, then the CJ state of $\mathcal{D}\circ\mathcal{E}$ is $D*E\in\mathsf B(\mathsf H_{I}\otimes\mathsf{H}_{\tilde{O}})$, defined by 
\begin{equation}
\label{eq:def_link_prod}
D*E \coloneqq \Tr_{O}\! \left[\left( 				E							\!\otimes\!	\mathcal I_{\tilde{O}}	\right)
															\left( \mathcal I_{I}		\!\otimes\!			D^{T_{O}}				\right) \right] \ .
\end{equation}
Here, $\mathcal I_{\tilde{O}}$ and $\mathcal I_{I}$ are respectively the identity maps on $\mathsf{H}_{\tilde{O}}$ and  $\mathsf H_{I}$, $\Tr_{O}$ denotes the partial trace over $\mathsf H_{O}$, and $T_{O}$ the partial transpose over $\mathsf H_{O}$ (in the chosen basis $\{\ket j\}_j$).

Process matrices generalize the notion of CJ states to encapsulate state preparations, operations, and measurements all in a unified description. 
In our setting, Fig. \ref{fig:basic}a), they can be defined by all CJ states that, upon composition with any arbitrary instruments at $A$ and $B$ (including instruments exploiting entangled ancillas between the labs), yield a CJ state on the remaining labs that describes a valid completely-positive (CP) trace-preserving (TP) channel from $\mathsf B(\mathsf H_{P})$ to $\mathsf B(\mathsf H_{F}\otimes\mathsf H_{C})$ \cite{Oreshkov2012,Araujo2015,Oreshkov2016}. This corresponds to CJ states $W\in\mathsf B(\mathsf H_{P}\otimes\mathsf H_{A_O}\otimes\mathsf H_{B_O}\otimes\mathsf H_{F}\otimes\mathsf H_{A_I}\otimes\mathsf H_{B_I}\otimes\mathsf H_{C})$ that must be positive semi-definite and satisfy a few normalization constraints (given in App. \ref{app:norm_cond_W}).
If, in addition, a process $W$ has rank 1, it decomposes as $W= \kett{\mathbb w}\bbra{\mathbb w}$, with $\kett{\mathbb w}\in\mathsf H_{P}\otimes\mathsf H_{A_O}\otimes\mathsf H_{B_O}\otimes\mathsf H_{F}\otimes\mathsf H_{A_I}\otimes\mathsf H_{B_I}\otimes\mathsf H_{C}$ the corresponding \emph{pure CJ state vector}. In that case we refer to $W$ as a \emph{pure process} \cite{Araujo2015} and denote it simply by $\kett{\mathbb w}$. We denote the set of generic process matrices  for the scenario in question $\mathsf P\subset\mathsf B(\mathsf H_{P}\otimes\mathsf H_{A_O}\otimes\mathsf H_{B_O}\otimes\mathsf H_{F}\otimes\mathsf H_{A_I}\otimes\mathsf H_{B_I}\otimes\mathsf H_{C})$.

Two basic examples are shown in Fig. \ref{fig:basic} b). The first one (solid line) represents processes of the type
\begin{equation}
\ket0_C\kett{\mathbb 1_0}\coloneqq\ket0_C\kett{\mathbb1}_{PA_I}\kett{\mathbb1}_{A_OB_I}\kett{\mathbb1}_{B_OF} \ ,
\label{eq:def_id0}
\end{equation}
where the subindices in the right-hand side indicate the space supporting each ket. 
The target-system process $\kett{\mathbb 1_0}$ defines a quantum causal model \cite{Costa2016a,Allen2017} with causal structure $P\to A\to B\to F$.
More precisely, the composite-system process in Eq. \eqref{eq:def_id0} describes the situation where Charlie receives the control qubit state $\ket{0}_C$ and the target qudit is directed from $P$ to Alice, who (after applying her instrument) in turn sends it to Bob, who (after his intervention) finally forwards it towards the final target-system output at $F$. 
The second process (dashed line) defines a quantum causal model with causal structure $P\to B\to A\to F$ for the target system and gives lab $C$ a different local input: 
\begin{equation}
\ket1_C\kett{\mathbb 1_1}\coloneqq\ket1_C\kett{\mathbb1}_{PB_I}\kett{\mathbb1}_{B_OA_I}\kett{\mathbb1}_{A_OF} \ .
\label{eq:def_id1}
\end{equation}
That is, Charlie now receives the orthogonal state $\ket{1}_C$ while the target now goes from $P$ to $B$, then to $A$, and finally to $F$. 

Clearly, $\kett{\mathbb 1_0}$ and $\kett{\mathbb 1_1}$ display fixed causal orders between $A$ and $B$: $A\to B$ and $B\to A$, respectively. They are thus particular instances of \emph{causal processes}. The causal relations between the different labs are captured by the signaling constraints of the process \cite{Araujo2015}. Namely, a process  $W_{A\to B}$ is compatible with a causal order $A\to B$ iff it is \emph{nonsignaling} from $B$ to $A$, i.e. if it cannot be used to send information from Bob's output $B_{O}$ to Alice's input $A_{I}$ (see App. \ref{app:norm_cond_W} for the explicit  definition); and analogously for  $W_{B\to A}$.
In addition, we demand that processes are compatible with the orders $P\to (A,B)$ and $(A,B)\to F$. 
 That is, $P$ and $F$ are respectively taken as the global past and future of the target system (see App. \ref{app:norm_cond_W}).
In turn, a process is said to be \emph{causally separable} \cite{Oreshkov2012,Araujo2015,Oreshkov2016} if it can be decomposed as a probabilistic mixture of causal processes
\begin{equation}
W_{\rm cs} = p\,W_{A\to B}+(1-p)\,W_{B\to A}, 
\label{eq:caussep}
\end{equation}
with $0\leq p\leq 1$. 
We denote by $\mathsf{CS}\subset\mathsf P$ the set of all causally separable process for our scenario. Any $W\in\mathsf P\setminus\mathsf{CS}$ is called \emph{causally nonseparable}.

Causal nonseparability is known to appear in processes that can violate causal inequalities \cite{Oreshkov2012, FAB16,Branciard2016}. These processes involve coherent superpositions of causal orders on the target system alone, i.e. with $C$ playing no role in the causal nonseparability. However, it is not clear whether such processes admit a physical realization \cite{Araujo2017}.
A conceptually different form of causal nonseparability, called \emph{quantum control of causal orders}, takes place when the superposition involves entanglement with $C$. The \emph{quantum switch} \cite{Chiribella2012,Chiribella2013}
\begin{equation}
\kett{\mathbb{w}_{\rm qs}}\coloneqq\frac{\ket0_C\kett{\mathbb 1_0}+\ket1_C\kett{\mathbb 1_1}}{\sqrt{2}}\
\label{eq:def_qswitch}
\end{equation}
is the paradigmatic example thereof. There, $C$ coherently controls the causal order in which the target qudit passes through $A$ and $B$. 
Quantum control of causal orders constitutes a stronger form of causal nonseparbility in the sense of requiring not only coherence but also entanglement. Interestingly, in addition, it admits clear physical interpretations in terms of interferometers  \cite{Procopio2014,Rubino2016,Goswami2018,Wei2018}. Somewhat surprisingly though, even though the terminology quantum control of causal orders appears quite frequently in the literature, a precise formal definition of this notion has -- to our knowledge -- not been provided yet.
We propose one next.
 

\section{Definition of quantum control of causal orders}
\label{sec:def_qcco}
While generic causal nonseparability is a rigorously defined concept, the specific notion of quantum control of causal orders has so far been -- surprisingly -- only colloquially introduced. 
Here, we need a precise mathematical definition of this notion. 
We begin by formalizing the notion of entanglement for processes. This is done in the obvious way, in analogy to entanglement for states \cite{Horodecki2009}. First, for a tripartite process $W_{ABC}\in\mathsf B(\mathsf H_C\otimes\mathsf H_{A_O}\otimes\mathsf H_{B_O}\otimes\mathsf H_{A_I}\otimes\mathsf H_{B_I})$ (without past and future labs), we define $W_{ABC}$ to be \emph{separable between control and target} if it belongs to the convex hull of \emph{product processes} in that bipartition, i.e. if 
\begin{equation}
W_{ABC}= \sum_{\mu} q_{j}\, \varrho^{(j)}_C\otimes W^{(j)}_{AB}, 
\label{eq:def_sep}
\end{equation}
with $\{q_{j}\}_{j}$ an arbitrary probability distribution over $\mu$, $\varrho^{(j)}_C\in\mathsf B(\mathsf H_C)$ an arbitrary state of the control, and $W^{(j)}_{AB}\in\mathsf B(\mathsf H_{A_O}\otimes\mathsf H_{B_O}\otimes\mathsf H_{A_I}\otimes\mathsf H_{B_I})$ an arbitrary process (causally separable or not) for Alice and Bob's labs alone. 
Then, we define a five-partite process $W\in\mathsf P$ (with past and future labs) to be \emph{separable between the control and the indefinite labs} if its reduced process over $A$, $B$, and $C$, given by its partial trace $\Tr_{PF}[W]$ over $P$ and $F$, is separable between control and target. We denote by $\mathsf{S}\subset\mathsf P$ the set of all processes separable between the control and the indefinite labs. In turn, any $W\in\mathsf P\setminus\mathsf{S}$ is \emph{entangled between the control and the indefinite labs}. What is more, here we refer for short to  separability or entanglement between the control and the indefinite labs simply as \emph{separability or entanglement}, respectively. 

The reason why our definition of entanglement focuses on the reductions over $A$, $B$ and $C$ is to isolate  the entanglement between the control and exclusively the target labs that can admit indefinite causal orders. Recall that the past and future labs have a fixed causal order. In fact, there exist processes in $\mathsf P$ that are entangled over $B(\mathsf H_C\otimes\mathsf H_{P}\otimes\mathsf H_{F})$ but separable over $\mathsf B(\mathsf H_C\otimes\mathsf H_{A_O}\otimes\mathsf H_{B_O}\otimes\mathsf H_{A_I}\otimes\mathsf H_{B_I})$. Such processes clearly cannot contain quantum control of causal orders. Hence, we exclude them as entangled, for if we did not Def. \ref{def:QCCO} below would assign them quantum control of causal orders. Moreover, it is often the case that the target is initialized in a fixed state and subject to a fixed instrument (e.g., traced out) at the end, being therefore readily given by tripartite processes on $\mathsf B(\mathsf H_C\otimes\mathsf H_{A_O}\otimes\mathsf H_{B_O}\otimes\mathsf H_{A_I}\otimes\mathsf H_{B_I})$ \cite{Araujo2014,Araujo2015,Procopio2014,Guerin2016,Rubino2016,Goswami2018,Wei2018}. Our definition of entanglement directly applies there too (because there are no target labs other than the indefinite ones). Still, entanglement turns out to be necessary but not sufficient for quantum control of causal orders. 

Consider for instance the process $\kett{\mathbb{w}_{\rm ent}}=\left(\ket0_C\kett{\mathbb 1_0}+\ket1_C\kett{\mathbb u_{AB}}\right)/\sqrt{2}$, where $\kett{\mathbb u_{AB}}\coloneqq\kett{\mathbb 1}_{PA_I}\kett{\mathbb u_{AB}}_{A_OB_I}\kett{\mathbb 1}_{B_OF}$ is a causal process analogous to $\kett{\mathbb 1_0}$ but with an arbitrary unitary gate $\mathbb u_{AB}\neq\mathbb1$ from $A$ to $B$. 
This can be physically implemented by a quantum circuit with definite causal order $A\to B$ and controlled unitary gates.
Process $\kett{\mathbb{w}_{\rm ent}}$ is pure and entangled, thus featuring quantum control of unitary gates between $A$ and $B$. Nevertheless, since both $\kett{\mathbb 1_0}$ and $\kett{{\mathbb u_{AB}}}$ have causal order $A\to B$, no control of causal orders takes place. In fact, $\kett{\mathbb{w}_{\rm ent}}$ is itself a causal process. The following is a satisfactory definition that rules out such cases.
\begin{dfn}[Quantum control of causal orders]
A process $W\in\mathsf P$ has quantum control of causal orders, or, equivalently, is \emph{quantum-control causally ordered}, if it is outside the convex hull $\mathrm{Conv}\left(\mathsf{CS}\cup\mathsf{S}\right)$ of the sets $\mathsf{CS}$ and $\mathsf{S}$ of causally-separable and separable processes, respectively.
\label{def:QCCO}
\end{dfn}
\noindent In turn, any $W\in\mathrm{Conv}\left(\mathsf{CS}\cup\mathsf{S}\right)$ is not  quantum-control causally ordered. We denote by $\mathsf{NQC}\coloneqq\mathrm{Conv}\left(\mathsf{CS}\cup\mathsf{S}\right)$ the set with no quantum control of causal orders. See Fig. \ref{fig:geometry}.
\begin{figure}[t]
\includegraphics[width=.8\linewidth]{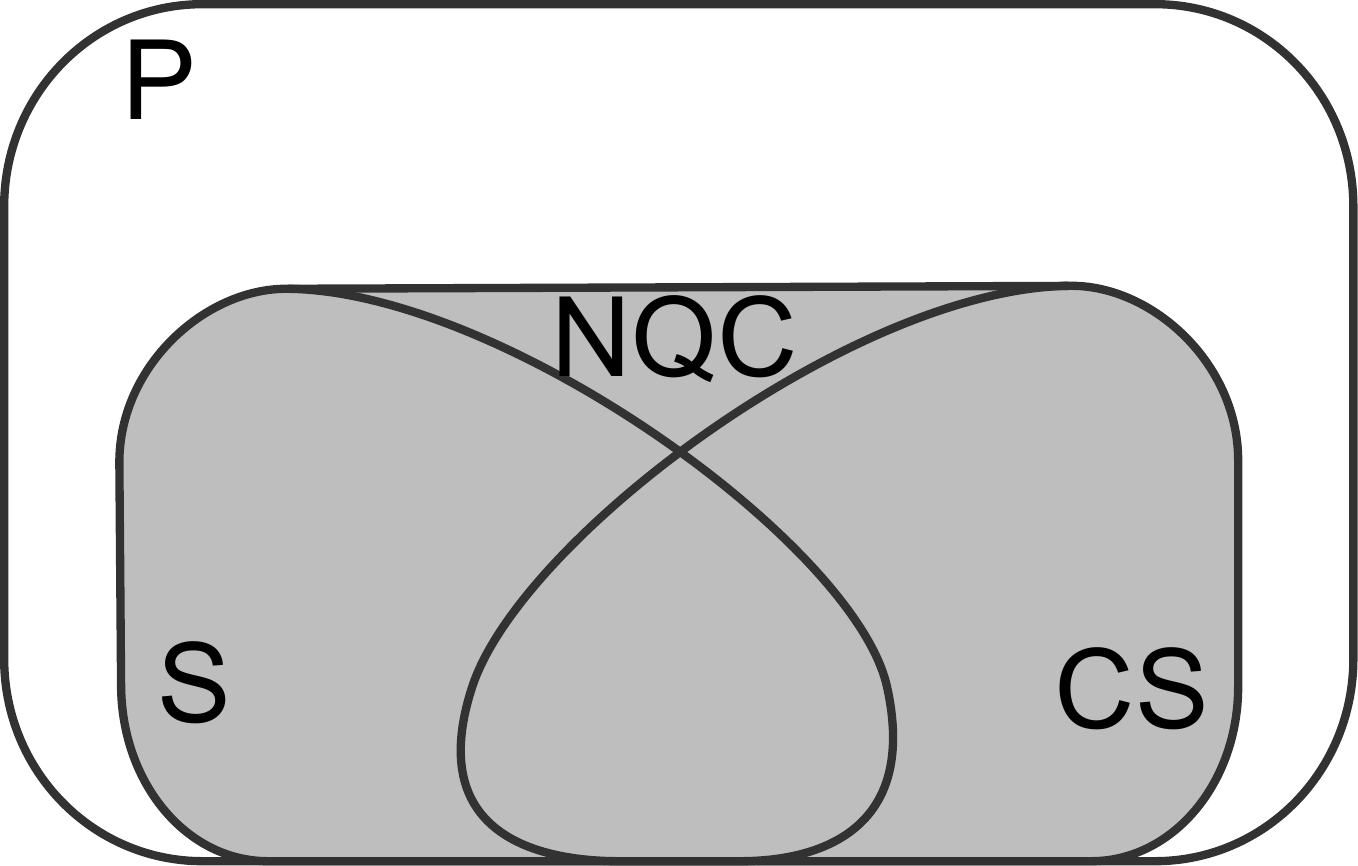} 
\caption{Pictorial representation of the internal geometry of the set $\mathsf{P}$ of all processes. $\mathsf{CS}$ and $\mathsf{S}$ are the subsets of causally-separable and separable processes, respectively. The processes outside $\mathsf{CS}$ are causally nonseparable, whereas those outside $\mathsf{S}$ are entangled between control and target.
The convex hull of $\mathsf{CS}$ and $\mathsf{S}$ gives the set $\mathsf{NQC}$ (gray) with no quantum control of causal orders. This includes processes (given by convex combinations of elements in  $\mathsf{CS}$ and $\mathsf{S}$) that are  causally nonseparable and entangled. We define the processes outside $\mathsf{NQC}$ to have quantum control of causal orders.
}
\label{fig:geometry}%
\end{figure}

Def. \ref{def:QCCO} excludes the convex hull of $\mathsf{CS}$ and $\mathsf{S}$ (instead of just their union) because convex mixing describes a purely classical operation. In other words, any process that can be operationally generated by probabilistically choosing one out of two resourceless processes must also be resourceless. (Otherwise, probabilistically choosing would not be a free operation of quantum control of causal orders.) This is reminiscent of the definition of genuinely multipartite entanglement, where multi-partite states entangled in each and all of the system bipartitions but within the convex hull of the bi-separable states are also excluded as genuinely multipartite entangled (see, e.g., Refs. \cite{Jungnitsch2011,Aolita2015}). 

Notable examples of $\mathsf{P}\setminus\mathsf{NQC}$ are all pure processes
\begin{align}          
\kett{\mathbb w} =
 \sqrt{p_0}\,\ket{\Phi_0}_C\kett{\mathbb u_0}+ \sqrt{p_1}\,\ket{\Phi_1}_C\kett{\mathbb u_1}\ ,
\label{eq:generalizedQS}
\end{align}
with $\{\ket{\Phi_0},\ket{\Phi_1}\}$ any orthonormal basis of $\mathsf H_{C}$, $\boldsymbol p\coloneqq\{p_0,p_1\}$ any binary probability distribution, and 
$\kett{\mathbb u_0}\coloneqq\kett{\mathbb u_{PA}}_{PA_I}\kett{\mathbb u_{AB}}_{A_OB_I}\kett{\mathbb u_{BF}}_{B_OF}$ and $\kett{\mathbb u_1}\coloneqq\kett{\mathbb u_{PB}}_{PB_I}\kett{\mathbb u_{BA}}_{B_OA_I}\kett{\mathbb u_{AF}}_{A_OF}$ causal processes analogous to $\kett{\mathbb 1_0}$ and $\kett{\mathbb 1_1}$, respectively, but with arbitrary unitary gates $\mathbb u_{PA}$, $\mathbb u_{AB}$, $\mathbb u_{BF}$, $\mathbb u_{PB}$, $\mathbb u_{BA}$, and $\mathbb u_{AF}$ instead of $\mathbb1$. That is, $\kett{\mathbb u_0}$ and $\kett{\mathbb u_1}$ have opposite definite causal orders, similarly to $\kett{\mathbb 1_0}$ and $\kett{\mathbb 1_1}$, but with channels other than the identity. These processes capture the most pristine form of causal nonseparability. In fact, for $p_0=\frac12=p_1$, they can be physically realized by applying on $\kett{\mathbb{w}_{\rm qs}}$ local (i.e. single-lab) unitary transformations on $C$ and non-local (i.e. multi-lab) controlled unitary gates on the target controlled by $C$. A particular interesting subset of the processes in Eq. \eqref{eq:generalizedQS} is that where the six unitary channels are not arbitrary but satisfy
the following constraints
\begin{equation}
\mathbb u_{PA}^\dagger\mathbb u_{BA}\mathbb u_{BF}^\dagger=\mathbb1=\mathbb u_{PB}^\dagger\mathbb u_{AB}\mathbb u_{AF}^\dagger.
\label{eq:constraints}
\end{equation}
Remarkably, this condition turns out to  characterize, for $p_0=\frac12=p_1$, the subset of processes 
 that are local-unitary equivalent to $\kett{\mathbb{w}_{\rm qs}}$  (see App. \ref{app:proof_conv} for details). We refer to all processes (for arbitary $\boldsymbol p$) satisfying both Eqs. \eqref{eq:generalizedQS} and \eqref{eq:constraints} as \emph{generalized quantum switches}. These are experimentally friendlier than the general processes in Eq. \eqref{eq:generalizedQS} with unconstrained unitaries and will be crucial in Secs. \ref{sec:conversion} and \ref{sec:distillation}.

\begin{figure*}
\includegraphics[height=.29\textheight]{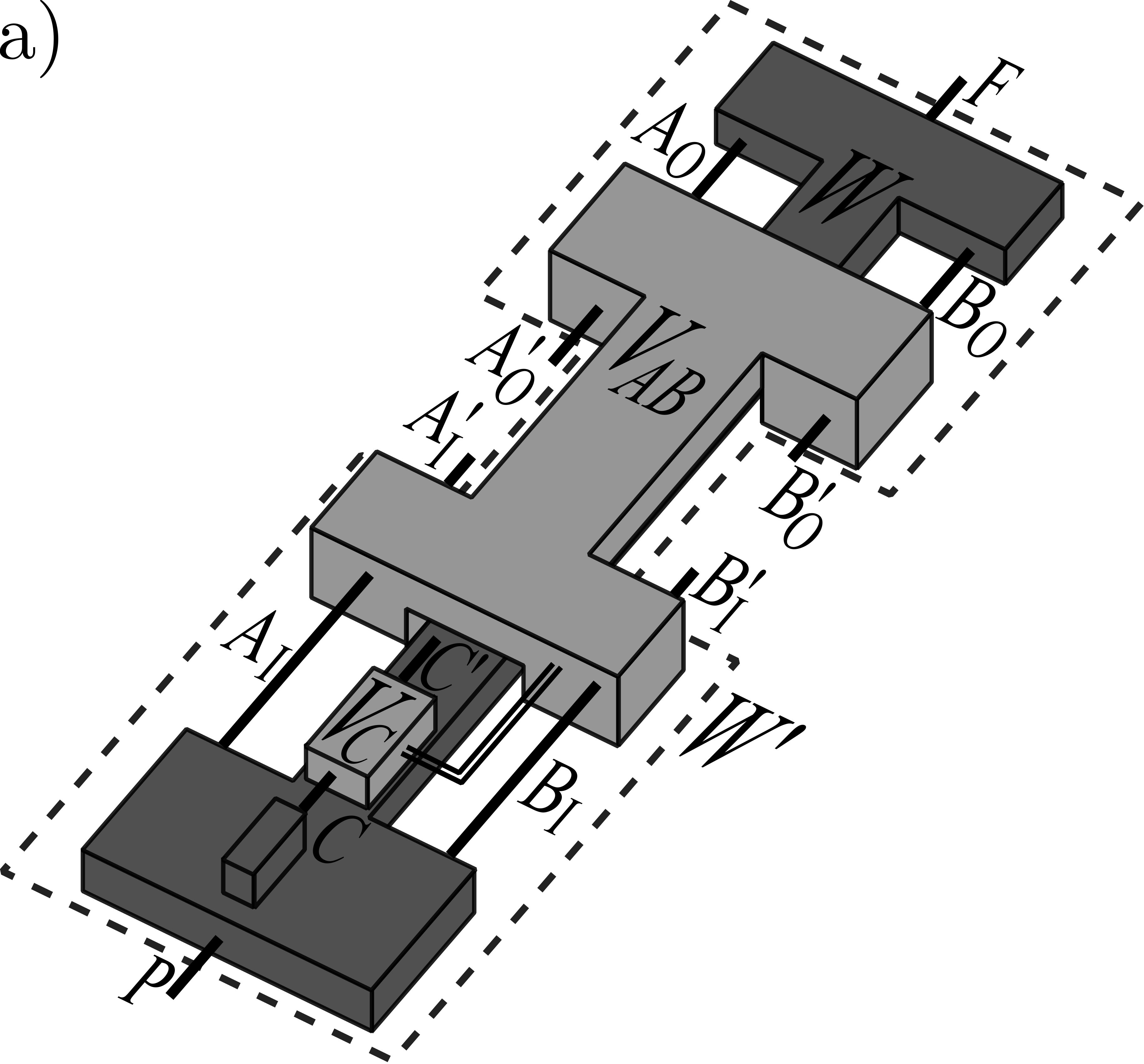} \hspace{5mm} %
\includegraphics[height=.29\textheight]{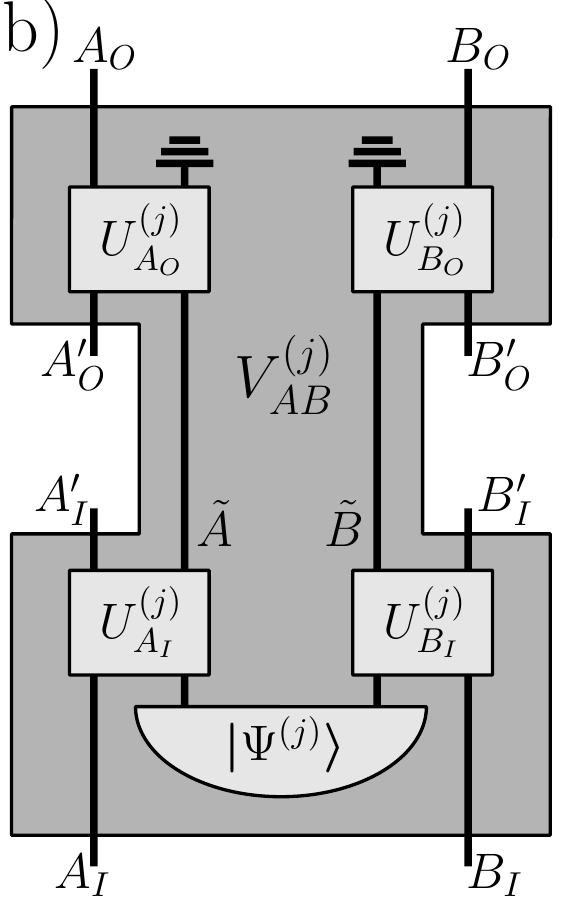} \hspace{3mm}
\includegraphics[height=.29\textheight]{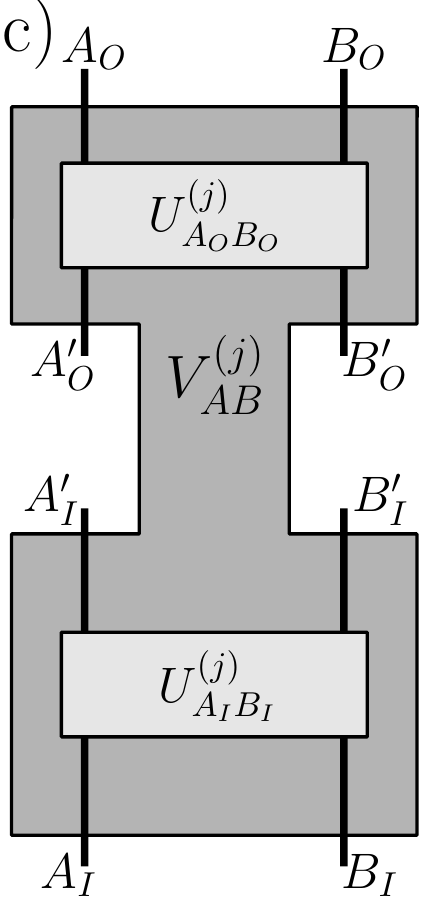}
\caption{Schematics of the operational framework of causal nonseparability and quantum control of causal orders. a) An initial process $W$ (dark gray) is concatenated -- i.e. link-product multiplied -- with the average process $V$ (light gray) of an instrument at Alice, Bob and Charlie's labs, whereas the past and future labs are untouched. The final process is thus $\mathcal{V}(W)= V*W$ (dashed outline). This has the same configuration of labs as the initial process: $P$, $A'$, $B'$, $C'$, and $F$, where $A'$, $B'$, and $C'$ have the same structure of input and output systems as the initial process' labs $A$, $B$, and $C$. 
The instrument's inputs are $A_I$, $A'_O$, $B_I$, $B'_O$, and $C$; and its outputs are $A'_I$, $A_O$, $B'_I$, $B_O$, and $C'$.
All the correlations in the control-versus-target bipartition that $V$ produces are due to 1-way classical communication (double line) of the local instrument outcomes from the control lab of Charlie to the target labs of Alice and Bob. That is, $V$ is separable in the bipartition and, therefore, creates no entanglement between control and target systems. Furthermore, for each outcome $j$ at $C$, the outcome process $V^{(j)}_{AB}$ on the target is designed to contain no causal nonseparability either. 
 For this reason, whenever the initial process is causally separable, so is the final one. 
 All this, together with linearity of the link product, implies also that whenever the initial process is not quantum-control causally ordered, neither is the final one.
 b) and c) Structure of the instruments on the target register. Each process $V^{(j)}_{AB}$ has six labs. Four of them correspond to the inputs $A_I$ and $B_I$ and outputs $A_O$ and $B_O$ of the initial process. While the other two correspond to Alice and Bob's final labs, $A'$ and $B'$, both equipped with input and output systems ($A'_I$ and $B'_I$ and $A'_O$ and $B'_O$, respectively). We consider two elementary types of instruments. 
b) The first one gives rise to the class of local operations and ancillary entanglement (LOAE). There, each $V^{(j)}_{AB}$ describes a local (in the $A|B$ bipartition) unitary evolution  on the instrument inputs $A_I$, $A'_O$, $B_I$, and $B'_O$ together with (arbitrary-dimensional) ancillary registers $\tilde A$ and $\tilde B$ in a (possibly entangled) state $\ket{\Psi^{(j)}}_{\tilde A \tilde B}$, followed by the disposal of the ancilas. 
c) In the second one, called probabilistic lab swaps (PLS), 
each $V^{(j)}_{AB}$ is a pure process where the same unitary operator (either the swap $S$ or the identity $\mathbb1$ gate) is applied to $A_I$ together with $B_I$ and to $A'_I$ together with $B'_I$. That is, conditioned on Charlie's outcome, Alice and Bob either exchange their systems (through swap gates on their inputs and outputs) or leave them untouched. This probabilistically exchanges the causal orders $A\to B$ and $B\to A$, but it never introduces causal nonseparability or quantum control of causal orders.
 }
\label{fig:concatenation}%
\end{figure*}

\section{The operational framework}
\label{sec:free_ops}
The fundamental property of the free operations of a resource theory is that of mapping the subset of resourceless objects of the theory onto itself. Here we consider linear transformations $\mathcal{V}:\mathsf P\to\mathsf P$ such that
\begin{equation}
\mathcal{V}(W)
\begin{cases}
\in\mathsf{CS}\ \ \ \ \ \text{if}\ \, W\in\mathsf{CS}, \\
\in\mathsf{NQC}\, \ \text{if}\ \, W\in\mathsf{NQC}.
\end{cases}
\end{equation}
In other words, we demand that the operations are free with respect to both causal nonseparability and quantum control of causal orders. This may in general be too restrictive if one is only interested in a resource theory of quantum control of causal orders alone. 
In the end of the section, we mention some subtleties towards such a theory though. 
In any case, here we are interested in 
a unified resource theory for both types of resources.

In concrete terms, we propose the following general parametrization for the elementary free operations:
\begin{equation}
\mathcal{V}(W)= V*W\ .
\label{eq:def_V}
\end{equation}
where $V$ is a (well-normalized) process matrix in $\mathsf B(\mathsf H_{A_O}\otimes\mathsf H_{A'_O}\otimes\mathsf H_{B_O}\otimes\mathsf H_{B'_O}\otimes\mathsf H_{A_I}\otimes\mathsf H_{A'_I}
\otimes\mathsf H_{B_I}\otimes\mathsf H_{B'_I}\otimes\mathsf H_{C}\otimes\mathsf H_{C'})$. 
Here, we again explicitly distinguish isomorphic spaces with an apostrophe because the link product in Eq. \eqref{eq:def_V} requires careful space matching.
In fact, using Eq. \eqref{eq:def_link_prod}, note that $\mathcal{V}$ effectively represents a CP TP map from $\mathsf B(\mathsf H_{A_I}\otimes\mathsf H_{A'_O}\otimes\mathsf H_{B_I}\otimes\mathsf H_{B'_O}\otimes\mathsf H_{C})$ to $\mathsf B(\mathsf H_{A'_I}\otimes\mathsf H_{A_O}\otimes\mathsf H_{B'_I}\otimes\mathsf H_{B_O}\otimes\mathsf H_{C'})$, acting trivially on $P$ and $F$ [see Fig. \ref{fig:concatenation} a)]. 
The explicit form of $V$ is taken as 
\begin{equation}
V = \sum_j V^{(j)}_C \otimes V^{(j)}_{AB} \ ,
\label{eq:Vdecomposed}
\end{equation}
with $V^{(j)}_C \in \mathsf B(\mathsf H_{C}\otimes\mathsf H_{C'})$ sub-normalized process matrices (each one representing a CP non-TP map) that sum up to the normalized process matrix $\sum_j V^{(j)}_C$ (representing a CP TP map) and  $V^{(j)}_{AB} \in \mathsf B(\mathsf H_{A_O}\otimes\mathsf H_{A'_O}\otimes\mathsf H_{B_O}\otimes\mathsf H_{B'_O}\otimes\mathsf H_{A_I}\otimes\mathsf H_{A'_I}\otimes\mathsf H_{B_I}\otimes\mathsf H_{B'_I})$ normalized process matrices (each one representing a CP TP map). 

More technically, each term in Eq. \eqref{eq:Vdecomposed} represents the $j$-th outcome of an \emph{instrument} at Charlie's lab coordinated with a different process at Alice and Bob's labs. 
Instruments generalize the notion of positive operator-valued measures (POVMs) from measurements to state transformations \cite{Davies1970}. They reduce to POVMs when the output space has dimension 1.
The above-mentioned coordination is achieved through classical communication of Charlie's outcome $j$ to Alice and Bob. Thus, all the free operations arising from Eq. \eqref{eq:Vdecomposed} belong to the generic class of local operations and one-way classical communication from the control to the target and are therefore separable in the control-versus-target bipartition. Charlie's instrument can be arbitrary. 
However, we demand that all instruments at $A$ and $B$ satisfy the following basic constraints to avoid introducing causal loops: labs $A_I$ and $B_I$ are jointly in the causal past of $A'$ and $B'$, and all latter four are jointly in the causal past of $A_O$ and $B_O$. This is mathematically captured by the essential requirement that $A'$ and $B'$ both cannot signal from their local outputs $A'_O$ or $B'_O$ to neither of their inputs $A'_I$ or $B'_I$.  
This, together with the fact that $V$ is separable automatically implies that $\mathcal{V}$ preserves the set $\mathsf S$ of separable processes  in the control-versus-target bipartition. Next, we impose more fine-tuned conditions on the instruments at the target labs so that $\mathcal{V}$ preserves also the set $\mathsf{CS}$ of causally separable processes. Because of linearity, this will automatically imply preservation of $\mathsf{NQC}$ too. 

Specifically, we consider two broad families of elementary instruments. The first one arises from restricting all $V^{(j)}_{AB}$ to local quantum operations in the $A|B$ bipartition assisted by pre-shared entanglement. More precisely, we take each $V^{(j)}_{AB}$ in Eq. \eqref{eq:Vdecomposed} as a process resulting from local unitary dynamics of the instrument inputs $A_I$, $A'_O$, $B_I$, and $B'_O$ together with arbitrary-dimensional ancillary registers $\tilde A$ and $\tilde B$, which are subsequently discarded [see Fig. \ref{fig:concatenation} b)] \footnote{One has in principle three different isomorphic ancilla spaces on Alice's side, one before the application of $U_{A_I}^{(j)}$, one between the application of $U_{A_I}^{(j)}$ and $U_{A_O}^{(j)}$, and one after $U_{A_O}^{(j)}$. Unlike the other variables, however, this ancilla is not matched to any Hilbert space outside $V_{AB}^{(j)}$, so we can use a single variable $\tilde A$ for all of them. Same for $\tilde B$.}. The ancillas are initialized in an arbitrary pure (normalized) state $\ket{\Psi^{(j)}}_{\tilde A\tilde B}\in\mathsf H_{\tilde A}\otimes\mathsf H_{\tilde B}$. The evolution is in turn given by local unitary operators $U^{(j)}_{A_I}$ and $U^{(j)}_{B_I}$, from $\mathsf H_{A_I}\otimes\mathsf H_{\tilde A}$ to $\mathsf H_{A'_I}\otimes\mathsf H_{\tilde A}$ and from $\mathsf H_{B_I}\otimes\mathsf H_{\tilde B}$ to $\mathsf H_{B'_I}\otimes\mathsf H_{\tilde B}$, respectively, and $U^{(j)}_{A_O}$ and $U^{(j)}_{B_O}$, from $\mathsf H_{A'_O}\otimes\mathsf H_{\tilde A}$ to $\mathsf H_{A_O}\otimes\mathsf H_{\tilde A}$ and from $\mathsf H_{B'_O}\otimes\mathsf H_{\tilde B}$ to $\mathsf H_{B_O}\otimes\mathsf H_{\tilde B}$, respectively. Finally, after the unitary evolution, both ancillary registers are traced out.
We refer to the resulting class as \emph{local operations and ancillary entanglement} (LOAE); and denote it by $\mathsf{LOAE}$:
\begin{dfn}[Local operations and ancillary entanglement]
A process transformation $\mathcal{V}$ is in the class $\mathsf{LOAE}$ if it can be parametrized by Eqs. \eqref{eq:def_V} and \eqref{eq:Vdecomposed} with a process $V$ such that, for all $j$, 
\label{def:LOAE}
\begin{equation}
\label{eq:def_LOAE}
V^{(j)}_{AB}\coloneqq \Tr_{\tilde A\tilde B}\!\left[U^{(j)}\ \ketbra{\Psi^{(j)}}{\Psi^{(j)}}_{\tilde A\tilde B}\otimes\kettbbra{\mathbb1}{\mathbb1}_{\rm in} \ {U^{(j)}}^{\dagger}\right],
\end{equation}
with the short-hand notations $U^{(j)}\coloneqq \big(U^{(j)}_{A_O}\otimes\mathbb1_{A_I'}\big)\big(\mathbb1_{A_O'}\otimes U^{(j)}_{A_I}\big)\otimes \left(U^{(j)}_{B_O}\otimes\mathbb1_{B_I'}\right)\left(\mathbb1_{B_O'}\otimes U^{(j)}_{B_I}\right)\otimes\mathbb1_{\rm copy}$ and $\kett{\mathbb1}_{\rm in}\coloneqq\kett{\mathbb1}_{A_I}\otimes\kett{\mathbb1}_{B_I}\otimes\kett{\mathbb1}_{A'_O}\otimes\kett{\mathbb1}_{B'_O}\in\mathsf H_{A_I}^{\otimes 2}\otimes\mathsf H_{B_I}^{\otimes 2}\otimes\mathsf H_{A'_O}^{\otimes 2}\otimes\mathsf H_{B'_O}^{\otimes 2}$, 
where  $\mathbb1_{\rm copy}$ is the identity operator on the inputs' copy's Hilbert space $\mathsf H_{A_I}\otimes\mathsf H_{B_I}\otimes\mathsf H_{A'_O}\otimes\mathsf H_{B'_O}$. 
\end{dfn}

We emphasize that only pre-shared quantum correlations between Alice and Bob are allowed in $\mathsf{LOAE}$, with no communication of any sort between them. 
Thus, clearly, each $V^{(j)}_{AB}$ (and therefore also $V$) is a nonsignaling process with respect to the $A|B$ bipartition, i.e. nonsignaling both from $A$ to $B$ and vice versa (see Lemma \ref{th:LOAEinNSO} in App. \ref{app:cond_free_V} for an explicit proof). 
Explicitly, no information can flow from $A_I$ to $B'_I$ or $B_O$, from $B_I$ to $A'_I$ or $A_O$, from $A'_O$ to $B_O$, and from $B'_O$ to $A_O$.
In particular, this excludes the possibility of teleporting the incoming state of any of the instrument's inputs towards the other side of the bipartition. Finally, note that Def. \ref{def:LOAE} does not impose any restriction on the dimension or structure of the ancillary spaces $\mathsf H_{\tilde A}$ and $\mathsf H_{\tilde B}$. Therefore, by virtue of Stinespring's dilation theorem \cite{Stinespring1955}, Eq. \eqref{eq:def_LOAE} effectively parametrizes a quantum process describing an arbitrary, fully-generic CP TP map without signaling between Alice and Bob and subject to the above-mentioned essential local-causality requirement that $A'_O$ and $B'_O$ cannot signal to $A'_I$ and $B'_I$, respectively. In fact, arbitrary-dimensional ancilas are actually not required for the latter to hold, just $\mathrm{dim}(\mathsf H_{\tilde A})=2\, \mathrm{dim}(\mathsf H_{A_I}\otimes \mathsf H_{A'_O})$ and $\mathrm{dim}(\mathsf H_{\tilde B})=2\, \mathrm{dim}(\mathsf H_{B_I}\otimes \mathsf H_{B'_O})$.

The second family of elementary instruments we consider is called \emph{probabilistic lab swaps} (PLS), denoted by $\mathsf{PLS}$. It is simpler than the class $\mathsf{LOAE}$ in that each $V^{(j)}_{AB}$ in Eq. \eqref{eq:Vdecomposed} is a pure process, describing the same unitary transformation applied from $\mathsf H_{A_I}\otimes\mathsf H_{B_I}$ to $\mathsf H_{A'_I}\otimes\mathsf H_{B'_I}$ and from $\mathsf H_{A'_O}\otimes\mathsf H_{B'_O}$ to $\mathsf H_{A_O}\otimes\mathsf H_{B_O}$. No ancillary registers are used here. Moreover, we allow for only two such unitary operations: the swap gate $S$ and the identity gate $\mathbb1$.
That is, each process $V^{(j)}_{AB}$ describes either the joint swap of both inputs and outputs, which effectively exchanges Alice and Bob's labs, or the trivial identity map:
\begin{dfn}[Probabilistic lab swaps]
A process transformation $\mathcal{V}$ is in the class $\mathsf{PLS}$ if it can be parametrized by Eqs. \eqref{eq:def_V} and \eqref{eq:Vdecomposed} with a process $V$ such that, for all $j$, $V^{(j)}_{AB}$ is a rank-1 process given by either the identity $\kettbbra{\mathbb 1_{AB}}{\mathbb 1_{AB}}$ or the lab-swap $\kettbbra{\mathbb s_{AB}}{\mathbb s_{AB}}$ processes, defined as
\label{def:ILS}
\begin{subequations}
\label{eq:def_ILS}
\begin{equation}
\label{eq:def_id}
\kett{\mathbb 1_{AB}}\coloneqq\kett{\mathbb1}_{A_IA'_I}\otimes\kett{\mathbb1}_{B_IB'_I}\otimes\kett{\mathbb1}_{A'_OA_O}\otimes\kett{\mathbb1}_{B'_OB_O}
\end{equation}
and
\begin{equation}
\label{eq:def_swap}
\kett{\mathbb s_{AB}}\coloneqq\kett{\mathbb1}_{A_IB'_I}\otimes\kett{\mathbb1}_{B_IA'_I}\otimes\kett{\mathbb1}_{A'_OB_O}\otimes\kett{\mathbb1}_{B'_OA_O}.
\end{equation}
\end{subequations}
\end{dfn}

Importantly, due to the swap gates, $\mathsf{PLS}$ is not only nonlocal but also even signaling in the $A|B$ bipartition, in contrast to $\mathsf{LOAE}$. In fact, for a causal initial process $W$, i.e. with a fixed causal order $A\to B$ or $B\to A$, the causal orders are probabilistically exchanged. However, this never creates causal nonseparability because such exchanges are incoherent. Coherence between the 
different $j$-th terms in Eq. \eqref{eq:Vdecomposed} would be required so that Eqs. \eqref{eq:def_ILS} can lead to a non-free operation able to create causal nonseparability. 
Finally, a comment on the experimental feasibility of $\mathsf{PLS}$ is in place. Even though mathematically formulated in terms of joint swaps of both inputs and outputs, process transformations in $\mathsf{PLS}$ can in many cases  be simulated without any swap gate. More precisely, in experiments, processes are often detected through local instruments at Alice and Bob's lab  \cite{Procopio2014,Rubino2016,Goswami2018,Wei2018}. Thus, in those cases, instead of actually applying the joint swap gates to their initial process $W$ and detecting their final process $\mathcal{V}(W)$ with local instruments in some given settings, Alice and Bob can simply do nothing to $W$ and swap the settings of their local instruments. That is, the lab swap on the process can be absorbed into the instruments' settings on the final process. This considerably alleviates physical implementations of PLS processes.

The validity of the elementary classes $\mathsf{LOAE}$ and $\mathsf{PLS}$ as free operations of causal nonseparability is formalized by the following theorem. We take advantage of the theorem also to formally introduce the complete class of free operations we propose:  local operations and one-way classical communication from the control to the target given by arbitrary sequential concatenations of transformations in $\mathsf{LOAE}$ or $\mathsf{PLS}$.
\begin{thm}[Free operations of causal nonseparability and quantum control of causal orders]
Any process transformation $\mathcal{V}$ in $\mathsf{LOAE}\cup\mathsf{PLS}$ is an automorphism of the sets $\mathsf{P}$ of generic processes, $\mathsf{CS}$ of causally separable ones, and $\mathsf{NQC}$ of non quantum-control causally ordered ones. Therefore, so is any sequence of such elementary transformations.
\label{th:free_ops}
\end{thm}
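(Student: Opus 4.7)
The proof splits cleanly into three preservation statements about the elementary classes $\mathsf{LOAE}$ and $\mathsf{PLS}$, and the extension to sequential concatenations is then immediate by induction. The plan is to establish that any $\mathcal{V}\in\mathsf{LOAE}\cup\mathsf{PLS}$ satisfies (i) $\mathcal{V}(\mathsf P)\subseteq\mathsf P$, (ii) $\mathcal{V}(\mathsf S)\subseteq\mathsf S$, and (iii) $\mathcal{V}(\mathsf{CS})\subseteq\mathsf{CS}$; since $\mathsf{NQC}=\mathrm{Conv}(\mathsf{CS}\cup\mathsf S)$ and $\mathcal V$ is linear (being the link product with a fixed $V$), preservation of $\mathsf{NQC}$ follows as an automatic consequence of (ii) and (iii).

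For (i), the plan is to invoke the standard fact, a direct consequence of the construction of the link product in Eq.~\eqref{eq:def_link_prod}, that the link product of two valid process matrices whose lab arrangements are causally compatible is again a valid process matrix. By the very construction of $V$ in Eq.~\eqref{eq:Vdecomposed} -- in particular the requirement that $A'_O$ ($B'_O$) cannot signal to $A'_I$ ($B'_I$) -- the composition with any $W\in\mathsf P$ yields a process matrix on the new lab configuration $(P,A',B',F,C')$, so the normalization constraints of App.~A carry over. For (ii), the key structural input is the $C|AB$ separability of $V$ built into Eq.~\eqref{eq:Vdecomposed}: given $W\in\mathsf S$ with $\mathrm{Tr}_{PF}[W]=\sum_\mu q_\mu\,\varrho_C^{(\mu)}\otimes W_{AB}^{(\mu)}$, applying $\mathcal V$ and tracing out $P,F$ (which commutes with $\mathcal V$) yields
\begin{equation}
\mathrm{Tr}_{PF}\bigl[V*W\bigr]=\sum_{j,\mu}q_\mu\,\bigl(V_C^{(j)}*\varrho_C^{(\mu)}\bigr)\otimes\bigl(V_{AB}^{(j)}*W_{AB}^{(\mu)}\bigr),
\end{equation}
a manifestly product-per-term decomposition in the $C'|A'B'$ bipartition, hence in $\mathsf S$.

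The bulk of the work is (iii), where LOAE and PLS must be handled separately. For any $W\in\mathsf{CS}$, write $W=p\,W_{A\to B}+(1-p)\,W_{B\to A}$ and, by linearity, it suffices to show each causal summand maps into $\mathsf{CS}$. For $\mathsf{LOAE}$, the crucial input is that each $V_{AB}^{(j)}$ is nonsignaling in the $A|B$ bipartition (Lemma in App.~C), which I would prove by direct link-product inspection of Eq.~\eqref{eq:def_LOAE}: the product unitary structure and the pre-shared-only ancilla state forbid any $A$-to-$B$ or $B$-to-$A$ information flow. Concatenating a nonsignaling $V_{AB}^{(j)}$ with a process having causal order $A\to B$ preserves nonsignaling from $B'$ to $A'$, so the output is again in the $A'\to B'$ causal set; summing over $j$ with the outcome weights carried by $V_C^{(j)}$ yields a convex mixture of causal processes. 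For $\mathsf{PLS}$, the two possible $V_{AB}^{(j)}$ either act as the identity -- preserving the causal order of each summand -- or as a joint lab swap -- inverting it; in both cases the summand stays in $\mathsf{CS}$, and classical mixing over $j$ preserves causal separability. Sequential concatenations $\mathcal{V}_n\circ\cdots\circ\mathcal{V}_1$ inherit all three preservation properties by a one-line induction. The main conceptual obstacle is thus the $A|B$ nonsignaling property of $V_{AB}^{(j)}$ for LOAE: it is intuitively transparent from the definition but requires a careful link-product computation to verify formally, which is why the authors isolate it as a dedicated appendix lemma.
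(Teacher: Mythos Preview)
Your overall architecture---split into (i) $\mathcal V(\mathsf P)\subseteq\mathsf P$, (ii) $\mathcal V(\mathsf S)\subseteq\mathsf S$, (iii) $\mathcal V(\mathsf{CS})\subseteq\mathsf{CS}$, then invoke linearity for $\mathsf{NQC}$---matches the paper exactly, and your treatment of (ii) and (iii) is essentially the paper's: separability is preserved by the $C|AB$ product structure of Eq.~\eqref{eq:Vdecomposed}, and causal orders are preserved (LOAE, via nonsignaling of $V_{AB}^{(j)}$) or swapped (PLS), so convex mixtures of causal processes stay causally separable.

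The genuine gap is in (i). You treat preservation of $\mathsf P$ as a ``standard fact'' guaranteed by the single requirement that $A'_O$ ($B'_O$) does not signal to $A'_I$ ($B'_I$). That requirement is necessary but far from sufficient. Consider, e.g., a $V_{AB}$ that routes $A_I\!\to\! B'_I$, $B_I\!\to\! A'_I$, $A'_O\!\to\! A_O$, $B'_O\!\to\! B_O$: it satisfies your no-loop condition, yet linking it to any $W$ with order $A\!\to\! B$ produces the cycle $A'_O\!\to\! A_O\!\to\! B_I\!\to\! A'_I$, so $V*W\notin\mathsf P$. What actually protects $\mathsf P$ is the full set of $A|B$ nonsignaling constraints (the $\mathsf{NSO}$ conditions) for LOAE, and the specific swap-signaling constraints for PLS; the paper must and does verify \emph{each} of the process-matrix conditions \eqref{eq:condPMdim}--\eqref{eq:condPMsignalP} for $V*W$ by repeated use of these constraints together with the ``hopping'' identity $\Tr_X({}_XW\,Y)=\Tr_X(W\,{}_XY)$. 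In particular, the affine condition \eqref{eq:condPMnorm} only survives because of a nontrivial cancellation among nine cross terms that relies on the full $\mathsf{NSO}$ structure, and the PLS swap case requires its own variant of the argument with $A'\leftrightarrow B'$. This is the longest and most delicate part of the proof, not a one-liner; you have correctly located where the nonsignaling lemma is needed for (iii), but you need to recognize that the \emph{same} nonsignaling structure is doing the heavy lifting in (i) as well.
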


The theorem is proven in App. \ref{app:cond_free_V}. In fact, there we actually prove a stronger result, where $\mathsf{LOAE}$ is replaced by the more general class $\mathsf{NSO}$ of \emph{nonsignaling operations}. In the latter, instead of pre-shared entangled ancillas, Alice and Bob may be assisted by generic (potentially supra-quantum) nonsignaling resources. What is more, our proof strategy to show that $\mathsf{CS}$ is closed under $\mathsf{NSO}$ is to show that even its subsets of causal processes with definite causal orders are preserved as well by $\mathsf{NSO}$. Recall that $\mathsf{CS}$ is the convex hull of the latter subsets, so that, by linearity, the implication automatically follows. In other words, we show that any process transformation in $\mathsf{NSO}$ (and, by inclusion, also  in $\mathsf{LOAE}$) maps an arbitrary causal process, with order either $A\to B$ or $B\to A$, into a causal process with the same order. That is, it preserves the underlying causal structure of every quantum causal model. Although this is explicitly proven here for quantum causal models that are effectively bipartite (involving Alice and Bob's labs), it can be straightforwardly generalized to arbitrary causal networks with more labs. 
As such, $\mathsf{LOAE}$ provides the basis of a yet-to-be resource theory of quantum causal networks, where the resourceful set consists of all quantum causal models incompatible with a given multipartite causal structure under scrutiny. 
This is a promising and exciting prospect, but it is beyond the scope of this work. Still, the unifying power of the elementary class $\mathsf{LOAE}$ could not be left unmentioned here. 
In the next two sections, we exploit simple examples of our two elementary classes of free operations to implement highly nontrivial information-theoretic manipulations of causal nonseparability. 

Finally, we briefly comment on the possibility of a resource theory of just quantum control of causal orders (and not causal nonseparability). In principle, the condition that $\mathsf{CS}$ is closed under the transformations is an unnecessary restriction to that end. However, physically-meaningful relaxations of Defs. \ref{def:LOAE} and \ref{def:ILS} so that $\mathsf{P}$ and $\mathsf{NQC}$ are invariant but not $\mathsf{CS}$ have been elusive to us. For instance, one could relax the constraint that each $V^{(j)}_{AB}$ in Eq. \eqref{eq:Vdecomposed} does not create causal nonseparability, so that -- say -- $\kett{\mathbb u_0}$ goes to $(\kett{\mathbb u_0}+\kett{\mathbb u_1})/\sqrt2$ for some $\mathbb u_{PA}$, $\mathbb u_{AB}$, and $\mathbb u_{BF}$. The latter corresponds to a coherent lab swap without a control system \footnote{{Strictly speaking, it does not define a valid process matrix, as it does not satisfy the necessary normalization conditions. However, it does define a  physical process that can be implemented  (probabilistically, with a probability depending on the instruments on the target) by post-selecting $\kett{\mathbb{w}_{\rm qs}}$ on a local measurement on $C$.}}.
However, the same transformation would then map pure entangled processes as $(\ket{0}_C\,\kett{\mathbb 1_0}+\ket{1}_C\,\kett{\mathbb u_0})/\sqrt2\in\mathsf{CS}$
 out of $\mathsf{NQC}$. 
Alternatively, one could even relax the constraint that the instruments are separable in the control-versus-target bipartition [i.e. the tensor-product decomposition of Eq. \eqref{eq:Vdecomposed}], so that -- say -- pure causally-separable processes in $\mathsf{S}$ are mapped into $\mathsf{NQC}\setminus\mathsf{S}$ (the set with quantum control of different processes without quantum control of causal orders). The instruments on the target would then be applied coherently with that on the control, instead of conditioned on its classical outcomes. However, similarly to the example above, one can then find pure causally-nonseparable processes in $\mathsf{S}$ that would be taken out of $\mathsf{NQC}$ by the same transformations. 
We leave the questions of resource theories of quantum control of causal orders that do not preserve $\mathsf{CS}$ or $\mathsf{S}$ open.

\section{Single-copy conversions and a hierarchy of quantum control of causal orders}
\label{sec:conversion}

Here we study free interconversions between processes in the regime where a single copy of the system is available. (In the next section we study transformations in the multi-copy regime.) 
More precisely, we consider deterministic conversions between generalized quantum switches, i.e. between any $\kett{\mathbb w}$ and $\kett{\mathbb w'}$ obeying Eq. \eqref{eq:generalizedQS}.
We characterize the allowed conversions in terms of a majorization relation between the corresponding  distributions of $\kett{\mathbb w}$ and $\kett{\mathbb w'}$, respectively  denoted by $\boldsymbol p$ and $\boldsymbol p'$. For binary distributions, majorization is defined in a particularly simple way:  $\boldsymbol p$ is majorized by $\boldsymbol p'$ (denoted $\boldsymbol p\preccurlyeq\boldsymbol p'$) if $\max_i p_i\leq \max_i p'_i$. In other words, $\boldsymbol p\preccurlyeq\boldsymbol p'$ if $\boldsymbol p$ is more flat than $\boldsymbol p'$. The characterization is formalized as follows. 
\begin{thm}[Single-copy pure-process conversion]
\label{th:conversion} 
Let $\kett{\mathbb w}$ and $\kett{\mathbb w'}$ be generalized quantum switches such that $\boldsymbol p \preccurlyeq \boldsymbol p'$. Then there is a free operation that converts  $\kett{\mathbb w}$ into $\kett{\mathbb w'}$ with unit probability.
\end{thm}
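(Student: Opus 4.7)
The plan is to convert $\kett{\mathbb w}$ into $\kett{\mathbb w'}$ by composing two $\mathsf{LOAE}$ transformations (just single-outcome local unitaries) with a single $\mathsf{PLS}$ transformation in between.

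First, since local unitaries are free (a trivial subcase of $\mathsf{LOAE}$) and since---as shown in App.~\ref{app:proof_conv}---they cannot alter the Schmidt coefficients of a generalized quantum switch across the control-versus-target bipartition, both $\kett{\mathbb w}$ and $\kett{\mathbb w'}$ can be reduced by local unitaries at $C$ and at each of the target-lab input/output spaces to the canonical form
\[
\kett{\mathbb w_{\boldsymbol p}}\coloneqq\sqrt{p_0}\,\ket{0}_C\kett{\mathbb 1_0}+\sqrt{p_1}\,\ket{1}_C\kett{\mathbb 1_1}\,,
\]
and analogously $\kett{\mathbb w'}\leftrightarrow\kett{\mathbb w_{\boldsymbol p'}}$. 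It therefore suffices to construct a free operation mapping $\kett{\mathbb w_{\boldsymbol p}}\to\kett{\mathbb w_{\boldsymbol p'}}$ whenever $\boldsymbol p\preccurlyeq\boldsymbol p'$; flanking it by the preceding and inverse local unitaries completes the conversion.

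For the core step I would use $\mathsf{PLS}$ with a two-outcome instrument at Charlie: for outcome $0$, Alice and Bob apply the identity $\kettbbra{\mathbb 1_{AB}}{\mathbb 1_{AB}}$; for outcome $1$, the lab-swap $\kettbbra{\mathbb s_{AB}}{\mathbb s_{AB}}$. The key observation is that the swap exchanges $\kett{\mathbb 1_0}\leftrightarrow\kett{\mathbb 1_1}$, so that by choosing Charlie's Kraus operators
\[
K^{(0)}=\lambda\sqrt{\tfrac{p'_0}{p_0}}\,\ket{0}_{C'}\!\bra{0}_C+\lambda\sqrt{\tfrac{p'_1}{p_1}}\,\ket{1}_{C'}\!\bra{1}_C,\qquad
K^{(1)}=\mu\sqrt{\tfrac{p'_0}{p_1}}\,\ket{0}_{C'}\!\bra{1}_C+\mu\sqrt{\tfrac{p'_1}{p_0}}\,\ket{1}_{C'}\!\bra{0}_C,
\]
one checks directly that $(K^{(0)}\otimes\mathbb 1_{AB})\kett{\mathbb w_{\boldsymbol p}}=\lambda\,\kett{\mathbb w_{\boldsymbol p'}}$ and $(K^{(1)}\otimes S_{AB})\kett{\mathbb w_{\boldsymbol p}}=\mu\,\kett{\mathbb w_{\boldsymbol p'}}$. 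Both outcome branches project onto the same pure CJ vector, so the outcome-averaged output of $\mathcal{V}$ is the pure process $(\lambda^2+\mu^2)\,\kettbbra{\mathbb w_{\boldsymbol p'}}{\mathbb w_{\boldsymbol p'}}$.

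It remains to choose real $\lambda,\mu$ with $\lambda^2,\mu^2\geq 0$ so that Charlie's instrument is trace-preserving. The constraint $K^{(0)\dagger}K^{(0)}+K^{(1)\dagger}K^{(1)}=\mathbb 1_C$ reduces to the $2\times 2$ linear system $\lambda^2 p'_0+\mu^2 p'_1=p_0$ and $\lambda^2 p'_1+\mu^2 p'_0=p_1$, whose solution---using $\sum_i p_i=\sum_i p'_i=1$---is $\lambda^2+\mu^2=1$ together with $\lambda^2-\mu^2=(p_0-p_1)/(p'_0-p'_1)$ (the degenerate case $p'_0=p'_1$ forces $\boldsymbol p=\boldsymbol p'$ and is trivial). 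Nonnegativity of $\lambda^2$ and $\mu^2$ is thus equivalent to $|p_0-p_1|\leq|p'_0-p'_1|$, i.e.~$\max_i p_i\leq\max_i p'_i$, which is precisely $\boldsymbol p\preccurlyeq\boldsymbol p'$. The main obstacle I anticipate is not the algebra but the careful matching of link-product spaces and the packaging of the single-Kraus CP non-TP maps into the sub-normalized process-matrix form $V^{(j)}_C$ demanded by Eq.~\eqref{eq:Vdecomposed}; since Def.~\ref{def:ILS} imposes no structural constraint on the instrument at $C$ beyond being an instrument, this last step is purely notational.
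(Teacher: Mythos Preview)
Your proposal is correct and essentially the same as the paper's proof. The paper also reduces (via local unitaries, using the constraint~\eqref{eq:constraints}) to a fixed-unitary form and then applies exactly your two-outcome $\mathsf{PLS}$: Charlie's sub-normalized processes $V_C^{(\pi)}=\kettbbra{\mathbb v_C^{(\pi)}}{\mathbb v_C^{(\pi)}}$ in Eq.~\eqref{eq:VCjconversion} are the CJ representations of your $K^{(0)},K^{(1)}$, with $\lambda^2=\lambda_{\mathrm{id}}$ and $\mu^2=\lambda_{\mathrm{sw}}$ the weights in the Birkhoff decomposition $\boldsymbol p=\lambda_{\mathrm{id}}\,\boldsymbol p'+\lambda_{\mathrm{sw}}\,\boldsymbol p'_{\mathrm{sw}}$; your linear system $\lambda^2 p'_0+\mu^2 p'_1=p_0$, $\lambda^2 p'_1+\mu^2 p'_0=p_1$ is precisely that decomposition.
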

\noindent The proof is given App. \ref{app:proof_conv}, where we construct an explicit protocol that does the claimed transformation. 

Theo. \ref{th:conversion} plays a role for quantum control of causal orders similar to the one played in entanglement theory by Nielsen's seminal theorem \cite{Nielsen1999} (see also \cite{Du2015}) for pure-state conversions under entanglement-free operations. It induces a hierarchy -- more precisely, a so-called total preorder -- on the set of pure processes obeying Eq. \eqref{eq:generalizedQS}. It is called a preorder because there are cases where $\kett{\mathbb w}\preccurlyeq\kett{\mathbb w'}$ and $\kett{\mathbb w'}\preccurlyeq\kett{\mathbb w}$ with $\kett{\mathbb w'}\neq\kett{\mathbb w}$, so that both processes can be reversibly interconverted. This is for instance the case when $\kett{\mathbb w}$ and $\kett{\mathbb w'}$ are local-unitary equivalent. In turn, such  preorder is called total because, for binary probability distributions, there exists no pair of distributions such that none majorizes the other. That is, Theo. \ref{th:conversion} leaves no pair of generalized quantum switches unconnected. 

Hierarchies of this kind are important because they substantiate with a clear operational interpretation the notions of  ``more'' and ``less'' quantum control of causal orders: If a process can be deterministically transformed freely into another then the former is not less quantum-control causally ordered than the latter. The theorem thus lays the basis of formal quantifiers through causal nonseparability monotones, in the same spirit as entanglement monotones \cite{Horodecki2009}. Interestingly, at the top of the hierarchy lies the quantum switch.
\begin{cor}[Partial unit of quantum control of causal orders]\label{th:bit}
Let $\kett{\mathbb w}$ be an arbitrary process given by Eq. \eqref{eq:generalizedQS}. Then there is a free operation that converts  $\kett{\mathbb{w}_{\rm qs}}$ into $\kett{\mathbb w}$ with unit probability.
\end{cor}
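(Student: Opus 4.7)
The plan is to obtain Corollary \ref{th:bit} as an immediate specialisation of Theorem \ref{th:conversion}, by observing that the standard quantum switch sits at the bottom of the majorization preorder on binary probability distributions. First I would verify that $\kett{\mathbb{w}_{\rm qs}}$ in Eq. \eqref{eq:def_qswitch} fits the template of a generalized quantum switch in Eq. \eqref{eq:generalizedQS}: with the choice of basis $\{\ket{\Phi_0}, \ket{\Phi_1}\} = \{\ket 0, \ket 1\}$ for $\mathsf H_C$, weights $\boldsymbol p_{\rm qs} = \{1/2, 1/2\}$, and all six unitaries $\mathbb u_{PA} = \mathbb u_{AB} = \mathbb u_{BF} = \mathbb u_{PB} = \mathbb u_{BA} = \mathbb u_{AF} = \mathbb 1$, the constraints \eqref{eq:constraints} reduce to the trivial identity $\mathbb 1 = \mathbb 1$, and Eq. \eqref{eq:generalizedQS} reproduces Eq. \eqref{eq:def_qswitch}. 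Hence $\kett{\mathbb{w}_{\rm qs}}$ is a legitimate input for Theorem \ref{th:conversion}.

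Next I would carry out the majorization comparison. For any binary distribution $\boldsymbol p = \{p_0, p_1\}$, normalization forces $\max\{p_0, p_1\} \geq 1/2$, with equality iff $\boldsymbol p = \boldsymbol p_{\rm qs}$. Consequently $\max_i p_{{\rm qs}, i} = 1/2 \leq \max_i p_i$, which, per the characterisation of the preorder recalled just before Theorem \ref{th:conversion}, says exactly that $\boldsymbol p_{\rm qs} \preccurlyeq \boldsymbol p$. In other words, the uniform binary distribution is the unique minimum of $\preccurlyeq$ on this set.

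With both ingredients in place, I would invoke Theorem \ref{th:conversion} on the pair $(\kett{\mathbb{w}_{\rm qs}}, \kett{\mathbb w})$ to conclude the existence of a deterministic free operation implementing the conversion $\kett{\mathbb{w}_{\rm qs}} \mapsto \kett{\mathbb w}$. The explicit protocol is precisely the one constructed in App. \ref{app:proof_conv} specialised to this pair of distributions; no further construction is needed.

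There is no genuine obstacle in the argument. The only non-trivial conceptual check is the first step, namely confirming that $\kett{\mathbb{w}_{\rm qs}}$ belongs to the class of generalized quantum switches covered by Theorem \ref{th:conversion}; after that, the uniformity of $\boldsymbol p_{\rm qs}$ guarantees the majorization hypothesis for free, and the corollary follows in one line. The ensuing operational reading is the advertised one: the quantum switch, being majorized by every other generalized quantum switch, is the most flat element and can be freely downgraded into any other member of the family, justifying its status as the basic unit at the top of the hierarchy.
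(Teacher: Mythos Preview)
Your proposal is correct and mirrors the paper's own one-line justification: the corollary is obtained from Theorem \ref{th:conversion} by noting that $\boldsymbol p_{\rm qs}=\{\tfrac12,\tfrac12\}$ is majorized by every binary distribution. Your additional explicit check that $\kett{\mathbb{w}_{\rm qs}}$ satisfies Eqs.~\eqref{eq:generalizedQS}--\eqref{eq:constraints} is a welcome but routine verification.
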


The corollary follows from the fact that $\boldsymbol p_{\rm qs}\coloneqq\{\frac12,\frac12\}$ is majorized by all distributions. The basic unit of a resource is important because it renders the notion of maximal amount of resource operationally meaningful, independently of the particular choice of quantifier. For instance, a process would be \emph{the unit of quantum control of causal orders} if all processes could be freely obtained deterministically from it. This would be the counterpart of Bell states in entanglement theory, which are used as entanglement bits \cite{Dur2000,Horodecki2009}. Here, we use the terminology \emph{partial unit of quantum control of causal orders} to stress that the quantum switch is the basic unit only within the subset of generalized quantum switches. An interesting possibility would be that $\kett{\mathbb w_{\rm qs}}$ can be freely converted probabilistically into all processes (be it exactly or approximately, up to arbitrarily small error). This would render $\kett{\mathbb w_{\rm qs}}$ a full unit of causal nonseparability in an operational sense. In fact, invoking again entanglement theory, GHZ states are considered more entangled than W ones precisely in that sense  \cite{Vrana2015,Walter2016}. Alternatively, it may as well be the case that there are intrinsically-inequivalent classes of causal nonseparability, even under free operations beyond the ones proposed here. These are fascinating open questions that our framework offers for future explorations.

\section{Distillation of quantum control of causal orders}
\label{sec:distillation}
We now study the concentration of the quantum control of causal orders contained in multiple copies of a process (with non-maximal resource) into partial units of the resource, i.e. into (fewer) copies of the quantum switch. This is similar in spirit to the notion of entanglement distillation \cite{Bennett1995,Bennett1996,Bennett1996a,Horodecki2009}. Before we proceed, however, a brief digression on the composition of independent copies of a process is useful. 

In general, the tensor product of two (or more) valid process matrices on a given system is known not to yield a valid process matrix on the system copies \cite{Jia2018,Guerin2018}. The conceptual reason behind this is that, in the generic situation where Alice and Bob can apply arbitrary instruments globally on the copies of their subsystems, the tensor product of two processes that do not have the same definite causal order renders causal loops possible \cite{Jia2018}. This is an expected and reasonable impossibility if a process is used to describe space-time structures \cite{Feix2017,Zych2017,ZychPhD}, as it is difficult to conceive that Alice and Bob could share two copies of spacetime. However, for processes describing, e.g., interferometric experiments \cite{Procopio2014,Rubino2016}, it is perfectly admissible to describe two independent setups with the tensor product of two process matrices, so long as one restricts the type of instruments on the system copies \cite{Jia2018,Guerin2018}. 
In fact, this is the most natural description to adopt for experiments. Because, since each lab corresponds to a local space-time region, certain configurations of global instruments turn out to be unphysical. For instance, for the above-mentioned processes without the same definite causal order, implementing the instruments that would induce the causal loops requires signaling from one subsystem copy into another towards the past within the same lab \cite{Guerin2018}. 

Since our focus is operational, we adopt this description here. Indeed, any transformation given by Eq. \eqref{eq:def_V} can be thought of as ``adding elements to an experimental setup", as Fig. \ref{fig:concatenation} suggests. Hence, we represent copies of a process with tensor products of it and restrict to independent instruments on each system copy, described in turn by tensor products of single-system instruments. This rules out inter-copy signaling, which guarantees non-negative and well-normalized instrument-outcome probabilities. That is, the description is self-consistent and appropriate for operational frameworks.

More technically, we consider the distillation of quantum switches from generic processes $\kett{\mathbb w}$ parametrized by Eq. \eqref{eq:generalizedQS}, even those arbitrarily close to being causally separable.
One says one can distill quantum control of causal orders from $\kett{\mathbb w}$, with $p_0\neq p_1$, if there exists a free operation $\mathcal{V}$ that attains the transformation 
\begin{equation}
\kett{\mathbb w}^{\otimes N} \ \ \substack{{\rm free\ op}\\\longrightarrow} \ \ \kett{\mathbb w_{\rm qs}}^{\otimes rN} \ 
\label{eq:distillrate}
\end{equation}
with unit probability in the limit $N\to\infty$, for some rate
 $0\leq r\leq 1$. That latter is in turn called the distillation rate of $\kett{\mathbb w}$ relative to $\mathcal{V}$. Since we restrict ourselves to independent single-copy instruments, the deterministic multi-copy transformation is possible only if it is possible probabilistically on each copy. 
That is, Eq. \eqref{eq:distillrate} is achieved in the asymptotic limit iff $\kett{\mathbb w}$ is freely converted into $\kett{\mathbb w_{\rm qs}}$ with  probability $p_{\rm success}\coloneqq r$. Such conversion is shown in what follows.

\begin{lem}[Probabilistic single-copy pure-process conversion]
Let $\kett{\mathbb w}$ and $\kett{\mathbb w'}$ be generalized quantum switches such that $\boldsymbol p \succcurlyeq \boldsymbol p'$. Then there is a free operation that converts $\kett{\mathbb w}$ into $\kett{\mathbb w'}$ with probability $p_{\rm success}=\frac{\min\{p_0,p_1\}}{\min\{p_0',p_1'\}}$.
\label{th:prob_conversion}
\end{lem}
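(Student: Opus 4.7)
The plan is to combine a probabilistic filter on the control register with the deterministic conversion machinery of Theorem \ref{th:conversion}. Without loss of generality I label the orthonormal bases so that $p_0\ge p_1$ and $p'_0\ge p'_1$; the majorization hypothesis $\boldsymbol p\succcurlyeq\boldsymbol p'$ then reads $p_1\le p'_1$, and I set $c\coloneqq p_1/p'_1=\min\{p_0,p_1\}/\min\{p'_0,p'_1\}\in(0,1]$. I would take the free operation to be an element of $\mathsf{LOAE}$ acting nontrivially only on Charlie's lab, with Charlie's instrument having a ``success'' Kraus operator
\[
K_{\rm s}\coloneqq\sqrt{\tfrac{c\,p'_0}{p_0}}\,\ket{\Phi_0}_{C'}\bra{\Phi_0}_C+\sqrt{\tfrac{c\,p'_1}{p_1}}\,\ket{\Phi_1}_{C'}\bra{\Phi_1}_C,
\]
completed by a ``failure'' Kraus operator $K_{\rm f}$ obeying $K_{\rm f}^\dagger K_{\rm f}=\id_C-K_{\rm s}^\dagger K_{\rm s}$. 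That the right-hand side is a valid (positive semi-definite) operator is precisely what singles out the above value of $c$: of the two constraints $c p'_0/p_0,\,c p'_1/p_1\le 1$ the tighter one is $c\le p_1/p'_1$, while the looser $c\le p_0/p'_0$ holds automatically from the majorization assumption. On both branches the target component $V^{(j)}_{AB}$ would be the trivial identity process $\kettbbra{\mathbb 1_{AB}}{\mathbb 1_{AB}}$, a trivial instance of Def.\ \ref{def:LOAE} (all local unitaries equal to $\mathbb 1$, trivial ancillas).

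Next, I would compute the link product of this transformation with $\kett{\mathbb w}$ and verify two things. First, that the success branch produces, up to normalization, the pure process $\sqrt{p'_0}\,\ket{\Phi_0}_{C'}\kett{\mathbb u_0}+\sqrt{p'_1}\,\ket{\Phi_1}_{C'}\kett{\mathbb u_1}$, which is a generalized quantum switch with the target distribution $\boldsymbol p'$ but still the unitaries of $\kett{\mathbb w}$. Second, that the success probability, obtained by contracting $K_{\rm s}^\dagger K_{\rm s}$ with the reduced control state $p_0\ketbra{\Phi_0}{\Phi_0}+p_1\ketbra{\Phi_1}{\Phi_1}$, comes out to $p_0(cp'_0/p_0)+p_1(cp'_1/p_1)=c$, matching the statement. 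To finish, since $\boldsymbol p'\preccurlyeq\boldsymbol p'$ trivially, Theorem \ref{th:conversion} provides a deterministic free operation taking the intermediate process to $\kett{\mathbb w'}$ (adjusting both unitaries and the control basis). Concatenation with the filter, justified by the closure of free operations under sequential composition stated in Theorem \ref{th:free_ops}, then yields the lemma.

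The main obstacle is less conceptual than bookkeeping: one must assemble the complete instrument $(K_{\rm s},K_{\rm f})$ together with the trivial target blocks into a single $V$ of the form \eqref{eq:Vdecomposed} satisfying all normalization conditions demanded of an $\mathsf{LOAE}$ process matrix, and confirm that the probability extracted from Charlie's instrument is truly the probability of landing on $\kett{\mathbb w'}$, independently of whichever instruments downstream at $P$, $A$, $B$, $F$ are ultimately used to probe the output process. Both checks are standard in the CJ/link-product formalism but deserve explicit treatment.
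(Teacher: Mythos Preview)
Your proposal is correct and matches the paper's own proof essentially line for line: the paper also performs a local filter on Charlie's qubit (its Eq.~\eqref{eq:skewer} with the choices \eqref{eq:probconversion1}--\eqref{eq:probconversion2} reduces, under your WLOG ordering, to exactly your $K_{\rm s}$ with success weight $c=\min\{p_0,p_1\}/\min\{p'_0,p'_1\}$), obtaining an intermediate generalized quantum switch with distribution $\boldsymbol p'$ and the original unitaries, and then invokes Theorem~\ref{th:conversion} deterministically. The bookkeeping you flag---packaging the filter plus identity target blocks as a bona fide $\mathsf{LOAE}$ transformation and confirming that Charlie's outcome probability is independent of the downstream instruments---is indeed routine in this formalism and the paper leaves it implicit as well.
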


\noindent The proof is simple, consisting of a local filtering operation on Charlie's qubit followed by the protocol of Theo. \ref{th:conversion}. 
It is given explicitly in App. \ref{app:distill_proof}. 

Lemma \ref{th:prob_conversion} is important because it shows that single-copy process conversions where the final process is majorized by the initial one are also possible (albeit probabilistically). It thus complements Theo. \ref{th:conversion} for the deterministic case, possible only when the final process majorizes the initial one. In a sense, it is reminiscent of Vidal's theorem \cite{Vidal1999} for probabilistic single-copy entanglement conversions between arbitrary pure states. 
As anticipated, Eq. \eqref{eq:distillrate} follows as a corollary of Lem. \ref{th:prob_conversion}. Applying the lemma independently to each copy in $\kett{\mathbb w}^{\otimes N}$, taking the limit $N\to\infty$, and using the fact that $\boldsymbol p_{\rm qs}\coloneqq\{\frac12,\frac12\}$ proves the main result of this section:
\begin{cor}[Distillation of quantum control of causal orders]
\label{th:distillation}
Distillation of quantum control of causal orders exists. In fact, a perfect quantum switch can be distilled from any $\kett{\mathbb w}$ given by Eq. \eqref{eq:generalizedQS} at a rate $r=2\min\{p_0,p_1\}$.
\end{cor}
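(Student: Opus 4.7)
The plan is straightforward: invoke Lemma \ref{th:prob_conversion} with target process $\kett{\mathbb w_{\rm qs}}$, which corresponds to the distribution $\boldsymbol p' = \boldsymbol p_{\rm qs} = \{\tfrac12,\tfrac12\}$. Since the uniform binary distribution is majorized by every binary distribution, the hypothesis $\boldsymbol p \succcurlyeq \boldsymbol p'$ of Lemma \ref{th:prob_conversion} holds automatically for every generalized quantum switch $\kett{\mathbb w}$. Substituting into the explicit success probability in the lemma yields $p_{\rm success} = \min\{p_0,p_1\}/\min\{\tfrac12,\tfrac12\} = 2\min\{p_0,p_1\}$, so a single copy of $\kett{\mathbb w}$ is probabilistically converted into a single perfect quantum switch by a free operation with exactly this probability.

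Next, I would apply this single-copy filter independently to each of the $N$ tensor factors of $\kett{\mathbb w}^{\otimes N}$. The preamble to this section has already justified restricting to independent single-copy instruments on multiple copies, so the resulting compound transformation is itself a free operation built out of elementary transformations in the classes of Theorem \ref{th:free_ops} acting on each factor. Because the local filters are statistically independent, the number $K$ of successful conversions is binomially distributed with parameters $N$ and $p_{\rm success} = 2\min\{p_0,p_1\}$. For each of the $K$ successes one obtains a perfect $\kett{\mathbb w_{\rm qs}}$, and the remaining failed copies can simply be discarded by a free operation (tracing out the corresponding labs).

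Finally, I would close the argument via the weak law of large numbers: for any $\epsilon>0$, $\Pr[K/N \geq p_{\rm success} - \epsilon] \to 1$ as $N\to\infty$. Hence, in the asymptotic regime, one obtains at least $rN$ copies of $\kett{\mathbb w_{\rm qs}}$ with unit probability, where $r = 2\min\{p_0,p_1\}$. This establishes the claimed rate and, a fortiori, the existence of distillation of quantum control of causal orders for every $\kett{\mathbb w}$ with $p_0\neq p_1$ (and even, remarkably, for $\min\{p_0,p_1\}$ arbitrarily close to $0$, where $r>0$ is arbitrarily small but strictly positive).

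The main conceptual (rather than computational) obstacle is not the concentration-of-measure step, which is standard, but ensuring that the independent-copy framework introduced earlier in the section is indeed the right one: one must check that no cross-copy signaling is needed to implement the local filter on each factor, so that the tensor-product structure of both the process and the instruments is preserved, and the overall transformation remains within the allowed class of free operations of Theorem \ref{th:free_ops}. This follows because the filter from Lemma \ref{th:prob_conversion} acts only on Charlie's qubit of a single copy, and is therefore trivially compatible with the independent-instrument restriction.
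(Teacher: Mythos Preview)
Your proposal is correct and follows essentially the same approach as the paper: apply Lemma~\ref{th:prob_conversion} with target $\kett{\mathbb w_{\rm qs}}$ to get $p_{\rm success}=2\min\{p_0,p_1\}$, run this independently on each of the $N$ copies, and invoke the law of large numbers to obtain the asymptotic rate. The paper's proof is terser but identical in structure; your added discussion of the binomial concentration and the consistency of the independent-copy framework is accurate, though note that the free operation from Lemma~\ref{th:prob_conversion} is not \emph{only} the filter on Charlie's qubit---it also invokes the deterministic protocol of Theorem~\ref{th:conversion}, which involves local unitaries at $A$ and $B$ (and, in general, a PLS step)---but all of these still act on a single copy, so your compatibility argument stands.
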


The specialized reader may note that the rate in Cor. \ref{th:distillation} is in general lower than the corresponding optimal entanglement- \cite{Bennett1995,Bennett1996,Bennett1996a,Horodecki2009} and coherence-distillation \cite{Yuan2015,Winter2016} rates for states analogous to the processes in Eq. \eqref{eq:generalizedQS}. Yet, in both entanglement and coherence distillation, global operations on each subsystem's copies are exploited, whereas here only  independent single-copy instruments are used. Interestingly, it is also possible to distill quantum switches using certain global multi-copy instruments: In the App. \ref{sec:multicopy_dist}, we briefly describe a protocol based on single-copy instruments on the target system conditioned on multi-qubit measurements on the copies of Charlie's control. These more general instruments still yield licit free operations -- also compatible with tensor products of processes -- because they do not involve any inter-copy signaling for the labs with indefinite causal orders (Alice and Bob's). As instrument for Charlie, this protocol uses the well-known global measurement from optimal entanglement \cite{Bennett1996} and coherence \cite{Yuan2015,Winter2016} distillation protocols. However, after the measurement, these protocols require operations whose equivalent here are not free operations. So, the restriction on the target's instruments renders the resulting distillation rate lower than that of Cor. \ref{th:distillation}. 
Furthermore, one could even explore protocols that exploit inter-copy signaling, as certain restricted signaling arrangements still give rise to licit free operations. However, such exploration is outside the scope of the current work, and we leave the question of whether those more powerful free operations do actually yield rates higher than that of Cor. \ref{th:distillation} as an open problem.
In any case, it is remarkable that distilling causal nonseparability is possible at all.

\section{Final discussion}
\label{sec:conclusions}
We studied  processes displaying quantum coherence between opposite causal orders as an operational resource. In particular, we derived a unified resource theory of both causal nonseparability and quantum control of causal orders. This required a rigorous definition of the latter notion, which, curiously, was still missing. We provided one. Our operational framework is based on resource-free operations consisting of sequential concatenations of the input process with physically-meaningful causally separable processes. As applications, first, we established a sufficient condition for pure-process convertibility, mathematically captured by a simple majorization relationship. 
This orders a  broad, important subclass of 
processes into a hierarchy of quantum control of causal orders with the quantum switch at the top, thus giving the latter
 the status of basic unit of this exclusive form of causal nonseparability. Second, we proved that distillation of quantum control of causal orders exists, and provided an explicit simple protocol for it. Our machinery is versatile in that it applies to both the mindsets with and without a control register.

As further direct potential applications, we may for instance mention causal-nonseparability measures and a resource theory of quantum causal networks. As for measures, here we have focused on process conversions, but the framework also directly paves the way for quantifiers. From an axiomatic point of view, causal nonseparability monotones can now be defined by any function that is non-increasing under the  free operations proposed. Examples thereof could for instance be the relative entropy and robustness of causal nonseparability or the causally nonseparable weight, which could be defined analogously to in other resource theories \cite{Gallego2012,Gallego2015,Gallego2017,Amaral2018}. 
From a more operational viewpoint, in turn, Cor. \ref{th:distillation} gives a lower bound to the distillable causal nonseparability of generalized quantum switches. As for causal networks, notably, our machinery not only treats causal nonseparability and quantum control of causal orders in a unified way but it also contains the basis of an eventual resource theory of quantum causal networks. More precisely, the subclass $\mathsf{LOAE}$ of local operations assisted by ancillary entanglement preserves the causal structure (either $A\to B$ or $B\to A$) of any quantum causal model for the simplest non-trivial case of two nodes (Alice and Bob's). Straightforward generalizations of it to more nodes will automatically define free operations for quantum causal networks, where the resourceful objects consist of quantum causal models incompatible with some given multi-node causal structure.

Besides, there are several exciting open questions that arise from this work. First, it is not clear whether there exist physical pure processes with quantum control of causal orders (or, more generally, causal nonseparability) apart from those of Eq. \eqref{eq:generalizedQS}. 
Second, if there is no single total unit of causal nonseparability, what other inequivalent classes of causal nonseparability are there? Third, regarding conversions, an important question is whether causal nonseparability dilution -- the converse of distillation -- is possible or not. Finally, all these questions clearly depend on the class of free operations adopted. Hence, a fourth vast unknown territory  is free operations beyond the ones proposed here. In particular, two interesting problems are whether there are free operations of quantum control of causal orders that are not free with respect to causal nonseparability or entanglement or free operations involving inter-copy signaling that lead to more efficient distillation protocols than the ones studied here. 
All of these are fascinating venues for future research.

\section*{Acknowledgements}
We thank R. Chaves and J.  Pienaar  for discussions, and Mateus Ara\'{u}jo for helpful comments on the manuscript. We acknowledge financial support from the Brazilian agencies CNPq (PQ grant No. 311416/2015-2 and INCT-IQ), FAPERJ (PDR10 E-26/202.802/2016, JCN E-26/202.701/2018), CAPES (PROCAD2013), FAPESP, and the Serrapilheira Institute (grant number Serra-1709-17173).


\bibliography{Causal,Steering}

\begin{thebibliography}{58}%
\makeatletter
\providecommand \@ifxundefined [1]{%
 \@ifx{#1\undefined}
}%
\providecommand \@ifnum [1]{%
 \ifnum #1\expandafter \@firstoftwo
 \else \expandafter \@secondoftwo
 \fi
}%
\providecommand \@ifx [1]{%
 \ifx #1\expandafter \@firstoftwo
 \else \expandafter \@secondoftwo
 \fi
}%
\providecommand \natexlab [1]{#1}%
\providecommand \enquote  [1]{``#1''}%
\providecommand \bibnamefont  [1]{#1}%
\providecommand \bibfnamefont [1]{#1}%
\providecommand \citenamefont [1]{#1}%
\providecommand \href@noop [0]{\@secondoftwo}%
\providecommand \href [0]{\begingroup \@sanitize@url \@href}%
\providecommand \@href[1]{\@@startlink{#1}\@@href}%
\providecommand \@@href[1]{\endgroup#1\@@endlink}%
\providecommand \@sanitize@url [0]{\catcode `\\12\catcode `\$12\catcode
  `\&12\catcode `\#12\catcode `\^12\catcode `\_12\catcode `\%12\relax}%
\providecommand \@@startlink[1]{}%
\providecommand \@@endlink[0]{}%
\providecommand \url  [0]{\begingroup\@sanitize@url \@url }%
\providecommand \@url [1]{\endgroup\@href {#1}{\urlprefix }}%
\providecommand \urlprefix  [0]{URL }%
\providecommand \Eprint [0]{\href }%
\providecommand \doibase [0]{http://dx.doi.org/}%
\providecommand \selectlanguage [0]{\@gobble}%
\providecommand \bibinfo  [0]{\@secondoftwo}%
\providecommand \bibfield  [0]{\@secondoftwo}%
\providecommand \translation [1]{[#1]}%
\providecommand \BibitemOpen [0]{}%
\providecommand \bibitemStop [0]{}%
\providecommand \bibitemNoStop [0]{.\EOS\space}%
\providecommand \EOS [0]{\spacefactor3000\relax}%
\providecommand \BibitemShut  [1]{\csname bibitem#1\endcsname}%
\let\auto@bib@innerbib\@empty
\bibitem [{\citenamefont {Hardy}(2005)}]{Hardy2005}%
  \BibitemOpen
  \bibfield  {author} {\bibinfo {author} {\bibfnamefont {Lucien}\ \bibnamefont
  {Hardy}},\ }\bibfield  {title} {\enquote {\bibinfo {title} {{Probability
  Theories with Dynamic Causal Structure: A New Framework for Quantum
  Gravity}},}\ }\href {http://arxiv.org/abs/gr-qc/0509120} {\ ,\ \bibinfo
  {pages} {1--68} (\bibinfo {year} {2005})},\ \Eprint
  {http://arxiv.org/abs/0509120} {arXiv:0509120 [gr-qc]} \BibitemShut {NoStop}%
\bibitem [{\citenamefont {Hardy}(2007)}]{Hardy2007}%
  \BibitemOpen
  \bibfield  {author} {\bibinfo {author} {\bibfnamefont {Lucien}\ \bibnamefont
  {Hardy}},\ }\bibfield  {title} {\enquote {\bibinfo {title} {{Towards quantum
  gravity: A framework for probabilistic theories with non-fixed causal
  structure}},}\ }\href {\doibase 10.1088/1751-8113/40/12/S12} {\bibfield
  {journal} {\bibinfo  {journal} {Journal of Physics A: Mathematical and
  Theoretical}\ }\textbf {\bibinfo {volume} {40}},\ \bibinfo {pages}
  {3081--3099} (\bibinfo {year} {2007})},\ \Eprint
  {http://arxiv.org/abs/0608043} {arXiv:0608043 [gr-qc]} \BibitemShut {NoStop}%
\bibitem [{\citenamefont {Hardy}(2009)}]{Hardy2009}%
  \BibitemOpen
  \bibfield  {author} {\bibinfo {author} {\bibfnamefont {Lucien}\ \bibnamefont
  {Hardy}},\ }\bibfield  {title} {\enquote {\bibinfo {title} {{Quantum Gravity
  Computers: On the Theory of Computation with Indefinite Causal Structure}},}\
  \ }(\bibinfo {year} {2009})\ pp.\ \bibinfo {pages} {379--401},\ \Eprint
  {http://arxiv.org/abs/0701019} {arXiv:0701019 [quant-ph]} \BibitemShut
  {NoStop}%
\bibitem [{\citenamefont {Zych}\ \emph {et~al.}(2017)\citenamefont {Zych},
  \citenamefont {Costa}, \citenamefont {Pikovski},\ and\ \citenamefont
  {Brukner}}]{Zych2017}%
  \BibitemOpen
  \bibfield  {author} {\bibinfo {author} {\bibfnamefont {Magdalena}\
  \bibnamefont {Zych}}, \bibinfo {author} {\bibfnamefont {Fabio}\ \bibnamefont
  {Costa}}, \bibinfo {author} {\bibfnamefont {Igor}\ \bibnamefont {Pikovski}},
  \ and\ \bibinfo {author} {\bibfnamefont {Caslav}\ \bibnamefont {Brukner}},\
  }\bibfield  {title} {\enquote {\bibinfo {title} {{Bell's Theorem for Temporal
  Order}},}\ }\href {http://arxiv.org/abs/1708.00248} {\  (\bibinfo {year}
  {2017})},\ \Eprint {http://arxiv.org/abs/1708.00248} {arXiv:1708.00248}
  \BibitemShut {NoStop}%
\bibitem [{\citenamefont {Chiribella}(2012)}]{Chiribella2012}%
  \BibitemOpen
  \bibfield  {author} {\bibinfo {author} {\bibfnamefont {Giulio}\ \bibnamefont
  {Chiribella}},\ }\bibfield  {title} {\enquote {\bibinfo {title} {{Perfect
  discrimination of no-signalling channels via quantum superposition of causal
  structures}},}\ }\href {\doibase 10.1103/PhysRevA.86.040301} {\bibfield
  {journal} {\bibinfo  {journal} {Physical Review A - Atomic, Molecular, and
  Optical Physics}\ }\textbf {\bibinfo {volume} {86}},\ \bibinfo {pages} {1--5}
  (\bibinfo {year} {2012})},\ \Eprint {http://arxiv.org/abs/1109.5154}
  {arXiv:1109.5154} \BibitemShut {NoStop}%
\bibitem [{\citenamefont {Chiribella}\ \emph {et~al.}(2013)\citenamefont
  {Chiribella}, \citenamefont {D'Ariano}, \citenamefont {Perinotti},\ and\
  \citenamefont {Valiron}}]{Chiribella2013}%
  \BibitemOpen
  \bibfield  {author} {\bibinfo {author} {\bibfnamefont {Giulio}\ \bibnamefont
  {Chiribella}}, \bibinfo {author} {\bibfnamefont {Giacomo~Mauro}\ \bibnamefont
  {D'Ariano}}, \bibinfo {author} {\bibfnamefont {Paolo}\ \bibnamefont
  {Perinotti}}, \ and\ \bibinfo {author} {\bibfnamefont {Benoit}\ \bibnamefont
  {Valiron}},\ }\bibfield  {title} {\enquote {\bibinfo {title} {{Quantum
  computations without definite causal structure}},}\ }\href {\doibase
  10.1103/PhysRevA.88.022318} {\bibfield  {journal} {\bibinfo  {journal}
  {Physical Review A - Atomic, Molecular, and Optical Physics}\ }\textbf
  {\bibinfo {volume} {88}},\ \bibinfo {pages} {1--15} (\bibinfo {year}
  {2013})},\ \Eprint {http://arxiv.org/abs/arXiv:0912.0195v4}
  {arXiv:arXiv:0912.0195v4} \BibitemShut {NoStop}%
\bibitem [{\citenamefont {Chiribella}\ \emph {et~al.}(2008)\citenamefont
  {Chiribella}, \citenamefont {D'Ariano},\ and\ \citenamefont
  {Perinotti}}]{Chiribella2008}%
  \BibitemOpen
  \bibfield  {author} {\bibinfo {author} {\bibfnamefont {G.}~\bibnamefont
  {Chiribella}}, \bibinfo {author} {\bibfnamefont {G.~M.}\ \bibnamefont
  {D'Ariano}}, \ and\ \bibinfo {author} {\bibfnamefont {P.}~\bibnamefont
  {Perinotti}},\ }\bibfield  {title} {\enquote {\bibinfo {title} {{Quantum
  circuit architecture}},}\ }\href {\doibase 10.1103/PhysRevLett.101.060401}
  {\bibfield  {journal} {\bibinfo  {journal} {Physical Review Letters}\
  }\textbf {\bibinfo {volume} {101}},\ \bibinfo {pages} {1--4} (\bibinfo {year}
  {2008})},\ \Eprint {http://arxiv.org/abs/0712.1325} {arXiv:0712.1325}
  \BibitemShut {NoStop}%
\bibitem [{\citenamefont {Chiribella}\ \emph {et~al.}(2009)\citenamefont
  {Chiribella}, \citenamefont {D'Ariano},\ and\ \citenamefont
  {Perinotti}}]{Chiribella2009}%
  \BibitemOpen
  \bibfield  {author} {\bibinfo {author} {\bibfnamefont {Giulio}\ \bibnamefont
  {Chiribella}}, \bibinfo {author} {\bibfnamefont {Giacomo~Mauro}\ \bibnamefont
  {D'Ariano}}, \ and\ \bibinfo {author} {\bibfnamefont {Paolo}\ \bibnamefont
  {Perinotti}},\ }\bibfield  {title} {\enquote {\bibinfo {title} {{Theoretical
  framework for quantum networks}},}\ }\href {\doibase
  10.1103/PhysRevA.80.022339} {\bibfield  {journal} {\bibinfo  {journal}
  {Physical Review A - Atomic, Molecular, and Optical Physics}\ }\textbf
  {\bibinfo {volume} {80}},\ \bibinfo {pages} {1--20} (\bibinfo {year}
  {2009})},\ \Eprint {http://arxiv.org/abs/0904.4483} {arXiv:0904.4483}
  \BibitemShut {NoStop}%
\bibitem [{\citenamefont {Oreshkov}\ \emph {et~al.}(2012)\citenamefont
  {Oreshkov}, \citenamefont {Costa},\ and\ \citenamefont
  {Brukner}}]{Oreshkov2012}%
  \BibitemOpen
  \bibfield  {author} {\bibinfo {author} {\bibfnamefont {Ognyan}\ \bibnamefont
  {Oreshkov}}, \bibinfo {author} {\bibfnamefont {Fabio}\ \bibnamefont {Costa}},
  \ and\ \bibinfo {author} {\bibfnamefont {{\v{C}}aslav}\ \bibnamefont
  {Brukner}},\ }\bibfield  {title} {\enquote {\bibinfo {title} {{Quantum
  correlations with no causal order}},}\ }\href {\doibase 10.1038/ncomms2076}
  {\bibfield  {journal} {\bibinfo  {journal} {Nature Communications}\ }\textbf
  {\bibinfo {volume} {3}},\ \bibinfo {pages} {1092} (\bibinfo {year} {2012})},\
  \Eprint {http://arxiv.org/abs/1105.4464} {arXiv:1105.4464} \BibitemShut
  {NoStop}%
\bibitem [{\citenamefont {Ara{\'{u}}jo}\ \emph {et~al.}(2015)\citenamefont
  {Ara{\'{u}}jo}, \citenamefont {Branciard}, \citenamefont {Costa},
  \citenamefont {Feix}, \citenamefont {Giarmatzi},\ and\ \citenamefont
  {Brukner}}]{Araujo2015}%
  \BibitemOpen
  \bibfield  {author} {\bibinfo {author} {\bibfnamefont {Mateus}\ \bibnamefont
  {Ara{\'{u}}jo}}, \bibinfo {author} {\bibfnamefont {Cyril}\ \bibnamefont
  {Branciard}}, \bibinfo {author} {\bibfnamefont {Fabio}\ \bibnamefont
  {Costa}}, \bibinfo {author} {\bibfnamefont {Adrien}\ \bibnamefont {Feix}},
  \bibinfo {author} {\bibfnamefont {Christina}\ \bibnamefont {Giarmatzi}}, \
  and\ \bibinfo {author} {\bibfnamefont {{\v{C}}aslav}\ \bibnamefont
  {Brukner}},\ }\bibfield  {title} {\enquote {\bibinfo {title} {{Witnessing
  causal nonseparability}},}\ }\href {\doibase 10.1088/1367-2630/17/10/102001}
  {\bibfield  {journal} {\bibinfo  {journal} {New J. Phys.}\ }\textbf {\bibinfo
  {volume} {17}},\ \bibinfo {pages} {102001} (\bibinfo {year} {2015})},\
  \Eprint {http://arxiv.org/abs/1506.03776} {arXiv:1506.03776} \BibitemShut
  {NoStop}%
\bibitem [{\citenamefont {Oreshkov}\ and\ \citenamefont
  {Giarmatzi}(2016)}]{Oreshkov2016}%
  \BibitemOpen
  \bibfield  {author} {\bibinfo {author} {\bibfnamefont {Ognyan}\ \bibnamefont
  {Oreshkov}}\ and\ \bibinfo {author} {\bibfnamefont {Christina}\ \bibnamefont
  {Giarmatzi}},\ }\bibfield  {title} {\enquote {\bibinfo {title} {{Causal and
  causally separable processes}},}\ }\href {\doibase
  10.1088/1367-2630/18/9/093020} {\bibfield  {journal} {\bibinfo  {journal}
  {New Journal of Physics}\ }\textbf {\bibinfo {volume} {18}},\ \bibinfo
  {pages} {093020} (\bibinfo {year} {2016})},\ \Eprint
  {http://arxiv.org/abs/1506.05449} {arXiv:1506.05449} \BibitemShut {NoStop}%
\bibitem [{\citenamefont {MacLean}\ \emph {et~al.}(2017)\citenamefont
  {MacLean}, \citenamefont {Ried}, \citenamefont {Spekkens},\ and\
  \citenamefont {Resch}}]{MacLean2017}%
  \BibitemOpen
  \bibfield  {author} {\bibinfo {author} {\bibfnamefont {Jean-Philippe~W.}\
  \bibnamefont {MacLean}}, \bibinfo {author} {\bibfnamefont {Katja}\
  \bibnamefont {Ried}}, \bibinfo {author} {\bibfnamefont {Robert~W.}\
  \bibnamefont {Spekkens}}, \ and\ \bibinfo {author} {\bibfnamefont {Kevin~J.}\
  \bibnamefont {Resch}},\ }\bibfield  {title} {\enquote {\bibinfo {title}
  {{Quantum-coherent mixtures of causal relations}},}\ }\href {\doibase
  10.1038/ncomms15149} {\bibfield  {journal} {\bibinfo  {journal} {Nature
  Communications}\ }\textbf {\bibinfo {volume} {8}},\ \bibinfo {pages} {15149}
  (\bibinfo {year} {2017})}\BibitemShut {NoStop}%
\bibitem [{\citenamefont {Branciard}\ \emph {et~al.}(2015)\citenamefont
  {Branciard}, \citenamefont {Ara{\'{u}}jo}, \citenamefont {Feix},
  \citenamefont {Costa},\ and\ \citenamefont {Brukner}}]{Branciard2016}%
  \BibitemOpen
  \bibfield  {author} {\bibinfo {author} {\bibfnamefont {Cyril}\ \bibnamefont
  {Branciard}}, \bibinfo {author} {\bibfnamefont {Mateus}\ \bibnamefont
  {Ara{\'{u}}jo}}, \bibinfo {author} {\bibfnamefont {Adrien}\ \bibnamefont
  {Feix}}, \bibinfo {author} {\bibfnamefont {Fabio}\ \bibnamefont {Costa}}, \
  and\ \bibinfo {author} {\bibfnamefont {{\v{C}}aslav}\ \bibnamefont
  {Brukner}},\ }\bibfield  {title} {\enquote {\bibinfo {title} {{The simplest
  causal inequalities and their violation}},}\ }\href {\doibase
  10.1088/1367-2630/18/1/013008} {\bibfield  {journal} {\bibinfo  {journal}
  {New Journal of Physics}\ }\textbf {\bibinfo {volume} {18}},\ \bibinfo
  {pages} {013008} (\bibinfo {year} {2015})},\ \Eprint
  {http://arxiv.org/abs/1508.01704} {arXiv:1508.01704} \BibitemShut {NoStop}%
\bibitem [{\citenamefont {Ara{\'{u}}jo}\ \emph {et~al.}(2014)\citenamefont
  {Ara{\'{u}}jo}, \citenamefont {Costa},\ and\ \citenamefont
  {Brukner}}]{Araujo2014}%
  \BibitemOpen
  \bibfield  {author} {\bibinfo {author} {\bibfnamefont {Mateus}\ \bibnamefont
  {Ara{\'{u}}jo}}, \bibinfo {author} {\bibfnamefont {Fabio}\ \bibnamefont
  {Costa}}, \ and\ \bibinfo {author} {\bibfnamefont {{\v{C}}aslav}\
  \bibnamefont {Brukner}},\ }\bibfield  {title} {\enquote {\bibinfo {title}
  {{Computational Advantage from Quantum-Controlled Ordering of Gates}},}\
  }\href {\doibase 10.1103/PhysRevLett.113.250402} {\bibfield  {journal}
  {\bibinfo  {journal} {Physical Review Letters}\ }\textbf {\bibinfo {volume}
  {113}},\ \bibinfo {pages} {250402} (\bibinfo {year} {2014})},\ \Eprint
  {http://arxiv.org/abs/1401.8127} {arXiv:1401.8127} \BibitemShut {NoStop}%
\bibitem [{\citenamefont {Procopio}\ \emph {et~al.}(2015)\citenamefont
  {Procopio}, \citenamefont {Moqanaki}, \citenamefont {Ara{\'{u}}jo},
  \citenamefont {Costa}, \citenamefont {{Alonso Calafell}}, \citenamefont
  {Dowd}, \citenamefont {Hamel}, \citenamefont {Rozema}, \citenamefont
  {Brukner},\ and\ \citenamefont {Walther}}]{Procopio2014}%
  \BibitemOpen
  \bibfield  {author} {\bibinfo {author} {\bibfnamefont {Lorenzo~M.}\
  \bibnamefont {Procopio}}, \bibinfo {author} {\bibfnamefont {Amir}\
  \bibnamefont {Moqanaki}}, \bibinfo {author} {\bibfnamefont {Mateus}\
  \bibnamefont {Ara{\'{u}}jo}}, \bibinfo {author} {\bibfnamefont {Fabio}\
  \bibnamefont {Costa}}, \bibinfo {author} {\bibfnamefont {Irati}\ \bibnamefont
  {{Alonso Calafell}}}, \bibinfo {author} {\bibfnamefont {Emma~G.}\
  \bibnamefont {Dowd}}, \bibinfo {author} {\bibfnamefont {Deny~R.}\
  \bibnamefont {Hamel}}, \bibinfo {author} {\bibfnamefont {Lee~A.}\
  \bibnamefont {Rozema}}, \bibinfo {author} {\bibfnamefont {{\v{C}}aslav}\
  \bibnamefont {Brukner}}, \ and\ \bibinfo {author} {\bibfnamefont {Philip}\
  \bibnamefont {Walther}},\ }\bibfield  {title} {\enquote {\bibinfo {title}
  {{Experimental superposition of orders of quantum gates}},}\ }\href {\doibase
  10.1038/ncomms8913} {\bibfield  {journal} {\bibinfo  {journal} {Nature
  Communications}\ }\textbf {\bibinfo {volume} {6}},\ \bibinfo {pages} {7913}
  (\bibinfo {year} {2015})},\ \Eprint {http://arxiv.org/abs/1412.4006}
  {arXiv:1412.4006} \BibitemShut {NoStop}%
\bibitem [{\citenamefont {Gu{\'{e}}rin}\ \emph {et~al.}(2016)\citenamefont
  {Gu{\'{e}}rin}, \citenamefont {Feix}, \citenamefont {Ara{\'{u}}jo},\ and\
  \citenamefont {Brukner}}]{Guerin2016}%
  \BibitemOpen
  \bibfield  {author} {\bibinfo {author} {\bibfnamefont {Philippe~Allard}\
  \bibnamefont {Gu{\'{e}}rin}}, \bibinfo {author} {\bibfnamefont {Adrien}\
  \bibnamefont {Feix}}, \bibinfo {author} {\bibfnamefont {Mateus}\ \bibnamefont
  {Ara{\'{u}}jo}}, \ and\ \bibinfo {author} {\bibfnamefont {{\v{C}}aslav}\
  \bibnamefont {Brukner}},\ }\bibfield  {title} {\enquote {\bibinfo {title}
  {{Exponential Communication Complexity Advantage from Quantum Superposition
  of the Direction of Communication}},}\ }\href {\doibase
  10.1103/PhysRevLett.117.100502} {\bibfield  {journal} {\bibinfo  {journal}
  {Phys. Rev. Lett.}\ }\textbf {\bibinfo {volume} {117}},\ \bibinfo {pages}
  {100502} (\bibinfo {year} {2016})},\ \Eprint
  {http://arxiv.org/abs/1605.07372} {arXiv:1605.07372} \BibitemShut {NoStop}%
\bibitem [{\citenamefont {Ara{\'{u}}jo}\ \emph {et~al.}(2017)\citenamefont
  {Ara{\'{u}}jo}, \citenamefont {Feix}, \citenamefont {Navascu{\'{e}}s},\ and\
  \citenamefont {Brukner}}]{Araujo2017}%
  \BibitemOpen
  \bibfield  {author} {\bibinfo {author} {\bibfnamefont {Mateus}\ \bibnamefont
  {Ara{\'{u}}jo}}, \bibinfo {author} {\bibfnamefont {Adrien}\ \bibnamefont
  {Feix}}, \bibinfo {author} {\bibfnamefont {Miguel}\ \bibnamefont
  {Navascu{\'{e}}s}}, \ and\ \bibinfo {author} {\bibfnamefont {{\v{C}}aslav}\
  \bibnamefont {Brukner}},\ }\bibfield  {title} {\enquote {\bibinfo {title} {{A
  purification postulate for quantum mechanics with indefinite causal
  order}},}\ }\href {\doibase 10.22331/q-2017-04-26-10} {\bibfield  {journal}
  {\bibinfo  {journal} {Quantum}\ }\textbf {\bibinfo {volume} {1}},\ \bibinfo
  {pages} {10} (\bibinfo {year} {2017})},\ \Eprint
  {http://arxiv.org/abs/1611.08535} {arXiv:1611.08535} \BibitemShut {NoStop}%
\bibitem [{\citenamefont {Ebler}\ \emph {et~al.}(2018)\citenamefont {Ebler},
  \citenamefont {Salek},\ and\ \citenamefont {Chiribella}}]{Ebler2018}%
  \BibitemOpen
  \bibfield  {author} {\bibinfo {author} {\bibfnamefont {Daniel}\ \bibnamefont
  {Ebler}}, \bibinfo {author} {\bibfnamefont {Sina}\ \bibnamefont {Salek}}, \
  and\ \bibinfo {author} {\bibfnamefont {Giulio}\ \bibnamefont {Chiribella}},\
  }\bibfield  {title} {\enquote {\bibinfo {title} {{Enhanced Communication with
  the Assistance of Indefinite Causal Order}},}\ }\href {\doibase
  10.1103/PhysRevLett.120.120502} {\bibfield  {journal} {\bibinfo  {journal}
  {Physical Review Letters}\ }\textbf {\bibinfo {volume} {120}},\ \bibinfo
  {pages} {120502} (\bibinfo {year} {2018})},\ \Eprint
  {http://arxiv.org/abs/1711.10165} {arXiv:1711.10165} \BibitemShut {NoStop}%
\bibitem [{\citenamefont {Rubino}\ \emph {et~al.}(2017)\citenamefont {Rubino},
  \citenamefont {Rozema}, \citenamefont {Feix}, \citenamefont {Ara{\'{u}}jo},
  \citenamefont {Zeuner}, \citenamefont {Procopio}, \citenamefont {Brukner},\
  and\ \citenamefont {Walther}}]{Rubino2016}%
  \BibitemOpen
  \bibfield  {author} {\bibinfo {author} {\bibfnamefont {Giulia}\ \bibnamefont
  {Rubino}}, \bibinfo {author} {\bibfnamefont {Lee~A.}\ \bibnamefont {Rozema}},
  \bibinfo {author} {\bibfnamefont {Adrien}\ \bibnamefont {Feix}}, \bibinfo
  {author} {\bibfnamefont {Mateus}\ \bibnamefont {Ara{\'{u}}jo}}, \bibinfo
  {author} {\bibfnamefont {Jonas~M.}\ \bibnamefont {Zeuner}}, \bibinfo {author}
  {\bibfnamefont {Lorenzo~M.}\ \bibnamefont {Procopio}}, \bibinfo {author}
  {\bibfnamefont {{\v{C}}aslav}\ \bibnamefont {Brukner}}, \ and\ \bibinfo
  {author} {\bibfnamefont {Philip}\ \bibnamefont {Walther}},\ }\bibfield
  {title} {\enquote {\bibinfo {title} {{Experimental verification of an
  indefinite causal order}},}\ }\href {\doibase 10.1126/sciadv.1602589}
  {\bibfield  {journal} {\bibinfo  {journal} {Science Advances}\ }\textbf
  {\bibinfo {volume} {3}},\ \bibinfo {pages} {e1602589} (\bibinfo {year}
  {2017})},\ \Eprint {http://arxiv.org/abs/1608.01683} {arXiv:1608.01683}
  \BibitemShut {NoStop}%
\bibitem [{\citenamefont {Goswami}\ \emph {et~al.}(2018)\citenamefont
  {Goswami}, \citenamefont {Giarmatzi}, \citenamefont {Kewming}, \citenamefont
  {Costa}, \citenamefont {Branciard}, \citenamefont {Romero},\ and\
  \citenamefont {White}}]{Goswami2018}%
  \BibitemOpen
  \bibfield  {author} {\bibinfo {author} {\bibfnamefont {K.}~\bibnamefont
  {Goswami}}, \bibinfo {author} {\bibfnamefont {C.}~\bibnamefont {Giarmatzi}},
  \bibinfo {author} {\bibfnamefont {M.}~\bibnamefont {Kewming}}, \bibinfo
  {author} {\bibfnamefont {F.}~\bibnamefont {Costa}}, \bibinfo {author}
  {\bibfnamefont {C.}~\bibnamefont {Branciard}}, \bibinfo {author}
  {\bibfnamefont {J.}~\bibnamefont {Romero}}, \ and\ \bibinfo {author}
  {\bibfnamefont {A.~G.}\ \bibnamefont {White}},\ }\bibfield  {title} {\enquote
  {\bibinfo {title} {{Indefinite Causal Order in a Quantum Switch}},}\ }\href
  {\doibase 10.1103/PhysRevLett.121.090503} {\bibfield  {journal} {\bibinfo
  {journal} {Physical Review Letters}\ }\textbf {\bibinfo {volume} {121}},\
  \bibinfo {pages} {090503} (\bibinfo {year} {2018})},\ \Eprint
  {http://arxiv.org/abs/1803.04302} {arXiv:1803.04302} \BibitemShut {NoStop}%
\bibitem [{\citenamefont {Wei}\ \emph {et~al.}(2018)\citenamefont {Wei},
  \citenamefont {Tischler}, \citenamefont {Zhao}, \citenamefont {Li},
  \citenamefont {Arrazola}, \citenamefont {Liu}, \citenamefont {Zhang},
  \citenamefont {Li}, \citenamefont {You}, \citenamefont {Wang}, \citenamefont
  {Chen}, \citenamefont {Sanders}, \citenamefont {Zhang}, \citenamefont
  {Pryde}, \citenamefont {Xu},\ and\ \citenamefont {Pan}}]{Wei2018}%
  \BibitemOpen
  \bibfield  {author} {\bibinfo {author} {\bibfnamefont {Kejin}\ \bibnamefont
  {Wei}}, \bibinfo {author} {\bibfnamefont {Nora}\ \bibnamefont {Tischler}},
  \bibinfo {author} {\bibfnamefont {Si-ran}\ \bibnamefont {Zhao}}, \bibinfo
  {author} {\bibfnamefont {Yu-huai}\ \bibnamefont {Li}}, \bibinfo {author}
  {\bibfnamefont {Juan~Miguel}\ \bibnamefont {Arrazola}}, \bibinfo {author}
  {\bibfnamefont {Yang}\ \bibnamefont {Liu}}, \bibinfo {author} {\bibfnamefont
  {Weijun}\ \bibnamefont {Zhang}}, \bibinfo {author} {\bibfnamefont {Hao}\
  \bibnamefont {Li}}, \bibinfo {author} {\bibfnamefont {Lixing}\ \bibnamefont
  {You}}, \bibinfo {author} {\bibfnamefont {Zhen}\ \bibnamefont {Wang}},
  \bibinfo {author} {\bibfnamefont {Yu-ao}\ \bibnamefont {Chen}}, \bibinfo
  {author} {\bibfnamefont {Barry~C}\ \bibnamefont {Sanders}}, \bibinfo {author}
  {\bibfnamefont {Qiang}\ \bibnamefont {Zhang}}, \bibinfo {author}
  {\bibfnamefont {Geoff~J}\ \bibnamefont {Pryde}}, \bibinfo {author}
  {\bibfnamefont {Feihu}\ \bibnamefont {Xu}}, \ and\ \bibinfo {author}
  {\bibfnamefont {Jian-Wei}\ \bibnamefont {Pan}},\ }\href
  {http://arxiv.org/abs/1810.10238} {\enquote {\bibinfo {title} {{Experimental
  Quantum Switching for Exponentially Superior Quantum Communication
  Complexity}},}\ } (\bibinfo {year} {2018}),\ \Eprint
  {http://arxiv.org/abs/1810.10238v1} {arXiv:1810.10238v1} \BibitemShut
  {NoStop}%
\bibitem [{\citenamefont {Brand{\~{a}}o}\ and\ \citenamefont
  {Gour}(2015)}]{Brandao2015}%
  \BibitemOpen
  \bibfield  {author} {\bibinfo {author} {\bibfnamefont {Fernando G. S.~L.}\
  \bibnamefont {Brand{\~{a}}o}}\ and\ \bibinfo {author} {\bibfnamefont {Gilad}\
  \bibnamefont {Gour}},\ }\bibfield  {title} {\enquote {\bibinfo {title}
  {{Reversible Framework for Quantum Resource Theories}},}\ }\href {\doibase
  10.1103/PhysRevLett.115.070503} {\bibfield  {journal} {\bibinfo  {journal}
  {Phys. Rev. Lett.}\ }\textbf {\bibinfo {volume} {070503}},\ \bibinfo {pages}
  {1--5} (\bibinfo {year} {2015})}\BibitemShut {NoStop}%
\bibitem [{\citenamefont {Coecke}\ \emph {et~al.}(2016)\citenamefont {Coecke},
  \citenamefont {Fritz},\ and\ \citenamefont {Spekkens}}]{Coecke2016}%
  \BibitemOpen
  \bibfield  {author} {\bibinfo {author} {\bibfnamefont {Bob}\ \bibnamefont
  {Coecke}}, \bibinfo {author} {\bibfnamefont {Tobias}\ \bibnamefont {Fritz}},
  \ and\ \bibinfo {author} {\bibfnamefont {Robert~W.}\ \bibnamefont
  {Spekkens}},\ }\bibfield  {title} {\enquote {\bibinfo {title} {{A
  mathematical theory of resources}},}\ }\href {\doibase
  10.1016/j.ic.2016.02.008} {\bibfield  {journal} {\bibinfo  {journal}
  {Information and Computation}\ }\textbf {\bibinfo {volume} {250}},\ \bibinfo
  {pages} {59--86} (\bibinfo {year} {2016})},\ \Eprint
  {http://arxiv.org/abs/1409.5531} {arXiv:1409.5531} \BibitemShut {NoStop}%
\bibitem [{\citenamefont {Gallego}\ \emph {et~al.}(2012)\citenamefont
  {Gallego}, \citenamefont {W{\"{u}}rflinger}, \citenamefont {Ac{\'{i}}n},\
  and\ \citenamefont {Navascu{\'{e}}s}}]{Gallego2012}%
  \BibitemOpen
  \bibfield  {author} {\bibinfo {author} {\bibfnamefont {Rodrigo}\ \bibnamefont
  {Gallego}}, \bibinfo {author} {\bibfnamefont {Lars~Erik}\ \bibnamefont
  {W{\"{u}}rflinger}}, \bibinfo {author} {\bibfnamefont {Antonio}\ \bibnamefont
  {Ac{\'{i}}n}}, \ and\ \bibinfo {author} {\bibfnamefont {Miguel}\ \bibnamefont
  {Navascu{\'{e}}s}},\ }\bibfield  {title} {\enquote {\bibinfo {title}
  {{Operational Framework for Nonlocality}},}\ }\href {\doibase
  10.1103/PhysRevLett.109.070401} {\bibfield  {journal} {\bibinfo  {journal}
  {Phys. Rev. Lett.}\ }\textbf {\bibinfo {volume} {109}},\ \bibinfo {pages}
  {70401} (\bibinfo {year} {2012})},\ \Eprint {http://arxiv.org/abs/1112.2647}
  {arXiv:1112.2647} \BibitemShut {NoStop}%
\bibitem [{\citenamefont {Gallego}\ and\ \citenamefont
  {Aolita}(2015)}]{Gallego2015}%
  \BibitemOpen
  \bibfield  {author} {\bibinfo {author} {\bibfnamefont {Rodrigo}\ \bibnamefont
  {Gallego}}\ and\ \bibinfo {author} {\bibfnamefont {Leandro}\ \bibnamefont
  {Aolita}},\ }\bibfield  {title} {\enquote {\bibinfo {title} {{Resource theory
  of steering}},}\ }\href {\doibase 10.1103/PhysRevX.5.041008} {\bibfield
  {journal} {\bibinfo  {journal} {Phys. Rev. X}\ }\textbf {\bibinfo {volume}
  {5}},\ \bibinfo {pages} {1--19} (\bibinfo {year} {2015})},\ \Eprint
  {http://arxiv.org/abs/arXiv:1409.5804} {arXiv:arXiv:1409.5804} \BibitemShut
  {NoStop}%
\bibitem [{\citenamefont {Gallego}\ and\ \citenamefont
  {Aolita}(2017)}]{Gallego2017}%
  \BibitemOpen
  \bibfield  {author} {\bibinfo {author} {\bibfnamefont {Rodrigo}\ \bibnamefont
  {Gallego}}\ and\ \bibinfo {author} {\bibfnamefont {Leandro}\ \bibnamefont
  {Aolita}},\ }\bibfield  {title} {\enquote {\bibinfo {title} {{Nonlocality
  free wirings and the distinguishability between Bell boxes}},}\ }\href
  {\doibase 10.1103/PhysRevA.95.032118} {\bibfield  {journal} {\bibinfo
  {journal} {Physical Review A}\ }\textbf {\bibinfo {volume} {95}},\ \bibinfo
  {pages} {1--14} (\bibinfo {year} {2017})},\ \Eprint
  {http://arxiv.org/abs/1611.06932} {arXiv:1611.06932} \BibitemShut {NoStop}%
\bibitem [{\citenamefont {Amaral}\ \emph {et~al.}(2018)\citenamefont {Amaral},
  \citenamefont {Cabello}, \citenamefont {Cunha},\ and\ \citenamefont
  {Aolita}}]{Amaral2018}%
  \BibitemOpen
  \bibfield  {author} {\bibinfo {author} {\bibfnamefont {Barbara}\ \bibnamefont
  {Amaral}}, \bibinfo {author} {\bibfnamefont {Ad{\'{a}}n}\ \bibnamefont
  {Cabello}}, \bibinfo {author} {\bibfnamefont {Marcelo~Terra}\ \bibnamefont
  {Cunha}}, \ and\ \bibinfo {author} {\bibfnamefont {Leandro}\ \bibnamefont
  {Aolita}},\ }\bibfield  {title} {\enquote {\bibinfo {title} {{Noncontextual
  Wirings}},}\ }\href {\doibase 10.1103/PhysRevLett.120.130403} {\bibfield
  {journal} {\bibinfo  {journal} {Physical Review Letters}\ }\textbf {\bibinfo
  {volume} {120}},\ \bibinfo {pages} {130403} (\bibinfo {year}
  {2018})}\BibitemShut {NoStop}%
\bibitem [{\citenamefont {Choi}(1975)}]{Choi1975}%
  \BibitemOpen
  \bibfield  {author} {\bibinfo {author} {\bibfnamefont {Man~Duen}\
  \bibnamefont {Choi}},\ }\bibfield  {title} {\enquote {\bibinfo {title}
  {{Completely positive linear maps on complex matrices}},}\ }\href {\doibase
  10.1016/0024-3795(75)90075-0} {\bibfield  {journal} {\bibinfo  {journal}
  {Linear Algebra and Its Applications}\ }\textbf {\bibinfo {volume} {10}},\
  \bibinfo {pages} {285--290} (\bibinfo {year} {1975})}\BibitemShut {NoStop}%
\bibitem [{\citenamefont {Bengtsson}\ and\ \citenamefont
  {Zyczkowski}(2006)}]{Bengtsson2006}%
  \BibitemOpen
  \bibfield  {author} {\bibinfo {author} {\bibfnamefont {Ingemar}\ \bibnamefont
  {Bengtsson}}\ and\ \bibinfo {author} {\bibfnamefont {Karol}\ \bibnamefont
  {Zyczkowski}},\ }\href {\doibase 10.1017/CBO9780511535048} {\emph {\bibinfo
  {title} {{Geometry of Quantum States}}}}\ (\bibinfo  {publisher} {Cambridge
  University Press},\ \bibinfo {address} {Cambridge},\ \bibinfo {year}
  {2006})\BibitemShut {NoStop}%
\bibitem [{\citenamefont {Costa}\ and\ \citenamefont
  {Shrapnel}(2016)}]{Costa2016a}%
  \BibitemOpen
  \bibfield  {author} {\bibinfo {author} {\bibfnamefont {Fabio}\ \bibnamefont
  {Costa}}\ and\ \bibinfo {author} {\bibfnamefont {Sally}\ \bibnamefont
  {Shrapnel}},\ }\bibfield  {title} {\enquote {\bibinfo {title} {{Quantum
  causal modelling}},}\ }\href {\doibase 10.1088/1367-2630/18/6/063032}
  {\bibfield  {journal} {\bibinfo  {journal} {New Journal of Physics}\ }\textbf
  {\bibinfo {volume} {18}},\ \bibinfo {pages} {063032} (\bibinfo {year}
  {2016})}\BibitemShut {NoStop}%
\bibitem [{\citenamefont {Allen}\ \emph {et~al.}(2017)\citenamefont {Allen},
  \citenamefont {Barrett}, \citenamefont {Horsman}, \citenamefont {Lee},\ and\
  \citenamefont {Spekkens}}]{Allen2017}%
  \BibitemOpen
  \bibfield  {author} {\bibinfo {author} {\bibfnamefont {John-Mark~A.}\
  \bibnamefont {Allen}}, \bibinfo {author} {\bibfnamefont {Jonathan}\
  \bibnamefont {Barrett}}, \bibinfo {author} {\bibfnamefont {Dominic~C.}\
  \bibnamefont {Horsman}}, \bibinfo {author} {\bibfnamefont {Ciar{\'{a}}n~M.}\
  \bibnamefont {Lee}}, \ and\ \bibinfo {author} {\bibfnamefont {Robert~W.}\
  \bibnamefont {Spekkens}},\ }\bibfield  {title} {\enquote {\bibinfo {title}
  {{Quantum Common Causes and Quantum Causal Models}},}\ }\href {\doibase
  10.1103/PhysRevX.7.031021} {\bibfield  {journal} {\bibinfo  {journal}
  {Physical Review X}\ }\textbf {\bibinfo {volume} {7}},\ \bibinfo {pages}
  {031021} (\bibinfo {year} {2017})}\BibitemShut {NoStop}%
\bibitem [{\citenamefont {Feix}\ \emph {et~al.}(2016)\citenamefont {Feix},
  \citenamefont {Ara\'ujo},\ and\ \citenamefont {Brukner}}]{FAB16}%
  \BibitemOpen
  \bibfield  {author} {\bibinfo {author} {\bibfnamefont {Adrien}\ \bibnamefont
  {Feix}}, \bibinfo {author} {\bibfnamefont {Mateus}\ \bibnamefont {Ara\'ujo}},
  \ and\ \bibinfo {author} {\bibfnamefont {{\v{C}}aslav}\ \bibnamefont
  {Brukner}},\ }\bibfield  {title} {\enquote {\bibinfo {title} {Causally
  nonseparable processes admitting a causal model},}\ }\href@noop {} {\bibfield
   {journal} {\bibinfo  {journal} {New Journal of Physics}\ }\textbf {\bibinfo
  {volume} {18}},\ \bibinfo {pages} {083040} (\bibinfo {year}
  {2016})}\BibitemShut {NoStop}%
\bibitem [{\citenamefont {Horodecki}\ \emph {et~al.}(2009)\citenamefont
  {Horodecki}, \citenamefont {Horodecki}, \citenamefont {Horodecki},\ and\
  \citenamefont {Horodecki}}]{Horodecki2009}%
  \BibitemOpen
  \bibfield  {author} {\bibinfo {author} {\bibfnamefont {Ryszard}\ \bibnamefont
  {Horodecki}}, \bibinfo {author} {\bibfnamefont {Pawe{\l}}\ \bibnamefont
  {Horodecki}}, \bibinfo {author} {\bibfnamefont {Micha{\l}}\ \bibnamefont
  {Horodecki}}, \ and\ \bibinfo {author} {\bibfnamefont {Karol}\ \bibnamefont
  {Horodecki}},\ }\bibfield  {title} {\enquote {\bibinfo {title} {{Quantum
  entanglement}},}\ }\href {\doibase 10.1103/RevModPhys.81.865} {\bibfield
  {journal} {\bibinfo  {journal} {Reviews of Modern Physics}\ }\textbf
  {\bibinfo {volume} {81}},\ \bibinfo {pages} {865--942} (\bibinfo {year}
  {2009})},\ \Eprint {http://arxiv.org/abs/0702225} {arXiv:0702225 [quant-ph]}
  \BibitemShut {NoStop}%
\bibitem [{\citenamefont {Jungnitsch}\ \emph {et~al.}(2011)\citenamefont
  {Jungnitsch}, \citenamefont {Moroder},\ and\ \citenamefont
  {G{\"{u}}hne}}]{Jungnitsch2011}%
  \BibitemOpen
  \bibfield  {author} {\bibinfo {author} {\bibfnamefont {Bastian}\ \bibnamefont
  {Jungnitsch}}, \bibinfo {author} {\bibfnamefont {Tobias}\ \bibnamefont
  {Moroder}}, \ and\ \bibinfo {author} {\bibfnamefont {Otfried}\ \bibnamefont
  {G{\"{u}}hne}},\ }\bibfield  {title} {\enquote {\bibinfo {title} {{Taming
  Multiparticle Entanglement}},}\ }\href {\doibase
  10.1103/PhysRevLett.106.190502} {\bibfield  {journal} {\bibinfo  {journal}
  {Physical Review Letters}\ }\textbf {\bibinfo {volume} {106}},\ \bibinfo
  {pages} {190502} (\bibinfo {year} {2011})},\ \Eprint
  {http://arxiv.org/abs/1010.6049} {arXiv:1010.6049} \BibitemShut {NoStop}%
\bibitem [{\citenamefont {Aolita}\ \emph {et~al.}(2015)\citenamefont {Aolita},
  \citenamefont {de~Melo},\ and\ \citenamefont {Davidovich}}]{Aolita2015}%
  \BibitemOpen
  \bibfield  {author} {\bibinfo {author} {\bibfnamefont {Leandro}\ \bibnamefont
  {Aolita}}, \bibinfo {author} {\bibfnamefont {Fernando}\ \bibnamefont
  {de~Melo}}, \ and\ \bibinfo {author} {\bibfnamefont {Luiz}\ \bibnamefont
  {Davidovich}},\ }\bibfield  {title} {\enquote {\bibinfo {title} {{Open-system
  dynamics of entanglement:a key issues review}},}\ }\href {\doibase
  10.1088/0034-4885/78/4/042001} {\bibfield  {journal} {\bibinfo  {journal}
  {Reports Prog. Phys.}\ }\textbf {\bibinfo {volume} {78}},\ \bibinfo {pages}
  {042001} (\bibinfo {year} {2015})},\ \Eprint {http://arxiv.org/abs/1402.3713}
  {arXiv:1402.3713} \BibitemShut {NoStop}%
\bibitem [{\citenamefont {Davies}\ and\ \citenamefont
  {Lewis}(1970)}]{Davies1970}%
  \BibitemOpen
  \bibfield  {author} {\bibinfo {author} {\bibfnamefont {E.~B.}\ \bibnamefont
  {Davies}}\ and\ \bibinfo {author} {\bibfnamefont {J~T}\ \bibnamefont
  {Lewis}},\ }\bibfield  {title} {\enquote {\bibinfo {title} {{An operational
  approach to quantum probability}},}\ }\href {\doibase 10.1007/BF01647093}
  {\bibfield  {journal} {\bibinfo  {journal} {Communications in Mathematical
  Physics}\ }\textbf {\bibinfo {volume} {17}},\ \bibinfo {pages} {239--260}
  (\bibinfo {year} {1970})}\BibitemShut {NoStop}%
\bibitem [{Note1()}]{Note1}%
  \BibitemOpen
  \bibinfo {note} {One has in principle three different isomorphic ancilla
  spaces on Alice's side, one before the application of $U_{A_I}^{(j)}$, one
  between the application of $U_{A_I}^{(j)}$ and $U_{A_O}^{(j)}$, and one after
  $U_{A_O}^{(j)}$. Unlike the other variables, however, this ancilla is not
  matched to any Hilbert space outside $V_{AB}^{(j)}$, so we can use a single
  variable $\protect \mathaccentV {tilde}07EA$ for all of them. Same for
  $\protect \mathaccentV {tilde}07EB$.}\BibitemShut {Stop}%
\bibitem [{\citenamefont {Stinespring}(1955)}]{Stinespring1955}%
  \BibitemOpen
  \bibfield  {author} {\bibinfo {author} {\bibfnamefont {W~Forrest}\
  \bibnamefont {Stinespring}},\ }\bibfield  {title} {\enquote {\bibinfo {title}
  {{Positive functions on C*-algebras}},}\ }\href {\doibase
  10.1090/S0002-9939-1955-0069403-4} {\bibfield  {journal} {\bibinfo  {journal}
  {Proceedings of the American Mathematical Society}\ }\textbf {\bibinfo
  {volume} {6}},\ \bibinfo {pages} {211--211} (\bibinfo {year} {1955})},\
  \Eprint {http://arxiv.org/abs/9707028v2} {arXiv:9707028v2 [quant-ph]}
  \BibitemShut {NoStop}%
\bibitem [{Note2()}]{Note2}%
  \BibitemOpen
  \bibinfo {note} {{Strictly speaking, it does not define a valid process
  matrix, as it does not satisfy the necessary normalization conditions.
  However, it does define a physical process that can be implemented
  (probabilistically, with a probability depending on the instruments on the
  target) by post-selecting $|\protect \mathbb {w}_{\protect \rm qs}\delimiter
  "526930B \protect \tmspace -\thinmuskip {.1667em}\delimiter "526930B $ on a
  local measurement on $C$.}}\BibitemShut {Stop}%
\bibitem [{\citenamefont {Nielsen}(1999)}]{Nielsen1999}%
  \BibitemOpen
  \bibfield  {author} {\bibinfo {author} {\bibfnamefont {M.~A.}\ \bibnamefont
  {Nielsen}},\ }\bibfield  {title} {\enquote {\bibinfo {title} {{Conditions for
  a Class of Entanglement Transformations}},}\ }\href {\doibase
  10.1103/PhysRevLett.83.436} {\bibfield  {journal} {\bibinfo  {journal}
  {Physical Review Letters}\ }\textbf {\bibinfo {volume} {83}},\ \bibinfo
  {pages} {436--439} (\bibinfo {year} {1999})},\ \Eprint
  {http://arxiv.org/abs/9811053} {arXiv:9811053 [quant-ph]} \BibitemShut
  {NoStop}%
\bibitem [{\citenamefont {Du}\ \emph {et~al.}(2015)\citenamefont {Du},
  \citenamefont {Bai},\ and\ \citenamefont {Guo}}]{Du2015}%
  \BibitemOpen
  \bibfield  {author} {\bibinfo {author} {\bibfnamefont {Shuanping}\
  \bibnamefont {Du}}, \bibinfo {author} {\bibfnamefont {Zhaofang}\ \bibnamefont
  {Bai}}, \ and\ \bibinfo {author} {\bibfnamefont {Yu}~\bibnamefont {Guo}},\
  }\bibfield  {title} {\enquote {\bibinfo {title} {{Conditions for coherence
  transformations under incoherent operations}},}\ }\href {\doibase
  10.1103/PhysRevA.91.052120} {\bibfield  {journal} {\bibinfo  {journal}
  {Physical Review A}\ }\textbf {\bibinfo {volume} {91}},\ \bibinfo {pages}
  {052120} (\bibinfo {year} {2015})},\ \Eprint
  {http://arxiv.org/abs/1503.09176} {arXiv:1503.09176} \BibitemShut {NoStop}%
\bibitem [{\citenamefont {Dur}\ \emph {et~al.}(2000)\citenamefont {Dur},
  \citenamefont {Vidal},\ and\ \citenamefont {Cirac}}]{Dur2000}%
  \BibitemOpen
  \bibfield  {author} {\bibinfo {author} {\bibfnamefont {W.}~\bibnamefont
  {Dur}}, \bibinfo {author} {\bibfnamefont {G.}~\bibnamefont {Vidal}}, \ and\
  \bibinfo {author} {\bibfnamefont {J.~I.}\ \bibnamefont {Cirac}},\ }\bibfield
  {title} {\enquote {\bibinfo {title} {{Three qubits can be entangled in two
  inequivalent ways}},}\ }\href {\doibase 10.1103/PhysRevA.62.062314}
  {\bibfield  {journal} {\bibinfo  {journal} {Physical Review A - Atomic,
  Molecular, and Optical Physics}\ }\textbf {\bibinfo {volume} {62}},\ \bibinfo
  {pages} {062314--062311} (\bibinfo {year} {2000})},\ \Eprint
  {http://arxiv.org/abs/0005115} {arXiv:0005115 [quant-ph]} \BibitemShut
  {NoStop}%
\bibitem [{\citenamefont {Vrana}\ and\ \citenamefont
  {Christandl}(2015)}]{Vrana2015}%
  \BibitemOpen
  \bibfield  {author} {\bibinfo {author} {\bibfnamefont {P{\'{e}}ter}\
  \bibnamefont {Vrana}}\ and\ \bibinfo {author} {\bibfnamefont {Matthias}\
  \bibnamefont {Christandl}},\ }\bibfield  {title} {\enquote {\bibinfo {title}
  {{Asymptotic entanglement transformation between W and GHZ states}},}\ }\href
  {\doibase 10.1063/1.4908106} {\bibfield  {journal} {\bibinfo  {journal}
  {Journal of Mathematical Physics}\ }\textbf {\bibinfo {volume} {56}},\
  \bibinfo {pages} {022204} (\bibinfo {year} {2015})},\ \Eprint
  {http://arxiv.org/abs/1310.3244} {arXiv:1310.3244} \BibitemShut {NoStop}%
\bibitem [{\citenamefont {Walter}\ \emph {et~al.}(2016)\citenamefont {Walter},
  \citenamefont {Gross},\ and\ \citenamefont {Eisert}}]{Walter2016}%
  \BibitemOpen
  \bibfield  {author} {\bibinfo {author} {\bibfnamefont {Michael}\ \bibnamefont
  {Walter}}, \bibinfo {author} {\bibfnamefont {David}\ \bibnamefont {Gross}}, \
  and\ \bibinfo {author} {\bibfnamefont {Jens}\ \bibnamefont {Eisert}},\
  }\bibfield  {title} {\enquote {\bibinfo {title} {{Multi-partite
  entanglement}},}\ }\href {http://arxiv.org/abs/1612.02437} {\  (\bibinfo
  {year} {2016})},\ \Eprint {http://arxiv.org/abs/1612.02437}
  {arXiv:1612.02437} \BibitemShut {NoStop}%
\bibitem [{\citenamefont {Bennett}\ \emph
  {et~al.}(1996{\natexlab{a}})\citenamefont {Bennett}, \citenamefont
  {Brassard}, \citenamefont {Popescu}, \citenamefont {Schumacher},
  \citenamefont {Smolin},\ and\ \citenamefont {Wootters}}]{Bennett1995}%
  \BibitemOpen
  \bibfield  {author} {\bibinfo {author} {\bibfnamefont {Charles~H.}\
  \bibnamefont {Bennett}}, \bibinfo {author} {\bibfnamefont {Gilles}\
  \bibnamefont {Brassard}}, \bibinfo {author} {\bibfnamefont {Sandu}\
  \bibnamefont {Popescu}}, \bibinfo {author} {\bibfnamefont {Benjamin}\
  \bibnamefont {Schumacher}}, \bibinfo {author} {\bibfnamefont {John~A.}\
  \bibnamefont {Smolin}}, \ and\ \bibinfo {author} {\bibfnamefont {William~K.}\
  \bibnamefont {Wootters}},\ }\bibfield  {title} {\enquote {\bibinfo {title}
  {{Purification of Noisy Entanglement and Faithful Teleportation via Noisy
  Channels}},}\ }\href {\doibase 10.1103/PhysRevLett.76.722} {\bibfield
  {journal} {\bibinfo  {journal} {Physical Review Letters}\ }\textbf {\bibinfo
  {volume} {76}},\ \bibinfo {pages} {722--725} (\bibinfo {year}
  {1996}{\natexlab{a}})},\ \Eprint {http://arxiv.org/abs/9511027}
  {arXiv:9511027 [quant-ph]} \BibitemShut {NoStop}%
\bibitem [{\citenamefont {Bennett}\ \emph
  {et~al.}(1996{\natexlab{b}})\citenamefont {Bennett}, \citenamefont
  {Bernstein}, \citenamefont {Popescu},\ and\ \citenamefont
  {Schumacher}}]{Bennett1996}%
  \BibitemOpen
  \bibfield  {author} {\bibinfo {author} {\bibfnamefont {Charles~H.}\
  \bibnamefont {Bennett}}, \bibinfo {author} {\bibfnamefont {Herbert~J.}\
  \bibnamefont {Bernstein}}, \bibinfo {author} {\bibfnamefont {Sandu}\
  \bibnamefont {Popescu}}, \ and\ \bibinfo {author} {\bibfnamefont {Benjamin}\
  \bibnamefont {Schumacher}},\ }\bibfield  {title} {\enquote {\bibinfo {title}
  {{Concentrating partial entanglement by local operations}},}\ }\href
  {\doibase 10.1103/PhysRevA.53.2046} {\bibfield  {journal} {\bibinfo
  {journal} {Physical Review A}\ }\textbf {\bibinfo {volume} {53}},\ \bibinfo
  {pages} {2046--2052} (\bibinfo {year} {1996}{\natexlab{b}})},\ \Eprint
  {http://arxiv.org/abs/9511030} {arXiv:9511030 [quant-ph]} \BibitemShut
  {NoStop}%
\bibitem [{\citenamefont {Bennett}\ \emph
  {et~al.}(1996{\natexlab{c}})\citenamefont {Bennett}, \citenamefont
  {DiVincenzo}, \citenamefont {Smolin},\ and\ \citenamefont
  {Wootters}}]{Bennett1996a}%
  \BibitemOpen
  \bibfield  {author} {\bibinfo {author} {\bibfnamefont {Charles~H.}\
  \bibnamefont {Bennett}}, \bibinfo {author} {\bibfnamefont {David~P.}\
  \bibnamefont {DiVincenzo}}, \bibinfo {author} {\bibfnamefont {John~A.}\
  \bibnamefont {Smolin}}, \ and\ \bibinfo {author} {\bibfnamefont {William~K.}\
  \bibnamefont {Wootters}},\ }\bibfield  {title} {\enquote {\bibinfo {title}
  {{Mixed-state entanglement and quantum error correction}},}\ }\href {\doibase
  10.1103/PhysRevA.54.3824} {\bibfield  {journal} {\bibinfo  {journal}
  {Physical Review A}\ }\textbf {\bibinfo {volume} {54}},\ \bibinfo {pages}
  {3824--3851} (\bibinfo {year} {1996}{\natexlab{c}})},\ \Eprint
  {http://arxiv.org/abs/9604024} {arXiv:9604024 [quant-ph]} \BibitemShut
  {NoStop}%
\bibitem [{\citenamefont {Jia}\ and\ \citenamefont
  {Sakharwade}(2018)}]{Jia2018}%
  \BibitemOpen
  \bibfield  {author} {\bibinfo {author} {\bibfnamefont {Ding}\ \bibnamefont
  {Jia}}\ and\ \bibinfo {author} {\bibfnamefont {Nitica}\ \bibnamefont
  {Sakharwade}},\ }\bibfield  {title} {\enquote {\bibinfo {title} {{Tensor
  products of process matrices with indefinite causal structure}},}\ }\href
  {\doibase 10.1103/PhysRevA.97.032110} {\bibfield  {journal} {\bibinfo
  {journal} {Physical Review A}\ }\textbf {\bibinfo {volume} {97}},\ \bibinfo
  {pages} {032110} (\bibinfo {year} {2018})},\ \Eprint
  {http://arxiv.org/abs/1706.05532} {arXiv:1706.05532} \BibitemShut {NoStop}%
\bibitem [{\citenamefont {Gu{\'{e}}rin}\ \emph {et~al.}(2018)\citenamefont
  {Gu{\'{e}}rin}, \citenamefont {Krumm}, \citenamefont {Budroni},\ and\
  \citenamefont {Brukner}}]{Guerin2018}%
  \BibitemOpen
  \bibfield  {author} {\bibinfo {author} {\bibfnamefont {Philippe~Allard}\
  \bibnamefont {Gu{\'{e}}rin}}, \bibinfo {author} {\bibfnamefont {Marius}\
  \bibnamefont {Krumm}}, \bibinfo {author} {\bibfnamefont {Costantino}\
  \bibnamefont {Budroni}}, \ and\ \bibinfo {author} {\bibfnamefont
  {{\v{C}}aslav}\ \bibnamefont {Brukner}},\ }\href
  {http://arxiv.org/abs/1806.10374} {\enquote {\bibinfo {title} {{Composition
  rules for quantum processes: a no-go theorem}},}\ } (\bibinfo {year}
  {2018}),\ \Eprint {http://arxiv.org/abs/1806.10374} {arXiv:1806.10374}
  \BibitemShut {NoStop}%
\bibitem [{\citenamefont {Feix}\ and\ \citenamefont
  {Brukner}(2017)}]{Feix2017}%
  \BibitemOpen
  \bibfield  {author} {\bibinfo {author} {\bibfnamefont {Adrien}\ \bibnamefont
  {Feix}}\ and\ \bibinfo {author} {\bibfnamefont {{\v{C}}aslav}\ \bibnamefont
  {Brukner}},\ }\bibfield  {title} {\enquote {\bibinfo {title} {{Quantum
  superpositions of `common-cause' and `direct-cause' causal structures}},}\
  }\href {\doibase 10.1088/1367-2630/aa9b1a} {\bibfield  {journal} {\bibinfo
  {journal} {New Journal of Physics}\ }\textbf {\bibinfo {volume} {19}},\
  \bibinfo {pages} {123028} (\bibinfo {year} {2017})},\ \Eprint
  {http://arxiv.org/abs/1606.09241} {arXiv:1606.09241} \BibitemShut {NoStop}%
\bibitem [{\citenamefont {Zych}(2017)}]{ZychPhD}%
  \BibitemOpen
  \bibfield  {author} {\bibinfo {author} {\bibfnamefont {Magdalena}\
  \bibnamefont {Zych}},\ }\emph {\bibinfo {title} {{Quantum Systems under
  Gravitational Time Dilation}}},\ \href@noop {} {Ph.D. thesis} (\bibinfo
  {year} {2017})\BibitemShut {NoStop}%
\bibitem [{\citenamefont {Vidal}(1999)}]{Vidal1999}%
  \BibitemOpen
  \bibfield  {author} {\bibinfo {author} {\bibfnamefont {Guifr{\'{e}}}\
  \bibnamefont {Vidal}},\ }\bibfield  {title} {\enquote {\bibinfo {title}
  {{Entanglement of Pure States for a Single Copy}},}\ }\href {\doibase
  10.1103/PhysRevLett.83.1046} {\bibfield  {journal} {\bibinfo  {journal}
  {Physical Review Letters}\ }\textbf {\bibinfo {volume} {83}},\ \bibinfo
  {pages} {1046--1049} (\bibinfo {year} {1999})},\ \Eprint
  {http://arxiv.org/abs/9902033} {arXiv:9902033 [quant-ph]} \BibitemShut
  {NoStop}%
\bibitem [{\citenamefont {Yuan}\ \emph {et~al.}(2015)\citenamefont {Yuan},
  \citenamefont {Zhou}, \citenamefont {Cao},\ and\ \citenamefont
  {Ma}}]{Yuan2015}%
  \BibitemOpen
  \bibfield  {author} {\bibinfo {author} {\bibfnamefont {Xiao}\ \bibnamefont
  {Yuan}}, \bibinfo {author} {\bibfnamefont {Hongyi}\ \bibnamefont {Zhou}},
  \bibinfo {author} {\bibfnamefont {Zhu}\ \bibnamefont {Cao}}, \ and\ \bibinfo
  {author} {\bibfnamefont {Xiongfeng}\ \bibnamefont {Ma}},\ }\bibfield  {title}
  {\enquote {\bibinfo {title} {{Intrinsic randomness as a measure of quantum
  coherence}},}\ }\href {\doibase 10.1103/PhysRevA.92.022124} {\bibfield
  {journal} {\bibinfo  {journal} {Physical Review A}\ }\textbf {\bibinfo
  {volume} {92}},\ \bibinfo {pages} {022124} (\bibinfo {year} {2015})},\
  \Eprint {http://arxiv.org/abs/1505.04032v1} {arXiv:1505.04032v1} \BibitemShut
  {NoStop}%
\bibitem [{\citenamefont {Winter}\ and\ \citenamefont
  {Yang}(2016)}]{Winter2016}%
  \BibitemOpen
  \bibfield  {author} {\bibinfo {author} {\bibfnamefont {Andreas}\ \bibnamefont
  {Winter}}\ and\ \bibinfo {author} {\bibfnamefont {Dong}\ \bibnamefont
  {Yang}},\ }\bibfield  {title} {\enquote {\bibinfo {title} {{Operational
  Resource Theory of Coherence}},}\ }\href {\doibase
  10.1103/PhysRevLett.116.120404} {\bibfield  {journal} {\bibinfo  {journal}
  {Physical Review Letters}\ }\textbf {\bibinfo {volume} {116}},\ \bibinfo
  {pages} {120404} (\bibinfo {year} {2016})},\ \Eprint
  {http://arxiv.org/abs/1506.07975} {arXiv:1506.07975} \BibitemShut {NoStop}%
\bibitem [{\citenamefont {Popescu}\ and\ \citenamefont
  {Rohrlich}(1994)}]{Popescu1994}%
  \BibitemOpen
  \bibfield  {author} {\bibinfo {author} {\bibfnamefont {Sandu}\ \bibnamefont
  {Popescu}}\ and\ \bibinfo {author} {\bibfnamefont {Daniel}\ \bibnamefont
  {Rohrlich}},\ }\bibfield  {title} {\enquote {\bibinfo {title} {{Quantum
  nonlocality as an axiom}},}\ }\href {\doibase 10.1007/BF02058098} {\bibfield
  {journal} {\bibinfo  {journal} {Found. Phys.}\ }\textbf {\bibinfo {volume}
  {24}},\ \bibinfo {pages} {379--385} (\bibinfo {year} {1994})}\BibitemShut
  {NoStop}%
\bibitem [{\citenamefont {Marshall}\ \emph {et~al.}(2011)\citenamefont
  {Marshall}, \citenamefont {Olkin},\ and\ \citenamefont
  {Arnold}}]{Marshall2011}%
  \BibitemOpen
  \bibfield  {author} {\bibinfo {author} {\bibfnamefont {Albert~W.}\
  \bibnamefont {Marshall}}, \bibinfo {author} {\bibfnamefont {Ingram}\
  \bibnamefont {Olkin}}, \ and\ \bibinfo {author} {\bibfnamefont {Barry~C.}\
  \bibnamefont {Arnold}},\ }\href {\doibase 10.1007/978-0-387-68276-1} {\emph
  {\bibinfo {title} {{Inequalities: Theory of Majorization and Its
  Applications}}}},\ Springer Series in Statistics\ (\bibinfo  {publisher}
  {Springer New York},\ \bibinfo {address} {New York, NY},\ \bibinfo {year}
  {2011})\BibitemShut {NoStop}%
\bibitem [{\citenamefont {Ash}(1965)}]{Ash1965}%
  \BibitemOpen
  \bibfield  {author} {\bibinfo {author} {\bibfnamefont {Robert~B}\
  \bibnamefont {Ash}},\ }\href@noop {} {\emph {\bibinfo {title} {{Information
  Theory}}}},\ \bibinfo {edition} {1st}\ ed.\ (\bibinfo  {publisher} {Dover},\
  \bibinfo {year} {1965})\BibitemShut {NoStop}%
\bibitem [{\citenamefont {Cover}\ and\ \citenamefont
  {Thomas}(1991)}]{Cover1991}%
  \BibitemOpen
  \bibfield  {author} {\bibinfo {author} {\bibfnamefont {Thomas~M}\
  \bibnamefont {Cover}}\ and\ \bibinfo {author} {\bibfnamefont {Joy~a}\
  \bibnamefont {Thomas}},\ }\href {\doibase 10.1002/0471200611} {\emph
  {\bibinfo {title} {{Elements of Information Theory}}}},\ Wiley Series in
  Telecommunications\ (\bibinfo  {publisher} {John Wiley {\&} Sons, Inc.},\
  \bibinfo {address} {New York, USA},\ \bibinfo {year} {1991})\BibitemShut
  {NoStop}%
\end{thebibliography}%

\appendix{

\section{Conditions for $W$ and causal orders}
\label{app:norm_cond_W}

In order to formally describe many statements in the Appendices, we need to define an operation denoted by a subindex preceding an operator \cite{Araujo2015}: 
\begin{equation}
_XW:=(1/d_X) \ \mathbb1_X\!\otimes\!\Tr_XW \ ,
\label{eq:presub}
\end{equation}
where $d_X$ is the dimension of the arbitrary subspace $\mathsf H_X$. This operation replaces the original action of $W$ on subspace $\mathsf H_X$ by a trivial (and fully decorrelated) term.

For $W$ to be a valid process ($W\in\mathsf P$), the composition (link product) of $W\in\mathsf B(\mathsf H_{P}\otimes\mathsf H_{A_O}\otimes\mathsf H_{B_O}\otimes\mathsf H_{F}\otimes\mathsf H_{A_I}\otimes\mathsf H_{B_I}\otimes\mathsf H_{C})$ with any possible instruments applied by Alice and Bob (even those exploiting entangled ancillas between $A$ and $B$), must yield a valid CJ state on $\mathsf B(\mathsf H_{P}\otimes\mathsf H_{F}\otimes\mathsf H_{C})$ describing a CP TP global map from $\mathsf B(\mathsf H_{P})$ to $\mathsf B(\mathsf H_{F}\otimes\mathsf H_{C})$ \cite{Oreshkov2012,Araujo2015,Oreshkov2016}. These conditions hold iff $W$ obeys \cite{Araujo2015,Araujo2017} 
\begin{subequations}
\begin{align}
W &\geq 0 \label{eq:condPMpositive}\\
\Tr W &= d_P d_{A_O} d_{B_O} , \label{eq:condPMdim}\\
{}_{B_IB_OCF} W &= {}_{A_O} {}_{B_IB_OCF} W, \label{eq:condPMAOsignal}\\
{}_{A_IA_OCF} W &= {}_{B_O} {}_{A_IA_OCF} W, \label{eq:condPMBOsignal}\\
{}_{CF} W &= {}_{A_OCF} W + {}_{B_OCF} W - {}_{A_OB_OCF} W \label{eq:condPMnorm}\\
{}_{\boxtimes} {}_{CF}W& = {}_{P\boxtimes CF}W \ , \label{eq:condPMsignalP}
\end{align}
where $\boxtimes:=A_IA_OB_IB_O$. \label{eq:condPM}\end{subequations}
We can interpret some of these relations in terms of no-signaling restrictions. Let us take Eq.~\eqref{eq:condPMAOsignal} as an example. By stating that $_{B_IB_OCF}W$ is unchanged by replacing $A_O$ with a trivial, decorrelated input, one concludes that $A_O$ may only be correlated with the variables $B_I$, $B_O$, $C$, and $F$. 
As such, it cannot signal to any other variables, such as $A_I$. Since $A_I$ is in the past of $A_O$ and such signaling would yield causal loops, it is only natural that Eq.~\eqref{eq:condPMAOsignal} is a necessary condition for the validity of $W$. 
In fact, the causal order between $A$ and $B$ is formally defined in terms of such no-signaling restrictions. In general, a  process $W_{A\to B}$ is compatible \cite{Araujo2015} with the causal order $A\to B$ if, and only if,
\begin{subequations}\begin{equation}
{{}_{F}W_{A\to B} = {}_{B_O}{}_{F}W_{A\to B}} \ ,
\label{eq:AtoB}
\end{equation} as obeyed by $\kett{\mathbb 1_0}$. This relation only allows signaling from $B_O$ to $F$, precluding any signaling from $B_O$ to a variable belonging to lab $A$. In other words, this relation precludes signaling from $B$ to $A$, as expected. Analogously, a process $W_{B\to A}$ is compatible with the causal order $B\to A$ if, and only if, ${{}_{F}W_{A\to B} = {}_{A_O}{}_{F}W_{A\to B}}$,
\end{subequations}
a relation obeyed by $\kett{\mathbb 1_1}$ that forbids signaling from $A_O$ to lab $B$ in its causal past. 


\section{Proof of Theorem \ref{th:free_ops}}
\label{app:cond_free_V}

After presenting some useful preliminary results, we will break down the proof of Theorem \ref{th:free_ops} in two parts, that of $\mathsf{LOAE}$ (which uses the broader class $\mathsf{NSO}$) and that of $\mathsf{PLS}$. After proving that both classes map $\mathsf{P}$, $\mathsf{CS}$ and $\mathsf{NQC}$ onto themselves (Lemmas \ref{th:LOAEisfree} and \ref{th:PLSisfree}, respectively), Theorem \ref{th:free_ops} follows straightforwardly.
\subsection{Useful relations}
\noindent We will need the following result (``hopping''):
\begin{equation}
\Tr_X\left(_XW \ \ Y\right) = \Tr_X\left(W \ _XY\right) \ ,
\label{eq:hopping}
\end{equation}
that is, with respect to the inner product given by trace over (sub)space $X$, the operation given by subindex $_X$ is self-dual \cite{Araujo2015}. This is proven by starting
 from $\Tr_XW\ \Tr_XY$ and ``factoring out'' the second trace operator:
\begin{equation}
\frac1{d_X}\!\Tr_X\! W \Tr_X\! Y = \Tr_X\! \left(\frac{\mathbb1_X}{d_X}(\Tr_XW) Y\right) = \Tr_X(_XW \ Y) \ .
\label{eq:usproof}
\end{equation}
But the first expression in \eqref{eq:usproof} is completely symmetric on $W,Y$, so the same reasoning can be done with the first trace operator, yielding $\Tr_X(W \ _XY)$.

Additionally, given that $\mathcal{V}$ represents a CP TP map from $\mathsf B(\mathsf H_{A_I}\otimes\mathsf H_{A'_O}\otimes\mathsf H_{B_I}\otimes\mathsf H_{B'_O}\otimes\mathsf H_{C})$ to $\mathsf B(\mathsf H_{A'_I}\otimes\mathsf H_{A_O}\otimes\mathsf H_{B'_I}\otimes\mathsf H_{B_O}\otimes\mathsf H_{C'})$,  any valid $V$ must obey 
\begin{align}
\Tr V=& d_{A_I}d_{A'_O}d_{B_I}d_{B_O'}d_C \label{eq:condVdim}\\
 {}_{A_I'A_OB_I'B_OC'}V=& {}_{A_IA_O'B_IB_O'C \ A_I'A_OB_I'B_OC'}V \ . \label{eq:condVinputsoutputs}
\end{align}
Transformations of the form \eqref{eq:Vdecomposed} also separately obey
\begin{align}
							{}_{C'}V  &=			{}_{CC'}V \ , \label{eq:condVCseparately} \\
  {}_{A_I'A_OB_I'B_O}V  &= {}_{A_IA_O'B_IB_O' \ A_I'A_OB_I'B_O}V \ .
\label{eq:condVABseparately}
\end{align}

\subsection{LOAE and NSO}
In order to prove that $\mathsf{LOAE}$ is a class of free operations, we appeal to a broader class of \emph{nonsignaling operations}, $\mathsf{NSO}$, which forbids signaling from any of the $A$ variables of $V$ ($A_I,A_I',A_O',A_O$) to any of its $B$ variables  ($B_I,B_I',B_O',B_O$) and vice versa.
\begin{subequations}\begin{dfn}[Nonsignaling operations]
A process transformation $\mathcal{V}$ belongs to the class $\mathsf{NSO}$ if, and only if,
\label{def:NSO}
\begin{alignat}{4}
		{}_{A_O}\! V=&&			 {}_{A_O'A_O}\! V	\label{eq:condNSO_outputA}&\\
		{}_{B_O}\! V=&&			 {}_{B_O'B_O}\! V \label{eq:condNSO_outputB}&		\\
{}_{A_I'A_O}\! V=&&		{}_{A_IA_I'A_O}\! V	\label{eq:condNSO_inputA} &\\
{}_{B_I'B_O}\! V=&&		{}_{B_IB_I'B_O}\! V \label{eq:condNSO_inputB}	& \ .	
\end{alignat}
\end{dfn}
\noindent We notice that \label{eq:condNSO_AB}\end{subequations} Eqs.(\ref{eq:condNSO_outputA},\ref{eq:condNSO_outputB}) also exclude signaling from $A_O'$ ($B_O'$) to $A_I'$ ($B_I'$), preventing causal loops.
$\mathsf{NSO}$ is, in fact, a more general class than $\mathsf{LOAE}$ in two ways. First, it need not be separable in the $C|AB$ partition, allowing for coherent operations between $C$ and $AB$. Secondly, it allows for post-quantum resources, such as using a Popescu-Röhrlich box \cite{Popescu1994} to correlate outputs. Although mathematically well-defined, not all elements of $\mathsf{NSO}$ have a clear quantum-mechanical realization. However, the class $\mathsf{LOAE}$, which has a clear physical interpretation and parametrization, is shown to be a subset of $\mathsf{NSO}$, so that all properties proven for $\mathsf{NSO}$ are valid for $\mathsf{LOAE}$.
\begin{lem}$\mathsf{LOAE}\subseteq\mathsf{NSO}$
\label{th:LOAEinNSO}\end{lem}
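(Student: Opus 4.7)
The plan is to exploit the product structure hard-wired into Def.~\ref{def:LOAE} and reduce the four nonsignaling conditions in Def.~\ref{def:NSO} to elementary statements about Alice's and Bob's local instrument pieces, separately.

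First I would unfold the right-hand side of Eq.~\eqref{eq:def_LOAE} to expose an Alice--Bob tensor-product structure inside each $V^{(j)}_{AB}$. Since $U^{(j)}$ is by construction a tensor product of Alice's and Bob's unitaries on disjoint factors (including the separate ancilla factors $\tilde A$ and $\tilde B$), and since $\kett{\mathbb 1}_{\rm in}$ also factorizes across the $A|B$ bipartition, the only obstruction to an outright product form is the possible entanglement of $\ket{\Psi^{(j)}}_{\tilde A \tilde B}$. A Schmidt decomposition $\ket{\Psi^{(j)}}=\sum_k \sqrt{\lambda_k^{(j)}}\,\ket{\alpha_k^{(j)}}_{\tilde A}\ket{\beta_k^{(j)}}_{\tilde B}$ converts this obstruction into a sum of outer products, and carrying the $\tilde A, \tilde B$ partial traces through term by term yields $V^{(j)}_{AB}=\sum_{k,k'} c^{(j)}_{k,k'}\, M_A^{(j,k,k')}\otimes M_B^{(j,k,k')}$, where $M_A^{(j,k,k')}\in\mathsf B(\mathsf H_{A_I}\otimes\mathsf H_{A'_I}\otimes\mathsf H_{A'_O}\otimes\mathsf H_{A_O})$ arises solely from Alice's unitaries and the partial $\tilde A$ sandwich $\Tr_{\tilde A}[\,\cdot\,\ket{\alpha_k^{(j)}}\!\bra{\alpha_{k'}^{(j)}}\,]$, with $M_B^{(j,k,k')}$ its symmetric Bob counterpart. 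Plugging this back into Eq.~\eqref{eq:Vdecomposed} gives $V=\sum_{j,k,k'} c^{(j)}_{k,k'}\, V_C^{(j)}\otimes M_A^{(j,k,k')}\otimes M_B^{(j,k,k')}$.

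Second, since each pre-subscript operation appearing in Eqs.~\eqref{eq:condNSO_outputA}--\eqref{eq:condNSO_inputB} carries only Alice labels (resp.\ only Bob labels) and therefore acts exclusively on the Alice (resp.\ Bob) tensor factor, linearity immediately reduces the four conditions on $V$ to the two Alice conditions on every $M_A^{(j,k,k')}$ and the two Bob conditions on every $M_B^{(j,k,k')}$. I would then verify the Alice conditions directly from the structure of $M_A^{(j,k,k')}$ as the CJ representation of the composite map $U^{(j)}_{A_O}\circ U^{(j)}_{A_I}$ with the ancilla traced out. For Eq.~\eqref{eq:condNSO_outputA}: tracing $A_O$ together with the final $\tilde A$ amounts to tracing out the entire output of $U^{(j)}_{A_O}$, which by unitarity collapses to tracing out its entire input (the intermediate ancilla together with $A'_O$); consequently the effective $A'_I$-reduced channel is independent of $A'_O$, which is precisely ${}_{A_O}M_A={}_{A_O'A_O}M_A$. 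For Eq.~\eqref{eq:condNSO_inputA}: additionally tracing out $A'_I$ and then applying the same unitarity argument to $U^{(j)}_{A_I}$ leaves no open line that can carry information about $A_I$ forward, giving ${}_{A_I'A_O}M_A={}_{A_IA_I'A_O}M_A$. The passage between ``tracing out from the output of the channel'' and the pre-subscript on the CJ side is handled by the hopping identity~\eqref{eq:hopping}, and the Bob conditions follow by the same argument.

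The main difficulty I foresee is purely bookkeeping: carefully tracking which copies of the input Hilbert spaces hidden inside $\kett{\mathbb 1}_{\rm in}$ and $\mathbb 1_{\rm copy}$ are acted on by the unitaries and which play the role of inert Choi witnesses, and making sure that the Schmidt step does not spuriously glue Alice's and Bob's operators together through the ancilla indices (it does not, because $\tilde A$ and $\tilde B$ live in disjoint tensor factors that are each traced locally). Once the factorized form of $V^{(j)}_{AB}$ has been firmly established, the remaining content of the lemma is essentially the trace-preservation of the local unitaries $U^{(j)}_{A_I}$, $U^{(j)}_{A_O}$ and their Bob counterparts, which is manifest from their unitarity.
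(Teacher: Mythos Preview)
Your proposal is correct and rests on the same key observation as the paper's proof: once the full output space of $U^{(j)}_{A_O}$ (and subsequently of $U^{(j)}_{A_I}$) has been traced away, unitarity lets you replace that unitary by the identity, leaving the result proportional to $\mathbb1$ on the relevant input register. The paper, however, works directly on $V^{(j)}_{AB}$ without Schmidt-decomposing $\ket{\Psi^{(j)}}_{\tilde A\tilde B}$: since the pre-subscript operations in Eqs.~\eqref{eq:condNSO_outputA}--\eqref{eq:condNSO_inputB} carry only Alice (or only Bob) labels, they act trivially on Bob's (Alice's) tensor factors regardless of the ancilla entanglement, so your explicit $A|B$ factorization into $M_A\otimes M_B$ pieces is a harmless but unnecessary detour. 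The paper also does not invoke the hopping identity here, relying instead on the direct equivalence ${}_X W=W\Leftrightarrow W=\mathbb1_X\otimes(\cdots)$, which sidesteps much of the copy-space bookkeeping you anticipated as the main difficulty.
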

\begin{proof}
We will show that $V_{AB}^{(j)}$ parametrized as in Eq. \eqref{eq:def_LOAE} obeys Eqs.\eqref{eq:condNSO_AB}. By linearity, the same will hold for $V=\sum_jV_C^{(j)}\otimes V_{AB}^{(j)}$. We first notice that invariance under the application of a subindex operator is equivalent to a trivial dependence on the corresponding partition, or $W={}_XW \Leftrightarrow W \propto \mathbb1_X$, i.e. $W$ is a tensor product of $\mathbb1_X$ and operators on the space of the remaining variables.

Let us begin by Eq.~\eqref{eq:condNSO_outputA}. Taking the partial trace $\Tr_{A_O}V_{AB}^{(j)}$ on Eq. \eqref{eq:def_LOAE}, the entire output space of $U^{(j)}_{A_O}:\mathsf H_{A'_O}\otimes\mathsf H_{\tilde A}\to\mathsf H_{A_O}\otimes\mathsf H_{\tilde A}$ is traced out.
In this case, the basis independence of the trace allows us to replace $U_{A_O}^{(j)}$ for an identity. The action on $\mathsf H_{A_O'}\otimes\mathsf H_{A_O}$ then reduces to
$\kettbbra{\mathbb1}{\mathbb1}_{A_O'A_O}$. Since 
$\Tr_{A_O}\left[\kettbbra{\mathbb1}{\mathbb1}_{A_O'A_O}\right] = \mathbb1_{A_O'}$,
then $\Tr_{A_O}V_{AB}^{(j)}\propto\mathbb1_{A_O'}$, or ${}_{A_O}V_{AB}^{(j)}={}_{A_O'A_O}V_{AB}^{(j)}$. The demonstration of Eq.~\eqref{eq:condNSO_outputB} is analogous, interchanging $A$ and $B$.

Next, we derive Eq.~\eqref{eq:condNSO_inputA} for LOAEs. This time, we take the partial trace $\Tr_{A_I'A_O}V_{AB}^{(j)}$ on Eq. \eqref{eq:def_LOAE}, and, analogously,
$U_{A_O}^{(j)}\otimes U_{A_I}^{(j)}$ can be replaced by an identity map due to the basis independence of the trace. The action on $\mathsf H_{A_I}\otimes\mathsf H_{A_I'}$ reduces to
$\kettbbra{\mathbb1}{\mathbb1}_{A_IA_I'}$ and since 
$\Tr_{A_I'}\left[\kettbbra{\mathbb1}{\mathbb1}_{A_IA_I'}\right] = \mathbb1_{A_I}$, then
$\Tr_{A_I'A_O}V_{AB}^{(j)}\propto\mathbb1_{A_I}$, or ${}_{A_I'A_O}V_{AB}^{(j)}={}_{A_IA_I'A_O}V_{AB}^{(j)}$. The demonstration of Eq.~\eqref{eq:condNSO_inputB} is analogous, interchanging $A$ and $B$.
\end{proof}

\begin{lem}
$\mathsf{NSO}$ is a class of free operations of causal nonseparability and of quantum control of causal orders, i.e., it maps $\mathsf{P}$, $\mathsf{CS}$ and $\mathsf{NQC}$ onto themselves.
\label{th:NSOisfree}
\end{lem}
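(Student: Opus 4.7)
The plan is to verify each of the three preservation claims in turn, reducing much of the work to algebraic manipulations with the link product and the subindex operator ${}_X(\cdot)$ defined in Eq.~\eqref{eq:presub}.

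First, for the preservation of $\mathsf{P}$, I would check that $V*W$ satisfies each of the defining relations \eqref{eq:condPMpositive}--\eqref{eq:condPMsignalP} (with the primed labels of the final process). Positivity is automatic, since the link product of two positive operators is positive. The dimension condition \eqref{eq:condPMdim} follows by matching $\Tr V = d_{A_I}d_{A'_O}d_{B_I}d_{B'_O}d_C$ against $\Tr W = d_P d_{A_O} d_{B_O}$ and tracking which spaces are contracted by $*$. The signaling conditions are the substantive step: to establish, for instance, ${}_{B'_I B'_O C' F}(V*W) = {}_{A'_O}{}_{B'_I B'_O C' F}(V*W)$, I would push the subindex $_{A'_O}$ through the link product via the hopping identity \eqref{eq:hopping}, apply the NSO condition \eqref{eq:condNSO_outputA} (which blocks $A'_O$-signaling within $V$), and invoke the analogous no-signaling structure of $W$ given by \eqref{eq:condPMAOsignal}. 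The remaining conditions follow in the same spirit, with compatibility with $P\to(A,B)$ in \eqref{eq:condPMsignalP} being essentially trivial since $V$ acts as the identity on $P$.

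Second, for preservation of $\mathsf{CS}$, I would follow the stronger strategy announced in the main text: show that the extremal subsets of causal processes with fixed order are individually preserved. Concretely, if $W_{A\to B}$ satisfies \eqref{eq:AtoB}, i.e.\ ${}_F W_{A\to B} = {}_{B_O}{}_F W_{A\to B}$, I would verify ${}_F(V*W_{A\to B}) = {}_{B'_O}{}_F(V*W_{A\to B})$ for the transformed process. The key ingredient is the NSO condition \eqref{eq:condNSO_outputB}, which, combined once more with \eqref{eq:hopping}, transports the $_{B_O}$ subindex from inside the link product out to the $B'_O$ variable of the final process, matching the pre-existing no-signaling structure of $W_{A\to B}$. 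The analogous computation handles $W_{B\to A}$, and linearity of the link product upgrades the claim to all of $\mathsf{CS}$.

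Third, for $\mathsf{NQC}$, since $\mathsf{NQC}=\mathrm{Conv}(\mathsf{CS}\cup\mathsf{S})$, by linearity of $V*\cdot$ it suffices to add preservation of $\mathsf{S}$. Using the decomposition \eqref{eq:Vdecomposed}, if $W\in\mathsf{S}$ has tripartite reduction $\Tr_{PF}[W]=\sum_k q_k\,\varrho_C^{(k)}\otimes W_{AB}^{(k)}$, then because $V=\sum_j V_C^{(j)}\otimes V_{AB}^{(j)}$ acts trivially on $P$ and $F$, the corresponding reduction of $V*W$ again factorizes in the $C|AB$ bipartition as a convex combination of products $\bigl(V_C^{(j)}*\varrho_C^{(k)}\bigr)\otimes\bigl(V_{AB}^{(j)}*W_{AB}^{(k)}\bigr)$, which belongs to $\mathsf{S}$. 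Combined with the $\mathsf{CS}$-preservation of Part~2, linearity then closes the $\mathsf{NQC}$-preservation argument.

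The main obstacle I anticipate is the bookkeeping in the signaling calculations of Parts~1 and~2: the subindex operator ${}_X$, the hopping identity, and the link product all act on very similar tensor-factor structures, and one must keep careful track of which isomorphic copy (primed or unprimed) each subindex refers to at every step. Once the correct correspondence between NSO conditions on $V$ and process-matrix conditions on $W$ is set up, each signaling identity for $V*W$ falls out by one or two applications of \eqref{eq:hopping}. A secondary subtlety arises in Part~3, where one must ensure that the CQ decomposition \eqref{eq:Vdecomposed} is still available within $\mathsf{NSO}$; the text indicates it is, so only the $A|B$ no-signaling in Def.~\ref{def:NSO} genuinely distinguishes $\mathsf{NSO}$ from $\mathsf{LOAE}$, and the argument goes through unchanged.
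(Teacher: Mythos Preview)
Your overall strategy matches the paper's: preserve the fixed causal orders $A\to B$ and $B\to A$ individually to get $\mathsf{CS}$, use the separable form \eqref{eq:Vdecomposed} for $\mathsf{S}$ and hence $\mathsf{NQC}$, and verify \eqref{eq:condPM} condition by condition for $\mathsf{P}$. The $\mathsf{CS}$ and $\mathsf{NQC}$ parts are essentially what the paper does.

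However, two steps in your $\mathsf{P}$-preservation plan are genuinely mis-estimated. First, the trace condition \eqref{eq:condPMdim} does \emph{not} follow by matching $\Tr V$ against $\Tr W$: the link product gives $\Tr(V*W)=\Tr_{PF\boxtimes\boxtimes'CC'}[W\,V^{T_\boxtimes T_C}]$, which is not a product of the two traces unless one factor is proportional to the identity on the contracted space. Second, condition \eqref{eq:condPMsignalP} is not ``essentially trivial'' just because $V$ acts as the identity on $P$: what you need is that ${}_{\boxtimes'C'F}(V*W)$ is trivial on $P$, and to reach the hypothesis ${}_{\boxtimes CF}W={}_{P\boxtimes CF}W$ on $W$ you must convert the $\boxtimes'$-reduction on the output side of $V$ into a $\boxtimes$-reduction on $W$, which requires all four NSO conditions \eqref{eq:condNSO_AB} and condition \eqref{eq:condPMAOsignal} on $W$. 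The same obstruction appears when you try to verify \eqref{eq:condPMAOsignal} for $V*W$.

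The paper resolves all three cases with a device you did not anticipate: first prove the identity assuming $W$ has definite causal order $A\to B$ (i.e.\ ${}_FW={}_{B_OF}W$), so that the chain of hopping moves terminates cleanly; then lift the assumption by invoking the normalization condition \eqref{eq:condPMnorm} to write ${}_{CF}W$ as a signed sum of three terms ${}_{A_OCF}W+{}_{B_OCF}W-{}_{A_OB_OCF}W$, each of which carries enough ``definite-order'' structure for the causal-case computation to apply. Without this decomposition-via-\eqref{eq:condPMnorm} step, the hopping argument for a general $W$ stalls, because you cannot pass from a $B'_O$ (or $A'_O$) subindex on $V$ all the way to a full $\boxtimes$ subindex on $W$ without first acquiring a $B_O$ (or $A_O$) on $W$.
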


\begin{proof}
\begin{subequations}
Let us first show that Eqs. \eqref{eq:condNSO_AB} preserve the causal orders $A\to B$ and $B\to A$, i.e., show that
\label{eq:preservABBA}
\begin{align}
{}_{F}W = {}_{B_OF}W &\Rightarrow {}_{F}(V*W) = {}_{B'_OF}(V*W) \label{eq:preservAB}\\
{}_{F}W = {}_{A_OF}W &\Rightarrow {}_{F}(V*W) = {}_{A'_OF}(V*W) \label{eq:preservBA}\ .
\end{align}\end{subequations}
Given that $V*W = \Tr_{\boxtimes C}[W \, V^{T_\boxtimes T_C}]$, where $\boxtimes:=A_IA_OB_IB_O$, \begin{subequations} we prove Eq. \eqref{eq:preservAB} via the equations indicated in the parentheses and the ``hopping'' result, Eq.\eqref{eq:hopping}:
\begin{align}
		{}_{F} \left(V*W\right) \,= \,\, &\Tr_{\boxtimes C}[{}_{F} W \ V^{T_\boxtimes T_C}] \label{eq:LOAEpreserveNSa}\\
\,\substack{\eqref{eq:preservAB}\\=} \,\, &\Tr_{\boxtimes C}[{}_{B_OF} W\ V^{T_\boxtimes T_C}] \nonumber\\
		{}_{F} \left(V*W\right) \,\substack{{\rm hop}\\=}\, &\Tr_{\boxtimes C}[{}_{F} W\,\,{}_{B_O}V^{T_\boxtimes T_C}] \label{eq:LOAEpreserveNSc}\\
\,\substack{\eqref{eq:condNSO_outputB}\\=}\, &\Tr_{\boxtimes C}[{}_{F} W \ {}_{B_O' B_O}V^{T_\boxtimes T_C}] \nonumber\\
\,=\, \ &{}_{B_O'}\Tr_{\boxtimes C}[{}_{F} W\ {}_{B_O}V^{T_\boxtimes T_C}]  \nonumber\\ 
\,\substack{\eqref{eq:LOAEpreserveNSc}\\=}\ &{}_{B_O'}{}_F(V*W) . \label{eq:LOAEpreserveNSf}
\end{align}\label{eq:LOAEpreserveNS}\end{subequations}
Eq. \eqref{eq:preservBA} is proven analogously. As such, from Eq. \eqref{eq:caussep} we see that for $\mathsf{NSO}$ preserves $\mathsf{CS}$. Because of the separable structure of Eq.\eqref{eq:Vdecomposed}, $\mathsf{NSO}$ also preserves $\mathsf{S}$. By linearity, $\mathrm{Conv}(\mathsf{CS}\cup\mathsf{S})=\mathsf{NQC}$ is preserved as well. 

We are left with the lengthier task of showing that $\mathsf{NSO}$ preserves $\mathsf P$, i.e., showing that the  validity constraints of Eq. \eqref{eq:condPM} are preserved under $\mathsf{NSO}$. 
The positivity constraint \eqref{eq:condPMpositive} is straightforward, since the link product preserves positivity.

The dimensionality constraint \eqref{eq:condPMdim} will initially be shown to be preserved by the simpler case of $W$ in a causal order ${A\to B}$ ($_FW={}_{B_OF}W$). We calculate ${}_{PF\boxtimes'C'}(V*W)$, where $\boxtimes':=A_I'A_O'B_I'B_O'$, from which the trace can be taken:\medskip\\
\begin{subequations}\label{eq:trpreserv}
${}_{PF\boxtimes'C'}\Tr_{\boxtimes C}(W  V^{T_\boxtimes T_C}) =$\vspace{-.6\baselineskip}
\begin{alignat}{4}
&=&{}_{P}\Tr_{\boxtimes C}(&&{}_FW \ && {}_{\boxtimes'C'}V^{T_\boxtimes T_C})\label{eq:trpreserva}\\
&\substack{\eqref{eq:condVCseparately}\\=}&	{}_{P}\Tr_{\boxtimes C}	(&&{}_{B_OF}W \ && {}_{C\boxtimes'C'}V^{T_\boxtimes T_C})  \label{eq:trpreservb}\\
& \substack{\rm hop\\=} &					{}_{P}\Tr_{\boxtimes C}	(&&{}_{CF}W \ && {}_{B_O}{}_{\boxtimes'C'}V^{T_\boxtimes T_C}) \label{eq:trpreservc}\\ 
&	\substack{\eqref{eq:condNSO_inputB}\\=}&{}_{P}\Tr_{\boxtimes C}(&&{}_{CF}W \ && {}_{B_IB_O}{}_{\boxtimes'C'}V^{T_\boxtimes T_C})\label{eq:trpreservd} \\
&	\substack{\rm hop\\=} &					{}_{P}\Tr_{\boxtimes C}	(&&{}_{B_IB_OCF}W \ && {}_{\boxtimes'C'}V^{T_\boxtimes T_C})\label{eq:trpreserve}\\
&	\substack{\eqref{eq:condPMAOsignal}\\=}&{}_{P}\Tr_{\boxtimes C}	(&&{}_{A_OB_IB_OCF}W \ && {}_{\boxtimes'C'}V^{T_\boxtimes T_C})\label{eq:trpreservf}\\
&	\substack{\rm hop\\=}&{}_{P}\Tr_{\boxtimes C}	(&&{}_{B_IB_OCF}W \ &&{}_{A_O}{}_{\boxtimes'C'}V^{T_\boxtimes T_C})\label{eq:trpreservg}\\
&	\substack{\eqref{eq:condNSO_inputA}\\=}&{}_{P}\Tr_{\boxtimes C}(&&{}_{B_IB_OCF}W \ && {}_{A_IA_O}{}_{\boxtimes'C'}V^{T_\boxtimes T_C})\label{eq:trpreservh}\\
&	\substack{\rm hop\\=}&{}_{P}\Tr_{\boxtimes C}	(&&{}_{F}W \ &&{}_{\boxtimes C}{}_{\boxtimes'C'}V^{T_\boxtimes T_C})\label{eq:trpreservi}\\
&	=&{}_{PF\boxtimes'C'}&&\Tr_{\boxtimes C}(W \ &&{}_{\boxtimes C}V^{T_\boxtimes T_C}) \ . \label{eq:trpreservj2}
\end{alignat}
Taking the trace of this expression, using the definition \eqref{eq:presub} and the fact that the subindex operator is TP, we find
\begin{align}
\Tr&(V*W) =\nonumber\\
&= \Tr_{PF\boxtimes\boxtimes'CC'} \left(W \ \ \frac{\mathbb1_\boxtimes\mathbb1_C}{d_\boxtimes d_C} {}_\otimes \Tr_{\boxtimes C}V^{T_\boxtimes T_C}\right) \label{eq:trpreservk}\\
& = (\Tr_{PF\boxtimes C}W)\frac{1}{d_\boxtimes d_C} (\Tr_{\boxtimes\boxtimes'CC'}V^{T_\boxtimes T_C}) \label{eq:trpreservl}\\
&	= d_Pd_{A_O}d_{B_O} \frac{1}{d_\boxtimes d_C} d_{A_O'}d_{B_O'}d_{A_I}d_{B_I}d_C \ , \label{eq:trpreservm}
\end{align}
where in the last line Eqs. (\ref{eq:condPMdim},\ref{eq:condVdim}) were used. We then obtain $\Tr(V*W)=d_P d_{A_O'} d_{B_O'}$, as desired.
If $W$ is not in the causal order $A\to B$, the demonstration changes as follows: instead of Eq. \eqref{eq:trpreservb}, we obtain a term identical to Eq. \eqref{eq:trpreservc} but without the $B_O$ subscript.
Using
Eq. \eqref{eq:condPMnorm},\end{subequations}
\begin{align}
{}_{PF\boxtimes'C'}(V*W)&= 
       {}_P\Tr_{\boxtimes C}	({}_{A_OCF}W 		\ {}_{\boxtimes'C'}V^{T_\boxtimes T_C}) \nonumber\\
		& +{}_P\Tr_{\boxtimes C}	({}_{B_OCF}W 		\ {}_{\boxtimes'C'}V^{T_\boxtimes T_C}) \label{eq:trpreservGen} \\
		& -{}_P\Tr_{\boxtimes C}	({}_{A_OB_OCF}W 	\ {}_{\boxtimes'C'}V^{T_\boxtimes T_C}) \ . \nonumber
\end{align}
For the first term on the right-hand side, the demonstration can be carried out as above, switching the roles of $A$ and $B$. For the second and third, all calculations on Eqs.\eqref{eq:trpreserv} are valid. As such, all three terms are equal to $d_P d_{A_O'} d_{B_O'}$. Given their signs, we obtain $\Tr(V*W)=d_P d_{A_O'} d_{B_O'}$, as before. 

Let us now prove that Eq. \eqref{eq:condPMAOsignal} is preserved. Once again we begin by assuming ${}_FW={}_{B_OF}W$ and afterwards lift that assumption. Firstly, ${}_{B_I'B_O'C'F}(V * W)$ can be shown to equal ${}_{B_I'B_O'C'}\Tr_{\boxtimes C}({}_{A_OB_IB_OCF}W \ V^{T_\boxtimes T_C})$ using Eqs. (\ref{eq:condNSO_inputB},\ref{eq:condPMAOsignal}) as done in Eqs.(\ref{eq:trpreserva}-\ref{eq:trpreservf}). So we can write ${}_{B_I'B_O'C'F}(V * W)$ as
\begin{subequations}\label{eq:nolooppreserv}
\begin{alignat}{4}
&&{}_{B_I'B_O'C'}\!\Tr_{\boxtimes C}(&&{}_{A_OB_IB_OCF}W && V^{T_\boxtimes T_C})\label{eq:nolooppreserva}\\
& \substack{\rm hop\\=} &	{}_{B_I'B_O'C'}\!\Tr_{\boxtimes C}(&&{}_{B_IB_OCF}W && {}_{A_O}V^{T_\boxtimes T_C}) \label{eq:nolooppreservb}\\ 
&	\substack{\eqref{eq:condNSO_outputA}\\=}&{}_{B_I'B_O'C'}\!\Tr_{\boxtimes C}(&&{}_{B_IB_OCF}W && {}_{A_O'A_O}V^{T_\boxtimes T_C})\label{eq:nolooppreservc} \\
&	= &	{}_{A_O'B_I'B_O'C'}\!\Tr_{\boxtimes C}(&&{}_{B_IB_OCF}W && {}_{A_O}V^{T_\boxtimes T_C})\label{eq:nolooppreservd}
\end{alignat}\end{subequations}
and comparing Eq. \eqref{eq:nolooppreservd} with Eq. \eqref{eq:nolooppreservb}, we find that ${}_{B_I'B_O'C'F}(V * W)={}_{A_O'B_I'B_O'C'F}(V * W)$, as desired. 
If $W$ is not compatible with $A\to B$, we arrive via Eqs. \eqref{eq:condPMnorm}, \eqref{eq:condVCseparately} at the three-term expression
\begin{align}
{}_{C'F}(V*W)= {}_{C'}\Tr_{\boxtimes C}({}_{A_OCF}W\ V^{T_\boxtimes T_C}) &\label{eq:nolooppreservGen} \\
					 			+		{}_{C'}\Tr_{\boxtimes C}({}_{B_OCF}W\ V^{T_\boxtimes T_C}) &\nonumber \\
								-   {}_{C'}\Tr_{\boxtimes C}({}_{A_OB_OCF}W\ V^{T_\boxtimes T_C}) &\ . \nonumber
\end{align}
The steps in Eqs. (\ref{eq:trpreserva}-\ref{eq:trpreservf}) apply to the second term on the right-hand side. To the first and third terms on the right-hand side, we can directly apply the steps in Eqs.(\ref{eq:nolooppreserv}). All three terms, then, equal ${}_{A_O'B_I'B_O'F}(W * V)$ and, due to their signs, ${}_{B_I'B_O'F}(W * V)={}_{A_O'B_I'B_O'F}(W * V)$ in general. The demonstration that condition \eqref{eq:condPMBOsignal} is preserved follows analogously, switching $A$ and $B$ throughout.

The preservation of condition \eqref{eq:condPMnorm} is demonstrated as follows. First let us notice that ${}_{C'F}(V*W)=({}_{C'}V*{}_{F}W)=({}_{C'}V*{}_{CF}W)={}_{C'}(V*{}_{CF}W)$, then
\begin{subequations}\label{eq:normpreserv}
\begin{align}
{}_{A_O'C'F}(V*W) + {}_{B_O'C'F}(V*W)& - {}_{A_O'B_O'C'F}(V*W) =  \nonumber \\
={}_{C'}\Tr_{\boxtimes C}[({}_{A_OCF}W+{}_{B_OCF}&W-{}_{A_OB_OCF}W) \nonumber\\
 ({}_{A_O'}V + {}_{B_O'}V - {}_{A_O'B_O'}& V)^{T_\boxtimes T_C}] \ ,   \label{eq:normpreserva}
\end{align}
where Eq. \eqref{eq:condPMnorm} has been applied to $W$. On the nine resulting terms we apply Eqs. (\ref{eq:hopping}, \ref{eq:condNSO_outputA},\ref{eq:condNSO_outputB}) to eliminate $A_O'$, $B_O'$ whenever possible, and we see that six of these terms  cancel out, leading to
\begin{align}
&{}_{A_O'C'F}(V*W) \!+\! {}_{B_O'C'F}(V*W) \!-\! {}_{A_O'B_O'C'F}(V*W) \!= \nonumber\\
&={}_{C'}\Tr_{\boxtimes C}[{}_{CF}W \ ({}_{A_O}V \!+\! {}_{B_O}V \!-\! {}_{A_OB_O} V)^{T_\boxtimes T_C}]  \\
&\substack{{\rm hop}\\=}{}_{C'}\Tr_{\boxtimes C}[({}_{A_OCF}W \!+\! {}_{B_OCF}W \!-\! {}_{A_OB_OCF} W) \ V^{T_\boxtimes T_C}]  \nonumber\\
&\substack{\eqref{eq:condPMnorm}\\=}{}_{C'}(V*{}_{CF}W)= {}_{C'F}(V*W) \ . 
\end{align}
\end{subequations}

Finally, to prove that condition \eqref{eq:condPMsignalP} is preserved, we once again begin by assuming $W$ compatible with $A\to B$ (${}_FW={}_{B_OF}W$) and later lift the assumption:
\begin{subequations}\label{eq:Psignalpreserv}
\begin{alignat}{4}
&&{}_{\boxtimes'C'F}(V*W)=&&&&\nonumber\\
&&=\Tr_{\boxtimes C}(&&{}_{F}W \ && {}_{\boxtimes'C'}V^{T_\boxtimes T_C})\label{eq:Psignalpreserva}\\
&&\substack{\eqref{eq:condVCseparately}\\=}\Tr_{\boxtimes C}(&&{}_{B_OF}W \ && {}_{C\boxtimes'C'}V^{T_\boxtimes T_C})\label{eq:Psignalpreservb}\\
&& \substack{\rm hop\\=}	\Tr_{\boxtimes C}(&&{}_{CF}W \ && {}_{B_O\boxtimes'C'}V^{T_\boxtimes T_C}) \label{eq:Psignalpreservc}\\ 
&&	\substack{\eqref{eq:condNSO_inputB}\\=}\Tr_{\boxtimes C}(&&{}_{CF}W \ &&\!\! {}_{B_IB_O\boxtimes'C'}V^{T_\boxtimes T_C})\label{eq:Psignalpreservd} \\
&& \substack{\rm hop\\=}\Tr_{\boxtimes C}(&&{}_{B_IB_OCF}W \ && {}_{\boxtimes'C'}V^{T_\boxtimes T_C})\label{eq:Psignalpreserve} \\
&& \substack{\eqref{eq:condPMAOsignal}\\=}\Tr_{\boxtimes C}(&&{}_{A_OB_IB_OCF}W \ && {}_{\boxtimes'C'}V^{T_\boxtimes T_C})\label{eq:Psignalpreservf}\\
&& \substack{\rm hop\\=}\Tr_{\boxtimes C}(&&{}_{B_IB_OCF}W \ && {}_{A_O\boxtimes'C'}V^{T_\boxtimes T_C})\label{eq:Psignalpreservg} \\
&& \substack{\eqref{eq:condNSO_inputA}\\=}\Tr_{\boxtimes C}(&&{}_{B_IB_OCF}W \ &&\!\! {}_{A_IA_O\boxtimes'C'}V^{T_\boxtimes T_C})\label{eq:Psignalpreservh}\\
&& \substack{\rm hop\\=}\Tr_{\boxtimes C}(&&{}_{\boxtimes CF}W \ && {}_{\boxtimes'C'}V^{T_\boxtimes T_C})\label{eq:Psignalpreservi}\\
&& \substack{\eqref{eq:condPMsignalP}\\=}\Tr_{\boxtimes C}(&&{}_{P\boxtimes CF}W \ && {}_{\boxtimes'C'}V^{T_\boxtimes T_C})\label{eq:Psignalpreservj}\\ 
&& = {}_{P\boxtimes'C'F}&&(V*W) \ \ \ &&  . \label{eq:Psignalpreservk}
\end{alignat}
If $W$ is not compatible with $A\to B$, we have, instead of Eq. \eqref{eq:Psignalpreservb}, the three-term expression [due to Eq. \eqref{eq:condPMnorm}]
\begin{align}
{}_{C'F}(V*W)&= {}_{C'}\Tr_{\boxtimes C}( {}_{A_OCF}W \ {}_{\boxtimes'}V^{T_\boxtimes T_C}) \nonumber \\
&               + {}_{C'}\Tr_{\boxtimes C}(   {}_{B_OCF}W \ {}_{\boxtimes'}V^{T_\boxtimes T_C}) 
\label{eq:PsignalpreservGen}\\
&				       	- {}_{C'}\Tr_{\boxtimes C}({}_{A_OB_OCF}W \ {}_{\boxtimes'}V^{T_\boxtimes T_C})\ . \nonumber
\end{align}
The calculation above can be done directly on the second term on the right-hand side and is also valid on the third. To the first term on the right-hand side, we can apply the same steps as above, but switching $A$ and $B$ throughout. All three terms, then, equal ${}_{P\boxtimes'C'F}(V*W)$ and, due to their signs, ${}_{\boxtimes'C'F}(V*W)={}_{P\boxtimes'C'F}(V*W)$.
\end{subequations}

We have then showed that $\mathsf{NSO}$ preserves $\mathsf P$, along with $\mathsf{CS}$ and $\mathsf{NQC}$.
\end{proof}
\noindent As a straightforward consequence of Lemmas \ref{th:LOAEinNSO} and \ref{th:NSOisfree},
\begin{cor}
$\mathsf{LOAE}$ is a class of free operations of causal nonseparability and of quantum control of causal orders, i.e., it maps $\mathsf{P}$, $\mathsf{CS}$ and $\mathsf{NQC}$ onto themselves.
\label{th:LOAEisfree}\end{cor}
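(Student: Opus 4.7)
The plan is to obtain this as an immediate consequence of the two preceding lemmas, with no further computation required. First I would invoke Lemma \ref{th:LOAEinNSO}, which establishes the inclusion $\mathsf{LOAE}\subseteq\mathsf{NSO}$: each $V^{(j)}_{AB}$ built according to Eq. \eqref{eq:def_LOAE} satisfies the four nonsignaling conditions \eqref{eq:condNSO_outputA}--\eqref{eq:condNSO_inputB}, and by linearity of the decomposition \eqref{eq:Vdecomposed} the full $V=\sum_j V^{(j)}_C\otimes V^{(j)}_{AB}$ inherits them. Second, I would invoke Lemma \ref{th:NSOisfree}, which asserts that every transformation in $\mathsf{NSO}$ maps $\mathsf{P}$, $\mathsf{CS}$, and $\mathsf{NQC}$ into themselves.

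Combining these two facts, any $\mathcal{V}\in\mathsf{LOAE}$ is in particular an element of $\mathsf{NSO}$ and therefore inherits the three invariance properties in question. There is no genuine obstacle here, since the substantive work was already carried out in Lemma \ref{th:NSOisfree}, where the preservation of each constraint in Eq. \eqref{eq:condPM}, together with the causal-order-preservation identities \eqref{eq:preservAB}--\eqref{eq:preservBA} and the separability of $V$ in the $C|AB$ bipartition, were all verified. The corollary simply specializes that conclusion from the broader class $\mathsf{NSO}$ to the physically motivated subclass $\mathsf{LOAE}$, and its one-line proof amounts to writing down this chain of inclusions.
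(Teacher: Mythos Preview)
Your proposal is correct and matches the paper's own argument exactly: the paper states the corollary as ``a straightforward consequence of Lemmas \ref{th:LOAEinNSO} and \ref{th:NSOisfree}'' with no additional proof, which is precisely the chain of inclusions you describe.
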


\subsection{Probabilistic Lab Swaps}

\begin{lem}
$\mathsf{PLS}$ is a class of free operations of causal nonseparability and of quantum control of causal orders, i.e., it maps $\mathsf{P}$, $\mathsf{CS}$ and $\mathsf{NQC}$ onto themselves.
\label{th:PLSisfree}\end{lem}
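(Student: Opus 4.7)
My plan is to decompose the analysis by the elementary action of each pure process $V_{AB}^{(j)}$ in \eqref{eq:Vdecomposed}, treating separately the identity process $\kettbbra{\mathbb 1_{AB}}{\mathbb 1_{AB}}$ and the swap process $\kettbbra{\mathbb s_{AB}}{\mathbb s_{AB}}$, and then lift by linearity to the mixture weighted by Charlie's sub-normalized instrument outcomes $V_C^{(j)}$. The key observation is that each elementary $V_{AB}^{(j)}$ is a ``pure wiring'' whose link product with $W$ simply relabels the target-system indices: the identity $\kett{\mathbb 1_{AB}}$ induces the trivial map $(A_I, A_O, B_I, B_O) \mapsto (A'_I, A'_O, B'_I, B'_O)$, while $\kett{\mathbb s_{AB}}$ implements the permutation $(A_I, A_O, B_I, B_O) \mapsto (B'_I, B'_O, A'_I, A'_O)$, thereby globally interchanging Alice's and Bob's roles.

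To show that $\mathsf{P}$ is preserved, I would note that all of the defining conditions \eqref{eq:condPMpositive}--\eqref{eq:condPMsignalP} are manifestly invariant (or mapped between themselves, as with \eqref{eq:condPMAOsignal} and \eqref{eq:condPMBOsignal}) under the permutation $A\leftrightarrow B$. Consequently, both $\kettbbra{\mathbb 1_{AB}}{\mathbb 1_{AB}} *_{AB} W$ and $\kettbbra{\mathbb s_{AB}}{\mathbb s_{AB}} *_{AB} W$ are valid process matrices on the primed labs whenever $W\in\mathsf P$. The full mixture $V*W = \sum_j V_C^{(j)} \otimes V_{AB}^{(j)} * W$ then inherits validity because the $\{V_C^{(j)}\}_j$ form a CP TP instrument on $C$, so the conditions \eqref{eq:condPMnorm} and \eqref{eq:condPMsignalP} involving $C$ close additively under the sum.

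Preservation of $\mathsf{CS}$ would follow from the same relabeling argument applied to the causal-order condition \eqref{eq:AtoB}: the identity sends any $W_{A\to B}$ to an $A\to B$-compatible process and any $W_{B\to A}$ to a $B\to A$-compatible process, while the swap exchanges these two classes. Hence, for $W = pW_{A\to B}+(1-p)W_{B\to A}$, every term of $V*W$ is a weighted combination of processes with a definite (possibly permuted) causal order, and thus $V*W\in\mathsf{CS}$. For $\mathsf{NQC}=\mathrm{Conv}(\mathsf{CS}\cup\mathsf{S})$, I would further exploit the separable tensor structure of $V$ in \eqref{eq:Vdecomposed}: plugging a separable $W$ as in \eqref{eq:def_sep} into $V*W$ and using bilinearity of the link product yields another sum of products in the $C'|A'B'$ bipartition, so $\mathsf{S}$ is preserved; combined with preservation of $\mathsf{CS}$ and linearity on convex combinations, the convex hull $\mathsf{NQC}$ is preserved as well.

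The main technical obstacle is the careful bookkeeping of the six index identifications involved in the swap link product and of the corresponding permutation of the normalization conditions in \eqref{eq:condPM}; once one verifies the $A\leftrightarrow B$ symmetry of these conditions, the remaining steps reduce to linearity arguments that are considerably simpler than the hopping manipulations used in Lemma \ref{th:NSOisfree}. In particular, no analogue of Eqs. \eqref{eq:LOAEpreserveNS}--\eqref{eq:Psignalpreserv} is required here, since the elementary swap and identity act as pure wirings whose effect on $W$ can be read off directly from the explicit index structure of $\kett{\mathbb 1_{AB}}$ and $\kett{\mathbb s_{AB}}$.
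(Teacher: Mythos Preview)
Your proposal is correct and shares the paper's overall skeleton: decompose into the identity and swap wirings, treat each separately, then lift to the mixture via the separable structure of Eq.~\eqref{eq:Vdecomposed} and convexity (for $\mathsf{CS}$, $\mathsf{S}$, hence $\mathsf{NQC}$). Where you genuinely diverge is in handling the swap $\kettbbra{\mathbb s_{AB}}{\mathbb s_{AB}}$. The paper writes down modified signaling constraints for the swap process [its Eqs.~\eqref{eq:condSwap_AB}, analogous to \eqref{eq:condNSO_AB} with primed indices $A'\leftrightarrow B'$ exchanged] and then replays the hopping manipulations of Lemma~\ref{th:NSOisfree} with those constraints substituted in; preservation of $\mathsf P$ is likewise deferred to ``analogously as for $\mathsf{NSO}$, switching $A_I'\leftrightarrow B_I'$, $A_O'\leftrightarrow B_O'$.'' Your route is more direct: you observe that the link product with $\kett{\mathbb s_{AB}}$ is literally the index relabeling $(A_I,A_O,B_I,B_O)\mapsto(B'_I,B'_O,A'_I,A'_O)$ on $W$, and then invoke the manifest $A\leftrightarrow B$ symmetry of the validity conditions \eqref{eq:condPM} (with \eqref{eq:condPMAOsignal} and \eqref{eq:condPMBOsignal} exchanged) and of the causal-order condition \eqref{eq:AtoB}. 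This is shorter and explains \emph{why} no hopping is needed for $\mathsf{PLS}$. The paper's approach, by contrast, has the virtue of slotting the swap into the same algebraic template as $\mathsf{NSO}$, which would be convenient if one wanted to treat concatenations of swaps with generic nonsignaling operations uniformly.
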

\begin{proof}
We begin by noticing that from Eqs.(\ref{eq:def_id},\ref{eq:condVdim},\ref{eq:condVinputsoutputs},\ref{eq:condNSO_AB}) that $\kettbbra{\mathbb 1_{AB}}{\mathbb 1_{AB}}\in\mathsf{NSO}$, and hence preserves $\mathsf P$, $\mathsf{CS}$. 
On the other hand $V_{AB}^{({\rm sw})}:=\kettbbra{\mathbb s_{AB}}{\mathbb s_{AB}}$ from Eq. \eqref{eq:def_swap} obeys the following signaling conditions [compare Eqs.\eqref{eq:condNSO_AB}]
\begin{subequations}\label{eq:condSwap_AB}
\begin{alignat}{4}
		{}_{A_O} V_{AB}^{({\rm sw})}=&&	   {}_{B_O'A_O} V_{AB}^{({\rm sw})}	\label{eq:condSwap_outputA}&\\
		{}_{B_O} V_{AB}^{({\rm sw})}=&&	   {}_{A_O'B_O} V_{AB}^{({\rm sw})}	\label{eq:condSwap_outputB}&\\
{}_{B_I'A_O} V_{AB}^{({\rm sw})}=&&	{}_{A_IB_I'A_O} V_{AB}^{({\rm sw})}	\label{eq:condSwap_inputA}&\\
{}_{A_I'B_O} V_{AB}^{({\rm sw})}=&&	{}_{B_IA_I'B_O} V_{AB}^{({\rm sw})}	\label{eq:condSwap_inputB}&\ .		
\end{alignat}\end{subequations} 
\begin{subequations}
As expected, $V_{AB}^{({\rm sw})}$ inverts the ordering $A\to B$ and $B\to A$, i.e., from Eqs. \eqref{eq:condSwap_AB} it follows that
\label{eq:swapABBA}
\begin{align}\begin{split}
{}_{F}W = {}_{B_OF}W & \Rightarrow\\ \Rightarrow {}_{F}(&V_{AB}^{({\rm sw})}*W) = {}_{A'_OF}(V_{AB}^{({\rm sw})}*W) \ ,\end{split}\label{eq:swapAB}\\
\begin{split}
{}_{F}W = {}_{A_OF}W & \Rightarrow\\ \Rightarrow {}_{F}(&V_{AB}^{({\rm sw})}*W) = {}_{B'_OF}(V_{AB}^{({\rm sw})}*W) \ , 	\end{split}\label{eq:swapBA}
\end{align}\end{subequations}
which can be shown following the steps in Eq. \eqref{eq:LOAEpreserveNS} with Eq. \eqref{eq:condSwap_AB} instead of Eq. \eqref{eq:condNSO_AB}. Most importantly, although the causal order is inverted, the existence of a well-defined causal order is preserved when $V_{AB}^{({\rm sw})}$ alone is applied, and so is $\mathsf{CS}$.
The preservation of $\mathsf{P}$ [Eqs.\eqref{eq:condPM}] by $V_{AB}^{({\rm sw})}$ is demonstrated analogously as shown above for $\mathsf{NSO}$, switching $A_I'\leftrightarrow B_I'$, $A_O'\leftrightarrow B_O'$.

Given that both $\kettbbra{\mathbb 1_{AB}}{\mathbb 1_{AB}}$ and $\kettbbra{\mathbb s_{AB}}{\mathbb s_{AB}}$ preserve $\mathsf{CS}$ and $\mathsf{P}$, so does a probabilistic lab swap, since convex mixtures preserve Eqs.(\ref{eq:caussep},\ref{eq:condPM}). The separable form of Eq.\eqref{eq:Vdecomposed} guarantees preservation of $\mathsf S$, hence $\mathsf{PLS}$ preserves $\mathsf P$, $\mathsf{CS}$, and $\mathsf{NQC}$.
\end{proof}

\section{Proof of Theorem \ref{th:conversion}}
\label{app:proof_conv}

We prove Theorem \ref{th:conversion} constructively, presenting a protocol that transforms  $\kett{\mathbb w}$ into $\kett{\mathbb w'}$ [both generalized quantum switches obeying Eqs.(\ref{eq:generalizedQS},\ref{eq:constraints})], which can be decomposed into four steps:
\begin{enumerate}
	\item \label{it:unitAB}Apply certain unitaries to the labs inputs and outputs.
	\item \label{it:unitC}Apply a unitary on the control qubit mapping $\{\ket{\Phi_0}_C,\ket{\Phi_1}_C\}$ into $\{\ket{\Phi'_0}_C,\ket{\Phi'_1}_C\}$.
	\item \label{it:nondem} Make a non-demolition measurement on the control qubit to skew $\boldsymbol p$ into $\boldsymbol p'_{\rm id}=\boldsymbol p'=\{p_0',p_1'\}$.
	\item \label{it:flipswap}This measurement may incorrectly turn $\boldsymbol p$ into $\boldsymbol p'_{\rm sw}:=\{p_1',p_0'\}$ instead.
However, this is heralded by the measurement outcome, conditioned on which a correction is applied: the control qubit is flipped and the labs are swapped.	
	\end{enumerate}

\begin{proof}
For step \ref{it:unitAB}, we apply
\begin{multline}
\kett{\mathbb u_{PA}'\mathbb u_{PA}^{\dagger}}_{A_IA_I'}\ \kett{\mathbb u_{PB}'\mathbb u_{PB}^{\dagger}}_{B_IB_I'}\otimes \\ \otimes \kett{\mathbb u_{AF}^{\dagger}\mathbb u_{AF}'}_{A_O'A_O}\ \kett{\mathbb u_{BF}^{\dagger}\mathbb u_{BF}'}_{B_O'B_O} \ ,
\label{eq:unitaries_dagger}
\end{multline}
transforming $\mathbb u_0$, $\mathbb u_1$ into $\mathbb u'_0$, $\mathbb u'_1$, where $\mathbb u_{PA}^{\prime\dagger}\mathbb u_{BA}^{\prime}\mathbb u_{BF}^{\prime\dagger}=\mathbb1=\mathbb u_{PB}^{\prime\dagger}\mathbb u_{AB}^{\prime}\mathbb u_{AF}^{\prime\dagger}$ has been used. These constraints reflect the fact that the freedom to pick four local unitaries is not sufficient to attain all six unitaries in Eq.\eqref{eq:generalizedQS}.
In fact, the reader can verify that, starting from a superposition of $\ket{\Phi_0}_C\kett{\mathbb 1_0}$ and $\ket{\Phi_1}_C\kett{\mathbb 1_1}$, Eq.\eqref{eq:constraints} is necessary and sufficient for the target process to be attainable by local unitaries.

Steps \ref{it:unitC}-\ref{it:flipswap} can be encapsulated in a single transformation.
The majorization relation $\boldsymbol p\preccurlyeq\boldsymbol p'$ implies \cite{Marshall2011,Winter2016}
\begin{equation}
\boldsymbol p = \sum_\pi \lambda_\pi \boldsymbol p'_{\pi}
\label{eq:convexbymajor}
\end{equation}
where $\boldsymbol p'_{\pi}$ is the $\pi$-th permutation of $\boldsymbol p'$, with $\pi={\rm id}$ representing the identity and $\pi={\rm sw}$ the swap, and $\boldsymbol\lambda\coloneqq\{\lambda_\pi\}_{\pi}$ is a probability distribution on $\pi$.
The transformation will be decomposed as in Eq.~\eqref{eq:Vdecomposed} (for $\pi$ playing the role of $j$) with $V_C^{(\pi)}\coloneqq\kett{\mathbb v_C^{(\pi)}}\bbra{\mathbb v_C^{(\pi)}}$, being
\begin{equation}
\kett{\mathbb v_C^{(\pi)}} = \sqrt{\lambda_{\pi}} \sum_i \sqrt{\frac{p'_{\pi(i)}}{p_i}} \ket{\Phi_i}_{C}\ket{\Phi'_{\pi(i)}}_{C'}  \ ,
\label{eq:VCjconversion}
\end{equation}
and
\begin{equation}
V_{AB}^{({\rm id})}\coloneqq\kettbbra{\mathbb 1_{AB}}{\mathbb 1_{AB}} \ , \ \ V_{AB}^{({\rm sw})}\coloneqq\kettbbra{\mathbb s_{AB}}{\mathbb s_{AB}} \ ,
\label{eq:VABjconversion}
\end{equation}
for $\kett{\mathbb 1_{AB}}$ and $\kett{\mathbb s_{AB}}$ given by Eq.~\eqref{eq:def_ILS}. In addition, the short-hand notation $\pi(i)\coloneqq i$, for $\pi={\rm id}$, and $\pi(i)\coloneqq i+1 \mod 2$, for $\pi={\rm sw}$, has been used. The unitary in step \ref{it:unitC} always exists because $\{\ket{\Phi_j}\}_{j=0,1}$ and $\{\ket{\Phi_j'}\}_{j=0,1}$ are orthonormal bases. The measurement in Step \ref{it:nondem} is a two-outcome POVM on $\mathsf H_C$ that skews the flatter $\boldsymbol p$ distribution into either $\boldsymbol p'$ or $\boldsymbol p'_{\rm sw}$ (less balanced) while preserving the quantum superposition [the general form of such POVM is shown in Eq.\eqref{eq:skewer}].
The conditional lab swap (Step \ref{it:flipswap}) belongs to $\mathsf{PLS}$ and occurs together with a simple bit-flip on the control qubit in the $\{\ket{\Phi_j^{\prime}}\}_{j=0,1}$ basis. 
 Normalization follows from the majorization relation \eqref{eq:convexbymajor} .
Applied to $\kett{\mathbb w}$, the outcome for each $\pi$ is pure:
\begin{align}
\pi={\rm id},& \ \ \sqrt{\lambda_{\rm id}} \label{eq:Kid}
\left(\sqrt{p'_0}\ket{\Phi'_0}\kett{\mathbb u'_0}+\sqrt{p'_1}\ket{\Phi'_1}\kett{\mathbb u'_1}\right) \\
\pi={\rm sw},& \ \ \sqrt{\lambda_{\rm sw}} \label{eq:Ksw}
\left(\sqrt{p'_1}\ket{\Phi'_1}\kett{\mathbb u'_1}+\sqrt{p'_0}\ket{\Phi'_0}\kett{\mathbb u'_0}\right) ,
\end{align}
where we have used $\kett{\mathbb w_{0,1}}$ from Eqs.(\ref{eq:def_id0},\ref{eq:def_id1}). Both results are proportional to $\kett{\mathbb w'}$. A sum over $\pi$, with $\lambda_{\rm id}+\lambda_{\rm sw}=1$, yields this process exactly.
Except for the lab swap which belongs to $\mathsf{PLS}$, all transformations belong to $\mathsf{LOAE}$.
\end{proof}

\section{Probabilistic conversion and distillation}
\label{app:distill_proof}

The distillation protocol hinges on Lemma \ref{th:prob_conversion}, proven here by construction. 

\begin{proof}
The first step is to convert $\kett{\mathbb w}$ into a pure process $\kett{\mathbb w_{\rm aux}}$ with $\boldsymbol p_{\rm aux}\preccurlyeq\boldsymbol p'$ (for best success probability, we also have $\boldsymbol p_{\rm aux}\succcurlyeq\boldsymbol p'$ --- whether $\boldsymbol p_{\rm aux}=\boldsymbol p'$ or $\boldsymbol p_{\rm aux}=\boldsymbol p'_{\rm sw}$ depends on the specific processes $\kett{\mathbb w}$ and $\kett{\mathbb w'}$). This is achieved with a simple local-filtering measurement on Charlie's qubit, which can be written formally as the POVM
\{$M_C^{(j)}(x,y)\}_{j=0,1}=\{\kett{\mathbb m_C^{(j)}}\bbra{\mathbb m_C^{(j)}}\}_{j=0,1}$, where
\begin{equation}\begin{split}
\kett{\mathbb m_C^{(0)}}= \ \ \ \ \ \ \ \sqrt{x}\ket{\Phi_0}_C\ket{\Phi_0}_{C'}+&\ \ \ \ \ \ \ \ \sqrt {y}\ket{\Phi_1}_C\ket{\Phi_1}_{C'} \\
\kett{\mathbb m_C^{(1)}}= \sqrt{1-x}\ket{\Phi_0}_C\ket{\Phi_0}_{C'}+& \sqrt{1-y} \ket{\Phi_1}_C\ket{\Phi_1}_{C'} \ ,
\label{eq:skewer}
\end{split}\end{equation}
 with 
\begin{align}
x= &\textstyle\min\left\{\frac{p_1}{p_0}\frac{\max\{p_0',p_1'\}}{\min\{p_0',p_1'\}},1\right\} \label{eq:probconversion1} \ , \\
y= &\textstyle\min\left\{\frac{p_0}{p_1}\frac{\max\{p_0',p_1'\}}{\min\{p_0',p_1'\}},1\right\} \label{eq:probconversion2} \ .
\end{align}  
Outcome $j=0$, which corresponds to a rank-2 POVM element and occurs with probability $p_{\rm success}=\frac{\min\{p_0,p_1\}}{\min\{p_0',p_1'\}}$, leads to $\kett{\mathbb w_{\rm aux}}$ as desired. Outcome $j=1$, which necessarily corresponds to a rank-1 POVM element, indicates a failed result of the local filtering.
The success probability of this step (and of the overall procedure) is $p_{\rm success}=\frac{\min\{p_0,p_1\}}{\min\{p_0',p_1'\}}$.
If successful, on the process $\kett{\mathbb w_{\rm aux}}$ we apply the deterministic protocol from Theorem \ref{th:conversion} to produce $\kett{\mathbb w'}$. 
\end{proof}

To prove Corollary \ref{th:distillation}, we need to apply this probabilistic protocol, with $\kett{\mathbb w_{\rm qs}}$ as target, onto $N$ copies of $\kett{\mathbb w}$ in parallel. In the asymptotic limit of $N\to\infty$ \cite{Ash1965,Cover1991}, the result, with probability tending to one, is to have $N\,p_{\rm success}$ copies of $\kett{\mathbb w_{\rm qs}}$. Since the success/failure is heralded, one can pick the successful copies, distilling causal nonseparability with rate $r=p_{\rm success}=2\min\{p_0,p_1\}$, where we have used that $\boldsymbol p_{\rm qs}\coloneqq\{\frac12,\frac12\}$.

\subsection{Distillation with multicopy instruments}
\label{sec:multicopy_dist}

We now present an alternative distillation protocol making use of multicopy operations, inspired on the coherence-distillation protocol developed in \cite{Yuan2015,Winter2016}. Because of the limitations to act jointly on different copies of Alice's and Bob's labs, joint operations are used only on Charlie's control qudit. This is capable of obtaining distillation, albeit at a rate $r=\min\{p_0,p_1\}$, lower than that in Corollary \ref{th:distillation}.
This protocol distills $\kett{\mathbb w}$ [Eq.\eqref{eq:generalizedQS}] into $\kett{\mathbb w_{\rm qs}}$, and is composed of four steps.
\begin{enumerate}
	\item \label{it:unitariesABC}Application of local unitaries to the inputs and outputs of the labs, and to Charlie's control qubit, each acting on a single copy of the process. 
	\item \label{it:typeclassmeasmt} Measurement of the total number of qubit flips on the control qubits. This is a joint measurement on all control qubits, and generates (with high probability for large $N$) equally balanced superpositions of different causal orders.
	\item \label{it:subprojector}Subnormalized projector measurements on the control qubits. Also a joint measurement on all control qubits, its random outcome determines which copies will be turned into $\kett{\mathbb w_{\rm qs}}$ or not.
	\item \label{it:disentangle} Operations to disentangle certain copies from the others. 
	\end{enumerate}

\begin{proof}
The unitaries in Step \ref{it:unitariesABC} are the same as in the proof of Theorem \ref{th:conversion}, with $\kett{\mathbb u'_0}=\kett{\mathbb 1_0}$, $\kett{\mathbb u'_1}=\kett{\mathbb 1_1}$ [Eq.\eqref{eq:unitaries_dagger}] and $\ket{\Phi'_j}_{C'}=\ket j_{C'}$.
Step \ref{it:typeclassmeasmt} is a projective measurement on collective subspaces with a well-defined total number of $\ket0$'s and $\ket1$'s ($N\!-\!j$ and $j$, respectively). They are referred to as type-class measurements in \cite{Winter2016} and correspond to measuring the Hamming norm on a string in the computational basis. The measurement entangles the different copies, and this is the step that creates a balanced superposition of terms. For outcome $j$, the resulting process is
\begin{equation}
\sum_{\pi\in R_j} \frac{\pi\left[\kett{0_C\mathbb1_0}^{\otimes N\!-\!j}\!\otimes\!\kett{1_C\mathbb1_1}^{\otimes j}\right] }{ \sqrt{\binom{N}{j}}} \ ,
\label{eq:posttypeclass} 
\end{equation}
where $R_j$ is the set of permutations of the factors in parentheses, with $\|R_j\|=\binom{N}{j}$. For $N\to\infty$, the typical result is $j=Np_1$ \cite{Ash1965}.

Step \ref{it:subprojector} is meant to reduce the number of terms in the superposition to $2^k$  in order to obtain an exact number of copies of $\kett{\mathbb w_{\rm qs}}^{\otimes k}$. This is once again done with measurements on the control qubits. To avoid measurement outcomes that lead to a failure, the second POVM $\{E_{j,\ell}^N\}_{\ell\in\{1,L\}}$ is
\begin{equation}
E^N_{j,\ell}=\frac1{\sqrt n}\sum_{\pi\in R_{j,\ell}} \pi \left[\left(\ket0\bra0_C\right)^{\otimes N\!-\!j}\!\otimes\!\left(\ket1\bra1_C\right)^{\otimes j}\right] \ ,
\label{eq:subproj}
\end{equation}
where $R_{j,\ell}\subseteq R_j$ are sets (composed of $\|R_{j,\ell}\|=2^k$ elements) with typically non-empty intersections such that every element of $R_j$ belongs to exactly $n$ of such sets. 
 This can be done with $n={\rm LCM}[2^k,\binom Nj]/\binom Nj$, and with $L={\rm LCM}[2^k,\binom Nj]/2^k$ sets $R_{j,\ell}$, where ${\rm LCM}$ denotes the least common multiple. 
 The resulting state after outcome $\ell$ is
\begin{equation}
 \frac1{\sqrt{2^k}}\sum_{\pi\in R_{j,\ell}}\pi\left[\kett{0_C\mathbb1_0}^{\otimes N\!-\!j}\!\otimes\!\kett{1_C\mathbb1_1}^{\otimes j}\right]  \ .
\label{eq:postsubproj}
\end{equation}
The value $k$ will define our final rate, since $2^k=2^{rN}$. Clearly $2^k\leq\binom Nj$, since $2^k$ superposed terms are left from projecting a superposition of $\binom Nj$. Moreover, a decomposition of $\kett{\mathbb w_{\rm qs}}^{\otimes k}$ contains  a term $\propto\ket0_C\kett{\mathbb1_0}^{\otimes k}$ as well as one $\propto\ket1_C\kett{\mathbb1_1}^{\otimes k}$. By comparing with the state \eqref{eq:postsubproj}, 
 we see that $k\leq\min\{j,N\!-\!j\}$. Since $2^{\min\{j,N\!-\!j\}}\leq\binom Nj$, in fact $k={\min\{j,N\!-\!j\}}$.

At this point we have $N$ entangled copies of the whole process, in a superposition of $2^k$ terms. In Step \ref{it:disentangle}, we disentangle $N-k$ copies of the process from the remaining $k$. Both the control qudits and the labs' inputs/outputs must be disentangled in this step. 
To disentangle the $i$-th control qubit from the remaining control qubits, one makes a measurement on the $\ket\pm_{C,i}$ basis. The $\ket-_{C,i}$ outcome heraldedly introduces an unwanted sign, which can be corrected through controlled phase gates on the remaining control qubits. To disentangle the $i$-th lab from the rest, we apply on that lab
\begin{align}\begin{split}
V_{\rm bypass}^{AB}& = \kett{\mathbb1}\bbra{\mathbb1}_{A_IA_O} \otimes \ket{\phi}\bra\phi_{A_I'} \otimes  \mathbb1_{A_O'} \otimes \label{eq:bypass}\\
 &\otimes \kett{\mathbb1}\bbra{\mathbb1}_{B_IB_O} \otimes \ket{\phi}\bra\phi_{B_I'} \otimes  \mathbb1_{B_O'} \ ,
\end{split}\end{align}
an operation which bypasses the actual labs, short-circuiting the $P$ signal to $F$, giving dummy inputs $\ket\phi$ to the labs and discarding the labs' outputs. The definition of which copies to disentangle (and discard) depends on outcomes $j,\ell$, i.e., depends on feed-forwarding.

The distillation rate is $r=k/N=\min\{j,N-j\}/N$, which for $N\to\infty$ tends to the typical result \cite{Ash1965} $r=\min\{p_0,p_1\}$.
\end{proof}

As an illustration, in the case of $N=4,j=2$ the projector in Step \ref{it:typeclassmeasmt} acting jointly on many control qudits is
\begin{align}
\Pi_2^4=&\ket{0011}\bra{0011}_{C}+\ket{0101}\bra{0101}_{C}+\ket{1100}\bra{1100}_{C}+ \nonumber \\
        &\ket{0110}\bra{0110}_{C}+\ket{1001}\bra{1001}_{C}+\ket{1010}\bra{1010}_{C} \ .
\label{eq:Pi24}
\end{align}
The state after this projection, Eq. \eqref{eq:posttypeclass}, accordingly reads (in a compact notation)
\begin{align}\begin{split}
\frac1{\sqrt6}\big[
&\ \ \ \ \  \kett{0_C\mathbb1_0} \kett{0_C\mathbb1_0} \kett{1_C\mathbb1_1} \kett{1_C\mathbb1_1}\\
&+\kett{0_C\mathbb1_0} \kett{1_C\mathbb1_1} \kett{0_C\mathbb1_0} \kett{1_C\mathbb1_1}\\
&+\kett{1_C\mathbb1_1} \kett{1_C\mathbb1_1} \kett{0_C\mathbb1_0} \kett{0_C\mathbb1_0}\\
&+\kett{0_C\mathbb1_0} \kett{1_C\mathbb1_1} \kett{1_C\mathbb1_1} \kett{0_C\mathbb1_0}\\
&+\kett{1_C\mathbb1_1} \kett{0_C\mathbb1_0} \kett{0_C\mathbb1_0} \kett{1_C\mathbb1_1}\\
&+\kett{1_C\mathbb1_1} \kett{0_C\mathbb1_0} \kett{1_C\mathbb1_1} \kett{0_C\mathbb1_0}\big] \ .
\end{split}\label{eq:postmeas24}
\end{align}
In Step \ref{it:subprojector}, $k=2$ copies can be obtained. The POVM Eq. \eqref{eq:subproj}, will be composed of $L=3$ elements, and each projector appears in $n=2$ different elements:
\begin{widetext}
\begin{subequations}\begin{align}
E^4_{2,1}=& \left( \ket{0011}\bra{0011}_C +\ket{0101}\bra{0101}_C +\ket{1100}\bra{1100}_C +\ket{0110}\bra{0110}_C\right)/\sqrt2 , 
\label{eq:POVM24a}\\
E^4_{2,2}=& \left( \ket{1001}\bra{1001}_C +\ket{1010}\bra{1010}_C +\ket{1100}\bra{1100}_C +\ket{0110}\bra{0110}_C\right)/\sqrt2 , 
\label{eq:POVM24b}\\ 
E^4_{2,3}=& \left( \ket{1001}\bra{1001}_C +\ket{1010}\bra{1010}_C +\ket{0011}\bra{0011}_C +\ket{0101}\bra{0101}_C\right)/\sqrt2 . 
\label{eq:POVM24c} 
\end{align}\label{eq:POVM24}\end{subequations}
\end{widetext}

If e.g.\ outcome $(j,\ell)=(2,2)$ were obtained, the state would become a balanced superposition of the last four terms of Eq. \eqref{eq:postmeas24}. We would then keep the second and third copies, since these appear in all 2-bit combinations $(00,01,10,11)$, and discard the remaining two. The disentangling operations would be applied to the copies $i=1$, $i=4$, yielding the process
\begin{equation}
 W_1\otimes \kettbbra{\mathbb w_{\rm qs}}{\mathbb w_{\rm qs}}^{\otimes2} \otimes W_4 \ ,
\label{eq:24distilled}
\end{equation} 
where $W_i$ are obtained from Eq. \eqref{eq:bypass}. 
 Other outcomes of the POVM \eqref{eq:POVM24} require discarding different systems, illustrating the need for a feed-forward [\eqref{eq:POVM24a} leads to discarding the first two, \eqref{eq:POVM24c}, the second and fourth].

}\end{document}